\newtheorem{theorem}{Theorem}
\newtheorem{lemma}{Lemma}
\newtheorem{remark}{Remark}
\newtheorem{proposition}{Proposition}
\newtheorem{assumption}{Assumption}
\newcommand{\cmark}{\ding{51}}%
\newcommand{\xmark}{\ding{55}}%
\setlist{nosep,topsep=0pt}
    \definecolor{myred}{HTML}{ea4335}
    \definecolor{mygreen}{HTML}{41a756}
    \definecolor{myblue}{HTML}{4285f4}
\renewcommand{\paragraph}[1]{\smallskip\noindent\textbf{#1.}}
\DeclareMathOperator*{\argmax}{\arg\max}
\newcommand{\Line}[4]{%
    #1&%
    \;\ifthenelse{\isempty{#2}}{\phantom{=}}{#2}\;%
    #3%
    \ifthenelse{\isempty{#4}}{}{&&\qquad\left(#4\right)}%
}
\newcommand{\ECR}{\textsf{ECR}\xspace}
\newcommand{\OPT}{\textsf{OPT}\xspace}
\newcommand{\ALG}{\textsf{ALG}\xspace}
\newcommand{\OC}{\textsf{OC}\xspace}
\newcommand{\PRM}{\textsf{PseudoMax}\xspace}
\newcommand{\CRs}{\textsf{CR}s\xspace}
\newcommand{\CR}{\textsf{CR}\xspace}
\newcommand{\OCK}{\textsc{OC-Known}\xspace}
\newcommand{\OCN}{\textsc{OC-Notice}\xspace}
\newcommand{\OCU}{\textsc{OC-Unknown}\xspace}
\newcommand{\OCP}{\textsc{OC-Prediction}\xspace}
\theoremstyle{definition}
\newtheorem{definition}{Definition}
\newcommand{\citet}[1]{\textcite{#1}}
\title{Knowing When to Stop Matters: A Unified Framework for\\ Online Conversion under Horizon Uncertainty}
\author{
    Yanzhao Wang\thanks{University of Alberta. Email: \texttt{yanzhao5@ualberta.ca}}\\
    \and
    Hasti Nourmohammadi Sigaroudi\thanks{University of Alberta. Email: \texttt{hnourmoh@ualberta.ca}}\\
    \and
    Bo Sun\thanks{University of Waterloo. Email:
    \texttt{bo.sun@uwaterloo.ca}}\\
    \and 
    Omid Ardakanian\thanks{University of Alberta. 
    Email: \texttt{ardakanian@ualberta.ca}}\\
    \and
    Xiaoqi Tan\thanks{University of Alberta. 
    Email: \texttt{xiaoqi.tan@ualberta.ca}}
}
\date{\vspace{-25pt}}
\begin{document}

\maketitle

\begin{abstract}
    This paper investigates the online conversion problem, which involves sequentially trading a divisible resource (e.g., energy) under dynamically changing prices to maximize profit. A key challenge in online conversion is managing decisions under horizon uncertainty, where the duration of trading is either known, revealed partway, or entirely unknown. We propose a unified algorithm that achieves optimal competitive guarantees across these horizon models, accounting for practical constraints such as box constraints, which limit the maximum allowable trade per step. Additionally, we extend the algorithm to a learning-augmented version, leveraging horizon predictions to adaptively balance performance: achieving near-optimal results when predictions are accurate while maintaining strong guarantees when predictions are unreliable. These results advance the understanding of online conversion under various degrees of horizon uncertainty and provide more practical strategies to address real world constraints.
\end{abstract}

\section{Introduction}

The online conversion (\OC) problem models a sequential decision-making process in which a player aims to sell a fixed amount of a resource at time-varying prices~\cite{time_series_search_2001,k_search_2009,Sun_MultipleKnapsack_2020}. At each step, the player observes the current selling price and must irrevocably decide how much of the resource to trade, without knowledge of future prices. The goal is to maximize the total profit by the end of the time horizon.
This \OC problem captures the core decision-making process of numerous real-world applications, ranging from financial trading \cite{time_series_search_2001,k_search_2009} to online selection \cite{jiang2021online} and energy management \cite{Sun_MultipleKnapsack_2020,lee2024online,lin2019competitive}.

The \OC problem primarily involves two types of uncertainties: time-varying prices and the time horizon.
Recent work have focused on studying this problem under the assumption that uncertain prices are bounded within a finite support, while the time horizon is completely unknown. Although this classical setting has been extensively studied, there remains a gap between this theoretical uncertainty model and practical scenarios.
In many cases, the player (partially) knows information about the time horizon. For example, the number of steps in financial trading or energy trading is often fixed and known beforehand.

In the literature, for a given time horizon, the best-known result is a threat-based algorithm~\cite{time_series_search_2001}, which has been shown to be optimal under competitive analysis.
However, this algorithm is disconnected from the more recent and popular threshold-based algorithms~\cite{Sun_MultipleKnapsack_2020,lin2019competitive} for \OC problems. Furthermore, it remains unclear how to address horizon uncertainty when the player receives only last-minute notice of the horizon or is provided with a prediction with some degree of accuracy. Motivated by this research gap, the key question we aim to address in this paper is:
\begin{center}
How can we design online algorithms for the \OC problem to account for\\ varying degrees of horizon uncertainty? 
\end{center}

To address the above question, in this paper, we consider four types of horizon uncertainty models: (i) \OCK, a known horizon upfront; (ii) \OCN, a notified horizon, where the player does not know the horizon initially but is informed of it midway through the decision-making process; (iii) \OCU, a completely unknown horizon; and (iv) \OCP, an (imperfectly) predicted horizon, potentially obtained using AI or machine learning tools.
We first propose a unified algorithm for the \OC problem under the first three uncertainty models and prove that this algorithm achieves tight guarantees under competitive analysis.
Furthermore, to leverage horizon predictions, we adopt the emerging learning-augmented algorithm framework~\cite{mitzenmacher2022algorithms,lykouris2021competitive} and design an algorithm that robustly incorporates imperfect predictions. In particular, our main contributions are summarized as follows.

\begin{itemize}
    \item  We develop an online algorithm that unifies existing optimal results for three horizon settings, including the known, notified, and unknown horizons. We show that the celebrated threat-based algorithm from \cite{time_series_search_2001} can be unified within our framework, offering a new interpretation of its mechanism through the lens of pseudo-revenue maximization.
    
    \item  Compared to the classic \OC problem, our algorithm can additionally account for box constraints, i.e., the rate limit for trading in each step. We establish a tight upper bound for the case with a known horizon, providing, to the best of our knowledge, the first algorithm with tight guarantees for \OC under fixed horizons and non-trivial rate constraints.
    
    \item  We extend our unified algorithm by incorporating predictions about the time horizon. This learning-augmented algorithm achieves the best-known performance guarantees in terms of \textit{robustness} (maintaining strong worst-case performance when predictions are highly inaccurate) and \textit{consistency} (achieving optimal performance when predictions are accurate). We further show that even when the prediction underestimates or exactly matches the actual horizon, the algorithm outperforms those without access to horizon information, demonstrating its practical utility in scenarios with partial predictions.
\end{itemize}

Our contributions, particularly the handling of horizon uncertainty and box constraints, represent a notable advancement in \OC algorithm design and analysis. With a known horizon and non-trivial box constraints, precise scheduling of resource allocation at each time step becomes highly correlated. To address this challenge, we introduce a two-phase algorithmic design that partitions trading decisions into distinct phases. This structure reduces the problem to a simpler form without loss of generality, enabling the derivation of tight results using an online primal-dual framework.

\begin{table*}
\footnotesize
    \centering
    \def\arraystretch{1.1}
    \begin{tabular}{p{1.4cm} c c c c c c c c}
        \toprule
           & \multicolumn{2}{c}{\bf \OCK} & \multicolumn{2}{c}{\bf \OCN} & \multicolumn{2}{c}{\bf \OCU} & \multicolumn{2}{c}{\bf \OCP} \\ 
         \cmidrule(lr){2-3}  \cmidrule(lr){4-5} \cmidrule(lr){6-7}  \cmidrule(lr){8-9}
          \bf Papers  & w/ box  & w/o box & w/ box &  w/o box & w/ box & w/o box  & w/ box & w/o box \\   
         \midrule
         \cite{time_series_search_2001} & \xmark &  \cmark & \xmark & \cmark & \xmark & \xmark & \xmark & \xmark \\\hline
         \cite{k_search_2009} & \xmark & \xmark & \cmark & -- & \xmark & \xmark & \xmark & \xmark\\\hline
         \cite{tan2023threshold}
         & \xmark & \xmark & \xmark & \xmark & \cmark & -- & \xmark & \xmark \\\hline
         \cite{lin2019competitive}  & \xmark & \xmark & \xmark & \xmark & \xmark & \cmark & \xmark & \xmark \\\hline
         \cite{lechowicz2024online} & \xmark & \xmark & \cmark & \cmark & \xmark & \xmark & \xmark & \xmark \\\hline
         \cite{Sun_MultipleKnapsack_2020} & \xmark & \xmark & \xmark & \xmark & \cmark & \cmark & \xmark & \xmark\\  \hline
         This paper & \cmark & \cmark & \cmark & \cmark & \cmark & \cmark & \cmark & \cmark \\ 
         & $\textbf{Tight}^\star$ & $ \text{Optimal}^\star$ & $ \text{Optimal}^\star$ & $ \text{Optimal}^\star$ & Optimal & Optimal & $ \textbf{Best-known}^\star$ & $ \textbf{Best-known}^\star$ \\
         \bottomrule
    \end{tabular}
    \bigskip
    \caption{Comparison of our work with the existing literature on online conversion (\OC) problems. $ \textbf{Bold}^\star$ indicates new results obtained by this paper, {\normalfont{Optimal}} denotes state-of-the-art optimal results that can be unified by our algorithm, and $ \text{\normalfont{Optimal}}^\star $ represents optimal results obtained both by existing studies and this paper, but through different approaches. \OCK, \OCN, and \OCU denote the \OC problem with known horizon, notified horizon partway through, and unknown hrizon, respectively. \OCP is the learning-augmented setting with machine learned horizon predictions.}
    \label{tab:literature}
\end{table*}

\subsection{Related Work}
 
\paragraph{Online conversion under horizon uncertainty} 
The \OC problem has been studied under different horizon settings. In the \OCK case, the total number of decision steps is fixed, allowing structured strategies to achieve optimal competitive ratios~\cite{time_series_search_2001,k_search_2009}. The \OCN setting, where the horizon is revealed partway through, enables partial adjustments to improve performance~\cite{lechowicz2024online,Sun_MultipleKnapsack_2020}. In the \OCU case, algorithms focus on robustness to ensure worst-case guarantees~\cite{tan2023threshold,lin2019competitive}. While these works address horizon uncertainty, the integration of box constraints, which limit resource allocation at each step, remains largely unexplored. In particular, the case of a \OCK with box constraints has not been studied, and a unified framework for all settings remain open challenges, which this work aims to resolve.

\paragraph{Learning-augmented algorithms}
The emerging algorithmic framework aims to enhance online decision-making by incorporating machine-learned predictions, bridging the gap between worst-case guarantees and average-case performance \cite{lykouris2021competitive,purohit2018improving}. This framework has been applied to problems like caching \cite{lykouris2021competitive}, ski-rental \cite{purohit2018improving}, and bin packing \cite{angelopoulos2023online}, demonstrating improved outcomes when predictions are accurate while maintaining robustness against prediction errors. In the context of \OC, prior work has developed learning-augmented algorithms that leverage predictions to improve online decision-making. Recent studies~\cite{angelopoulos2022online,angelopoulos2024overcoming,sun2021pareto} have specifically focused on incorporating price predictions to enhance adaptability in \OC problems. While these contributions highlight the potential of prediction-based approaches, they do not address the use of horizon predictions. This paper fills this gap by introducing learning-augmented algorithms that leverage horizon predictions, enabling more effective planning of resource allocation while ensuring reliable performance even when predictions are imperfect.

Table~\ref{tab:literature} summarizes existing work on \OC problems and highlights our contributions. For the known horizon case (\OCK), previous work~\cite{time_series_search_2001} does not account for box constraints, whereas our work addresses this limitation. Moreover, our approach differs from the threat-based algorithm in~\cite{time_series_search_2001} and the threshold-based algorithms in~\cite{Sun_MultipleKnapsack_2020,k_search_2009}, offering a novel perspective on \OCK. For the notified horizon setting (\OCN), related work~\cite{k_search_2009,lechowicz2024online} primarily focuses on binary trading decisions, while our algorithm handles continuous resource allocation under rate constraints. In the unknown horizon setting (\OCU), work by~\cite{Sun_MultipleKnapsack_2020,lin2019competitive,tan2023threshold} addresses both constrained and unconstrained cases, which we successfully unify within our algorithmic framework. For the prediction setting (\OCP), no prior work has investigated horizon predictions for \OC problems. Our work addresses this unexplored area by integrating horizon predictions into the algorithm, achieving the best-known results in terms of performance under uncertain conditions.

\section{Problem Statement}
In this section, we formulate the \OC problem with box constraints. Based on the available information about the horizon, we further define four versions of \OC problems.

\subsection{Online Conversion Problems}

In an \OC problem, a player aims to allocate $k$ units of a divisible resource over a series of time steps, with prices revealed sequentially. At each time step $t \in [T]$, the player observes the price $p_t$ and makes an irreversible decision on how much resource $x_t \in [0, b]$ to allocate, where $b$ is the rate limit, i.e., the maximum units of resource that can be allocated in each step. The goal is to maximize the total profit over the $T$ steps, while satisfying the total resource constraint.

In the offline setting, where the price information $\sigma = \{p_1, \ldots, p_T\}$ is known, the optimal profit is denoted by $ {\OPT}(\sigma) $, and can be solved by the following linear program:
\begin{subequations}\label{eq:OC_max}
\begin{align} 
    \max_{ \{x_t\}_{t} } \quad &\sum\nolimits_{t=1}^T p_t x_t\\
    {\rm s.t.} \quad & \sum\nolimits_{t=1}^T x_t \leq k,                  \label{eq:OC_max_budget_constraint}  \\ 
     & 0 \leq  x_t \le b,  \quad  \forall t \in [T].   \label{eq:OC_max_box_constraint}    
\end{align}
\end{subequations}
We aim to design online algorithms that minimize the worst-case competitive ratio (\CR):
\begin{align}\label{cr_max}
    \CR = \max_{\sigma \in \Omega} \frac{{\OPT}(\sigma)}{{\ALG}(\sigma)},
\end{align}
where $ {\ALG}(\sigma) $ denotes the total profit achieved by the online algorithm for given $ \sigma $, and $ \Omega $ represents the family of all instances under a particular uncertainty model.

The \OC problem is a classic online problem for sequential decision-making in volatile markets~\cite{time_series_search_2001,k_search_2009,yang2020online,sun2021pareto}. In this paper, we focus on two aspects of \OC problems: (i) \textit{horizon uncertainty}, namely, the uncertainty of $T$, and (ii) \textit{box constraints} (i.e., the rate constraints \eqref{eq:OC_max_box_constraint} at each step). In the following, we say that the box constraint \eqref{eq:OC_max_box_constraint} is \textit{\bfseries non-trivial} if $b \in (k/T, k)$, and \textit{\bfseries trivial} if $b \in (0, k/T]$ or $b \in [k, +\infty)$. For example, for $b \in (0, k/T]$, greedily allocating as much as possible at each step (i.e., setting $x_t = b$ for all $t \in [T]$) is the only optimal strategy. In the case where $b \in [k, +\infty)$, the box constraints are never binding and can therefore be removed.

\subsection{Horizon Uncertainty Models}
One of the key assumptions in our model is that the price at each time step, denoted as $p_t$, is bounded by a known interval:

\begin{assumption}
The price $p_t$ at each step $t$ is bounded within a known range:
\begin{align*}
p_t \in [p_{\min}, p_{\max}], \quad \forall t \in [T].
\end{align*}
\end{assumption}

Additionally, we define $\theta = \frac{p_{\max}}{p_{\min}}$ to represent price fluctuation. When $\theta = 1$, there is no price variation, simplifying the problem to $ \CR = 1 $. As $\theta$ increases, however, the problem becomes more complex due to larger price fluctuations.

We study the following four horizon uncertainty models:
\begin{itemize}
    \item \textbf{\OCK}: The player knows the total number of time steps $T$ in advance. This setting is represented by the family of instances: $ \Omega_{\textsf{known}}(p_{\min}, p_{\max}, T)$. 

    \item \textbf{\OCN}: The player does not know the value of $T$ a priori but receives a notification at a specific step if all future allocations must be performed at the maximum rate to use the entire resource by the end. For example, consider \OC with $k = 12$ and $b = 3$. The player initially does not know the value of $T$ and decides to allocate $x_t = 1$ for $t = 1, 2, 3$. At the beginning of step $t=4$, the player is notified that $T = 6$, requiring $x_4 = x_5 = x_6 = 3$ to ensure no resource is left unused. The family of instances for this setting is denoted by $ \Omega_{\textsf{notice}}(p_{\min}, p_{\max}) $. 

    \item \textbf{\OCU}: The player has no information about the total number of time steps $T$ throughout the allocation process (i.e., the process may end suddenly). The family of instances for this setting is denoted by $ \Omega_{\textsf{unknown}}(p_{\min}, p_{\max}) $.

    \item \textbf{\OCP}: The player has access to a predicted horizon $T_{\textsf{pred}}$, which may differ from the actual horizon $T$. The predicted horizon introduces additional uncertainty, as decisions are made based on an imperfect forecast of $T$. This setting is represented by the family of instances $ \Omega_{\textsf{prediction}}(p_{\min}, p_{\max}, T_{\textsf{pred}})$.  If the player ignores the prediction, \OCP reduces to \OCU. Conversely, if the player fully trusts an accurate prediction where $T_{\textsf{pred}} = T$, \OCP reduces to \OCK.
\end{itemize}

Each of these settings introduces varying degrees of uncertainty regarding the horizon $T$, with higher uncertainty increasing the difficulty of optimizing allocation decisions.

\section{A Unified Algorithm for \OCK, \OCN, and \OCU}
This section presents a unified algorithm for the \OC problem under \OCK, \OCN, and \OCU. We start by introducing pseudo-cost functions,  a core concept of our unified algorithm.

\subsection{Pseudo-Cost Functions}\label{sec:def_pseudo_cost_functions}
Consider \OC in the offline setting, where the complete price sequence $ \sigma = \{p_1, \ldots, p_T\} $ is known in advance. The dual of the maximization problem \eqref{eq:OC_max} can be expressed as:
\begin{subequations}\label{dual_OC_max}
\begin{align}
   \min_{ \varphi, \bm{\mu}}\quad & k\varphi+\sum_{t=1}^{T} b\mu_t \\
    s.t. \quad & \varphi + \mu_t \ge p_t, & & \forall t\in[T],\\ 
                & \varphi, \mu_t \ge 0,  & & \forall t\in[T],
\end{align}
\end{subequations}
where $\varphi$ represents the marginal value of the resource, and $\mu_t$ adjusts for the box constraint $x_t \leq b$. These dual variables balance the value of resource allocation while respecting stepwise limits. In the online setting, as prices are revealed sequentially, we estimate the marginal value dynamically, adapting to observed prices and constraints using \textit{pseudo-cost functions}, denoted by $ \boldsymbol{\phi} = \{\phi_t\}_{\forall t \in [T]} $, which approximate the marginal value of the resource.

\begin{definition}[{\bfseries Pseudo-Cost Functions}]\label{def_price}
Given $ \alpha \geq 1 $, the pseudo-cost function $ \boldsymbol{\phi} = \{\phi_t\}_{\forall t \in [T]} $ is defined as:
\begin{align}\label{eq_def_phi_t}
   \phi_t(\beta|  {F}_t,\alpha) = p_{\min} + \frac{(\alpha - 1)p_{\min}}{(1 - \alpha \beta/k) \prod_{i=1}^{t-1}(1 - \alpha x_i/k)}.
\end{align}
${F}_t = (\mathbf{x}_{t-1}, \sigma_t) $ denotes all historical information up to step $ t $, where $ \mathbf{x}_{t-1} = \{x_1, \ldots, x_{t-1}\} $ represents decisions up to $ t-1 $ and $ \sigma_t = \{p_1, \ldots, p_t\} $ denotes prices observed up to $ t $.
\end{definition}

At first glance, the pseudo-cost function defined by Eq. \eqref{eq_def_phi_t} may seem arbitrary. To clarify its purpose and trade-offs, we provide the following remarks:

\begin{itemize}
    \item The pseudo-cost $ \phi_t $ at time $ t $ is an \textit{$ \alpha $-parameterized} function, monotonically increasing in $ \alpha \in [1, +\infty)$. Here, $ \alpha $ serves as a \textit{balance parameter}, controlling the level of the pseudo-cost. A larger $ \alpha $ increases $ \phi_t $, encouraging the player to trade resources at higher perceived prices. Thus, $ \alpha $ must be carefully tuned to ensure the perceived pseudo-cost is neither excessively low nor overly high.

    \item Unlike traditional threshold-based algorithms (e.g., \cite{Tan_MD_2020,lechowicz2024online,Sun_MultipleKnapsack_2020}), the pseudo-cost function in Eq. \eqref{eq_def_phi_t} is \textit{adaptive to all historical trading decisions} (i.e., $ \mathbf{x}_{t-1} = \{x_1, \ldots, x_{t-1}\} $). This fine-grained adaptation enables our approach to effectively address \OC problems with known horizons, achieving the same optimal \CR as threat-based algorithms \cite{time_series_search_2001}. Furthermore, the pseudo-cost function establishes a direct connection between the resource's perceived value and market prices, linking the primal and dual terms in Eqs. \eqref{eq:OC_max} and \eqref{dual_OC_max}, the important properties of Eq. \eqref{eq_def_phi_t} are detailed in Appendix \ref{sec:property_of_phi}. This connection underpins the use of the online primal-dual (OPD) framework \cite{Buchbinder_OPD_book_2009} for designing and analyzing the unified algorithm presented in Algorithm \ref{alg_RBP}.
\end{itemize}

\begin{algorithm}
\caption{Unified Algorithm for \OC via Pseudo-Revenue Maximization ($\PRM_{\bm{\phi}}$)}
\label{alg_RBP}
\begin{algorithmic}[1]
\STATE \textbf{Inputs:} Pseudo-cost functions $\phi = \{\phi_t\}_{\forall t \in [T]}$, balance parameter $\alpha$, initial resource $k$, time horizon $T$ ($T = \infty$ for \OCN and \OCU), and box-constraint $b$.

\WHILE{price $p_t$ and notification value $n_t$ are revealed}
    \STATE Solve the following optimization problem:
    \begin{align}
       x_t^* = \argmax_{x_t \in [0, b]} p_t x_t - \int_{0}^{x_t} \phi_t(\beta| F_t, \alpha) d\beta. \label{pseudo_revenue}
    \end{align}
    \IF{$\sum_{i=1}^t n_i > 0$} 
        \STATE Allocate $\bar{x}_t = \min\{b, k_{t-1} - b(T - t)\}$ 
        \label{line_forced_1}
    \ELSIF{$k_{t-1} - x_t^* \leq b(T - t)$} 
    \label{line_query}
        \STATE Allocate $\bar{x}_t = \min\{x_t^*, k_{t-1}\}$ 
        \label{line_proactive}
    \ELSE
        \STATE Allocate $\bar{x}_t = \min\{b, k_{t-1} - b(T - t)\}$ 
        \label{line_forced_2}
    \ENDIF
    \STATE Update remaining resource: $k_t = k_{t-1} - \bar{x}_t$.
\ENDWHILE
\end{algorithmic}
\end{algorithm}

\subsection{How \PRM Works}
Algorithm \ref{alg_RBP} dynamically optimizes resource allocation based on the current price $p_t$ and pseudo-cost $ \phi_t $. It adapts to different horizon settings:
\begin{itemize}
\item \textbf{\OCK:} The total horizon $T$ is known. The algorithm checks (line \ref{line_query}) if allocating $x_t^*$ can fully utilize $k$ over the remaining steps. If feasible, it stays proactive (line \ref{line_proactive}); otherwise, it switches to forced allocations (line \ref{line_forced_2}), trading $ b $ until all remaining resources are allocated.
\item \textbf{\OCN:} The horizon is unknown, but a notification $n_t$ (indicating proximity to the end) may arrive. Upon notification (denoted by $n_t = 1$), the algorithm transitions to the forced phase (line \ref{line_forced_1}).
\item \textbf{\OCU:} The horizon is entirely unknown. The algorithm remains in the proactive phase, assuming infinite opportunities.
\end{itemize}

The balance parameter $\alpha$ plays a crucial role in determining the trade-off between immediate and future allocations. When $\alpha = 1$, the strategy is aggressive, allocating at the lowest possible price $p_{\min}$. Larger $\alpha$ values lead to more reserved strategies, holding resources for potentially higher future prices. Based on Algorithm \ref{alg_RBP}, for any given pseudo-cost function $ \phi $ that follows Definition \ref{def_price} with some balance parameter $ \alpha \geq  1 $, there exist the worst-case competitive ratio $ \CR(\alpha) $. Our goal is to design $ \alpha $ to minimize $ \CR(\alpha) $:
\begin{align}\label{cr^*_known}
\CR(\alpha) := 
\max_{\sigma \in \Omega} \frac{\OPT(\sigma)}{\PRM_{\boldsymbol{\phi}}(\sigma|\alpha)}.
\end{align}

For \OCK and \OCN, the competitive ratios are denoted by $ \CR_{\textsf{known}}^* $ and $ \CR_{\textsf{notice}}^* $, respectively.

\subsection{Competitive Ratios for \OC}
The main results for \OCK and \OCN are summarized in Theorem \ref{thm:oc_known_notice} below, which shows the optimal design of the balance parameter $ \alpha $ for the pseudo-cost function and the corresponding \CR in different horizon settings.

\begin{theorem}\label{thm:oc_known_notice}
If Algorithm \ref{alg_RBP} is executed with pseudo-cost function $ \phi $ as defined in Definition \ref{def_price}, then the following holds: 
\begin{itemize}
    \item For \OCK with non-trivial box constraints, $\PRM_{\boldsymbol{\phi}}$ is $ \CR^*_{\textsf{known}} $-competitive if we set $ \alpha = \CR^*_{\textsf{known}} $ as the root to the following equation:
    \begin{align}\label{eq:CR_known}
    \CR^*_{\textsf{known}} = \Big(T - \lceil \frac{k}{b} \rceil + 1\Big) \left[ 1 - \Big(\frac{\CR^*_{\textsf{known}} - 1}{\theta - 1} \Big)^{\frac{1}{T - \lceil \frac{k}{b} \rceil + 1}} \right].
    \end{align}
    \item For \OCN, $\PRM_{\boldsymbol{\phi}}$ is $ \CR^*_{\textsf{notice}} $-competitive if we set $ \alpha = \CR^*_{\textsf{notice}} $ as follows:
    \begin{align}\label{eq:CR_notice}
    \CR^*_{\textsf{notice}} = 1 + W\left(\frac{\theta - 1}{e}\right),
    \end{align}
    where $ W $ denotes the Lambert-$W$ function and $ \theta = p_{\max}/p_{\min} $ denotes the price fluctuation ratio.
\end{itemize}
\end{theorem}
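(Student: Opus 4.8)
The plan is to prove both statements with the online primal--dual (OPD) method applied to the LP \eqref{eq:OC_max} and its dual \eqref{dual_OC_max}. Fix an instance $\sigma$. I would take the allocation $\{\bar x_t\}$ produced by Algorithm~\ref{alg_RBP} as the primal solution and build a dual solution by letting $\varphi$ be the terminal marginal pseudo-cost, $\varphi := p_{\min}+\frac{(\alpha-1)p_{\min}}{z}$ with $z := \prod_t(1-\alpha\bar x_t/k)$ the final product state, and $\mu_t := (p_t-\varphi)^+$. Primal feasibility follows from the $\min$/$\max$ clamps in lines~\ref{line_forced_1}--\ref{line_forced_2}; dual feasibility ($\varphi+\mu_t\ge p_t$ and $\varphi,\mu_t\ge 0$) is immediate because $\varphi\ge\alpha p_{\min}>0$. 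The entire theorem then reduces to choosing $\alpha$ and proving the single inequality $k\varphi+\sum_t b\mu_t \le \alpha\,\ALG(\sigma)$, since weak duality gives $\OPT(\sigma)\le k\varphi+\sum_t b\mu_t$ and hence $\CR\le\alpha$.

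To prove that inequality I would maintain the dual online with a monotonically increasing running marginal pseudo-cost $\varphi_t$ (so $\varphi_T=\varphi$) and bound the per-step increments. The optimization \eqref{pseudo_revenue} and the explicit, increasing form of $\phi_t$ are engineered so that processing step $t$ raises the dual objective by at most $\alpha$ times the profit it generates. Two regimes appear: on a proactive step, $p_t=\phi_t(\bar x_t)$ controls both the realized profit and the rise of $\varphi_t$, so the profit dominates the pseudo-cost paid and the increment bound follows from the multiplicative update $z_t=z_{t-1}(1-\alpha\bar x_t/k)$ of the product state; on a step with $p_t>\varphi$ the box saturates ($\bar x_t=b$) and the added box-dual mass $b(p_t-\varphi)$ is paid for by the excess profit collected there. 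Summing from an empty dual gives $k\varphi+\sum_t b\mu_t\le\alpha\,\ALG$. The value of $\alpha$ is fixed by the \emph{binding condition} that the pseudo-cost rise from $\phi(0)=\alpha p_{\min}$ to exactly $p_{\max}$ when the resource is exhausted, i.e. the terminal product equals $z=\frac{\alpha-1}{\theta-1}$.

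The main obstacle is the box constraint in \OCK, which I expect to handle with the two-phase reduction announced in the introduction. The goal is to show that, without loss of generality, the worst-case schedule splits into a proactive phase and a forced phase that trades $b$ units per step solely to guarantee full utilization; because forced steps cannot discriminate on price, the trailing $\lceil k/b\rceil-1$ steps are effectively consumed by exhaustion and the analysis collapses to an \emph{unconstrained} problem with $N:=T-\lceil k/b\rceil+1$ free decisions. In that reduced problem the extremal instance spreads the resource evenly, $x_i=k/N$ (even spreading maximizes $z$ by concavity of $\log(1-\alpha x/k)$ and Jensen), so the terminal product is $z=(1-\alpha/N)^N$; equating this with the binding value $\frac{\alpha-1}{\theta-1}$ and rearranging yields exactly \eqref{eq:CR_known}. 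The delicate part is justifying that the reduction loses nothing: since $\prod_t(1-\alpha x_t/k)$ depends on the full allocation profile rather than on $\sum_t x_t$ alone, I must rule out any feasible box-constrained, forced-phase schedule whose primal--dual ratio exceeds the clean $N$-step value.

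For \OCN the analysis simplifies because Algorithm~\ref{alg_RBP} runs with $T=\infty$: there is no reserved forced block, so the effective horizon is $N=\infty$ and the terminal product degenerates to its continuous limit $z=e^{-\alpha}$ at full allocation; the box ceases to bind in the worst case because the proactive allocations may be spread over arbitrarily many steps before the notification arrives. The binding condition then reads $e^{-\alpha}=\frac{\alpha-1}{\theta-1}$, equivalently $(\alpha-1)e^{\alpha}=\theta-1$, whose solution is $\alpha=1+W\!\big(\frac{\theta-1}{e}\big)$ by the identity $W(y)e^{W(y)}=y$, giving \eqref{eq:CR_notice}. The one extra check is robustness over the notification time: as the notification may arrive at any step, I would verify the increment bound, and hence $k\varphi+\sum_t b\mu_t\le\alpha\,\ALG$, holds uniformly, so that forced dumping at rate $b$ after notification never violates the guarantee.
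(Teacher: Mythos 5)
Your overall architecture coincides with the paper's: an online primal--dual analysis in which the resource dual $\varphi$ is the terminal marginal pseudo-cost, the box duals are clipped price excesses, primal feasibility comes from the algorithm's clamps, the binding condition is $\prod_t (1-\alpha \bar{x}_t/k) \le \frac{\alpha-1}{\theta-1}$, the Jensen/AM--GM step converts this into the sufficient condition $\alpha \ge \ell[1-(\frac{\alpha-1}{\theta-1})^{1/\ell}]$, and the \OCN bound follows by the $T\to\infty$ limit and the identity $W(y)e^{W(y)}=y$. For \OCK \emph{without} binding box effects and for \OCN this matches Appendices \ref{sec:proof_of_thm_known_notice} and \ref{sec:proof_of_theorem_CR_box_constraint_notice} and would go through essentially as you sketch it.

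The genuine gap sits exactly where you flag it, and it is not a small verification: the collapse of box-constrained \OCK to ``an unconstrained problem with $N = T-\lceil k/b\rceil+1$ free decisions'' requires two missing pieces. First, reducing $\OPT$ needs the worst-case structure result that all prices after the switching step equal $p_{\min}$ (Proposition \ref{prop:CR_worst_p_min}) together with the equivalence of the full-horizon LP to a reduced LP of horizon $\tau$ whose last slot has box $k$ (Proposition \ref{prop:sigma_tau}). Second, and more seriously, the switching step is \emph{endogenous}: it depends on the instance and on $\alpha$, and the clean $\tau$-step bound $\alpha_\tau$ of Eq.~\eqref{eq:alpha_tau} is increasing in $\tau$. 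Since you tune $\alpha = \alpha_{\tau_{\min}}$, any instance that switches at some $\hat{\tau} > \tau_{\min}$ carries only the nominal guarantee $\alpha_{\hat{\tau}} > \alpha$ from your increment analysis, so that analysis alone does not yield $\CR(\alpha)\le\alpha$; one must separately prove that the realized ratio on late-switching instances still stays below $\alpha$. The paper's Lemma \ref{lemma_cr_tau} supplies this via an induction comparing $\sigma^{(\tau+1)}$ against $\sigma^{(\tau)}$, hinging on the inequality $\alpha \bar{x}_{\tau}(p_{\tau}-p_{\min}) \ge k(\phi_{\tau}-\phi_{\tau-1}) + b\mu_{\tau}$, verified by a three-way case split $\bar{x}_\tau = 0$, $\bar{x}_\tau\in(0,b)$, $\bar{x}_\tau = b$; your proposal contains no substitute mechanism, so the first bullet of the theorem is not established. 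A minor further caveat: the even split $x_i = k/N$ that you call the extremal instance can violate the box constraint (e.g., $k=12$, $b=3$, $T=6$ gives $k/N = 4 > b$); this is harmless in the upper-bound direction because, as in the paper, the even split enters only as a Jensen/AM--GM bound on the product, but it should not be presented as a feasible worst-case schedule.
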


The complete proof of Theorem \ref{thm:oc_known_notice} is given in Appendix \ref{sec:proof_of_thm_known_notice}, \ref{sec:proof_of_theorem_CR_box_constraint} and \ref{sec:proof_of_theorem_CR_box_constraint_notice}.  We give two remarks on Theorem \ref{thm:oc_known_notice}.

\begin{remark}\label{rmk:CR_unknown}
For \OCU, based on  \cite{Sun_MultipleKnapsack_2020}, we can show that $ \PRM_{\boldsymbol{\phi}} $ achieves the optimal \CR with a different pseudo-cost function. For ease of reference, the results from \cite{Sun_MultipleKnapsack_2020} are reviewed in Appendix \ref{appendix_OC_unknown} and below we give the optimal \CR, denoted by $ \CR^*_{\textsf{unknown}} $, for \OCU  without proof:
\begin{align}\label{eq:CR_unknown}
 \CR^*_{\textsf{unknown}} = 1 + \ln \theta.
\end{align}
Note that for both \OCN and \OCU, the box constraint $ b $ has no impact on the optimal CR, as indicated by Eqs. \eqref{eq:CR_notice} and \eqref{eq:CR_unknown}. 
\end{remark}

\begin{remark}
For \OCK without box constraints or with trivial box constraints (i.e.,  $  b \in [k, +\infty) $), Eq. \eqref{eq:CR_known} can be simplified as $ \CR^{*}_{\textsf{known}} = T [ 1 - (\frac{\CR^*_{\textsf{known}} - 1}{\theta - 1})^{1/T}]$. 
El-Yaniv et al. \cite{time_series_search_2001} has shown the same \CR, but through a different approach known as the threat-based algorithm. It is also worth noting that to prove the optimality of $ \CR^*_{\textsf{known}} $, we propose a new approach based on the Gronwall’s Inequality \cite{mitrinovic2012inequalities,jones1964fundamental}. More details are given in Appendix \ref{sec:lower_bound_proof}.
\end{remark}

\textbf{Theoretical implications of Theorem \ref{thm:oc_known_notice}}: Our results here have two important implications. First,  based on Theorem \ref{thm:oc_known_notice} and Remark \ref{rmk:CR_unknown}, $ \PRM_{\boldsymbol{\phi}} $ provides a unified algorithm for solving \OC under three different horizon settings, all through the lens of pseudo-revenue maximization. Second, to the best of our knowledge, $ \PRM_{\boldsymbol{\phi}} $ is the first to achieve a tight competitive ratio for \OCK under non-trivial box constraints.\footnote{Our competitive ratio $ \CR^*_{\textsf{known}} $ given in Eq. \eqref{eq:CR_known} is asymptotically optimal with respect to $ T $ and $ b $, meaning that, in the regime of large $ T $ (or large $ b $), $ \CR^*_{\textsf{known}} $ is optimal.} 

\begin{figure}[htbp]
    \centering
    \begin{subfigure}[b]{0.3\textwidth}
        \centering
        \includegraphics[width=\textwidth]{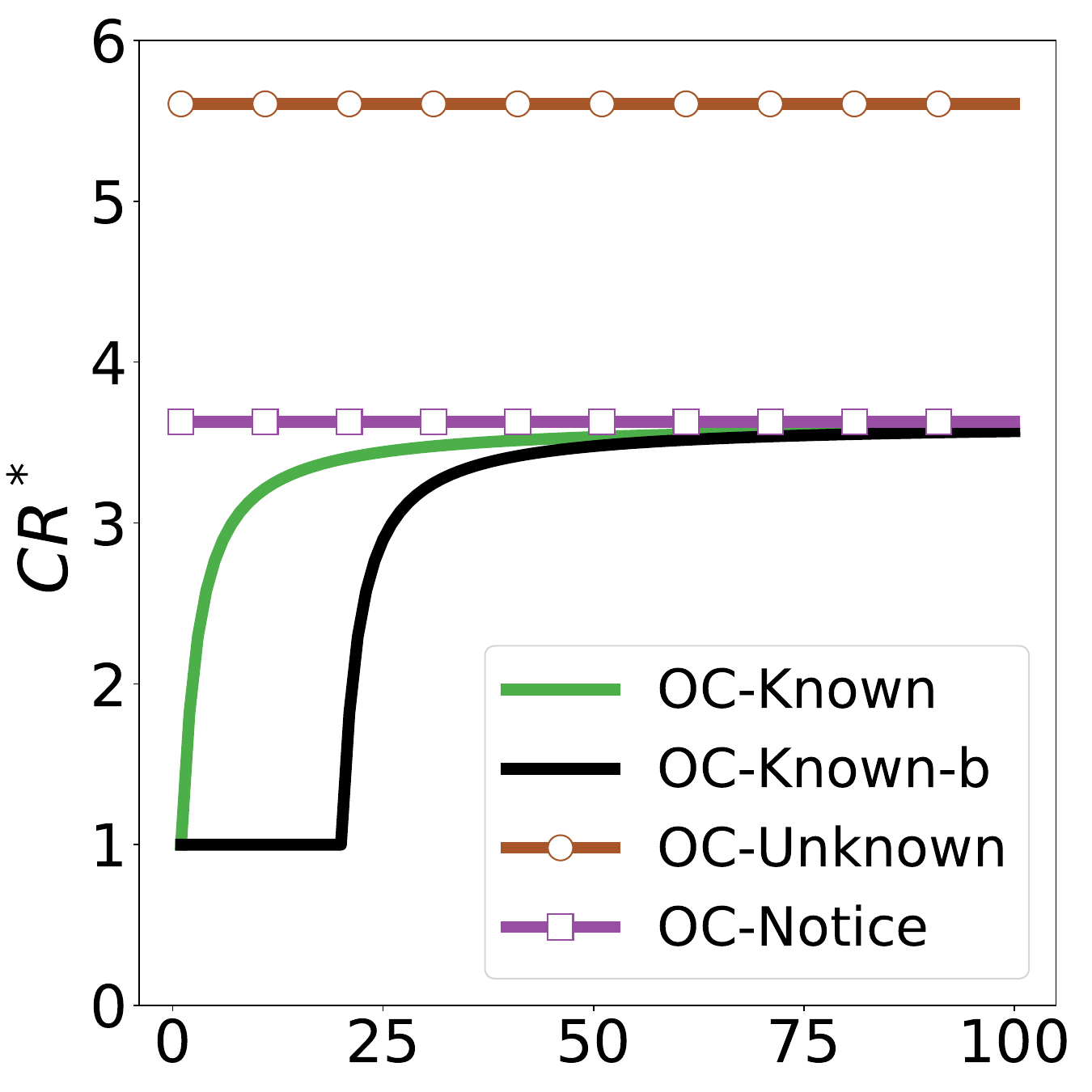}
        \caption{CR vs $T$}
        \label{fig:sub_cr_vs_T}
    \end{subfigure}
    \begin{subfigure}[b]{0.3\textwidth}
        \centering
        \includegraphics[width=\textwidth]{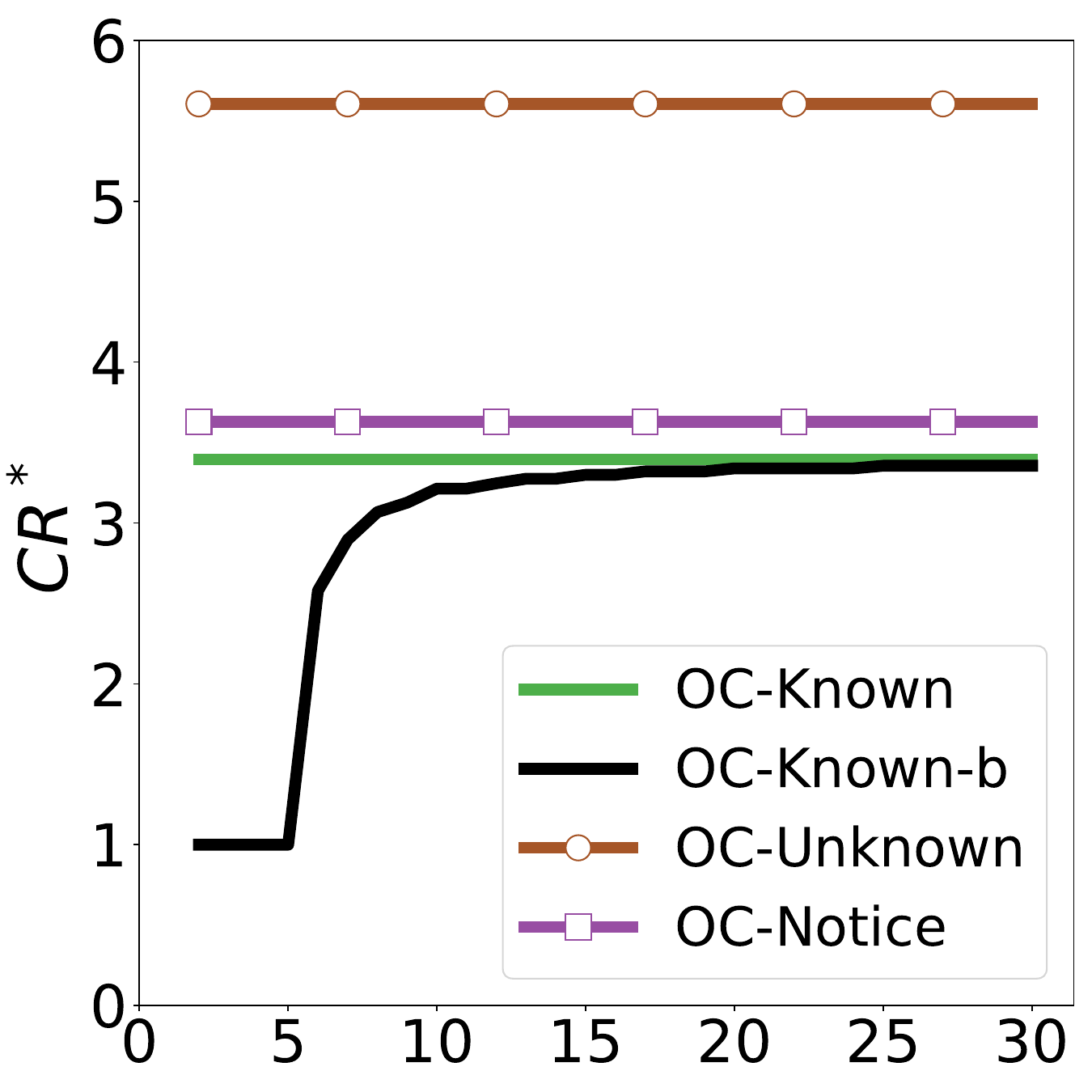}
        \caption{CR vs $b$}
        \label{fig:sub_cr_vs_b}
    \end{subfigure}
    \begin{subfigure}[b]{0.3\textwidth}
        \centering
        \includegraphics[width=\textwidth]{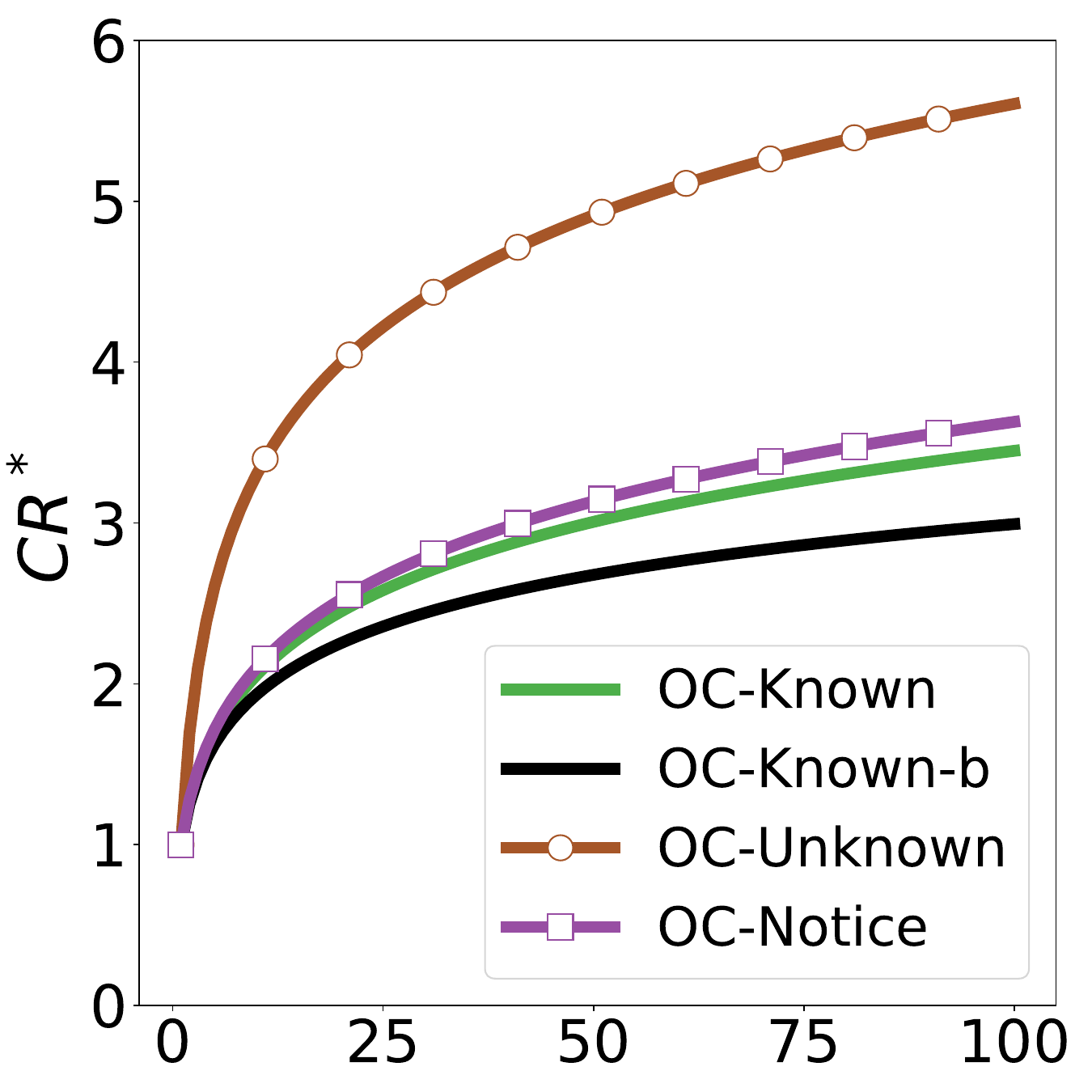}
        \caption{CR vs $\theta$}
        \label{fig:sub_cr_vs_theta}
    \end{subfigure}
    \caption{Comparison of \CR across different settings. \OCK refers to the setting without box constraints, while \OCK-b represents the same setting but with non-trivial box constraints.}
    \label{fig:cr_comparison}
\end{figure}

\textbf{Comparisons of optimal \CRs in different settings}: Fig. \ref{fig:cr_comparison} is presented to compare the optimal \CRs across four settings: \OCK with non-trivial box constraints (labelled as \OCK-$b$), \OCK without box constraints, \OCN, and \OCU. In Fig. \ref{fig:sub_cr_vs_T}, the \CRs are plotted as a function of increasing $T$, with $\theta = 100$ and $b = 5$. The gap between \OCK-$b$ and \OCK is noticeable at the beginning, but decreases as $T$ increases. \OCK without box constraints converges rapidly to \OCN, and \OCK-$b$ follows, eventually converging as well. \OCU shows significantly worse performance, emphasizing the importance of knowing $T$ in the online trading process. Fig. \ref{fig:sub_cr_vs_b}, with fixed $\theta = 100$ and $T = 20$, illustrates how \CR changes with respect to varying $b$. When $b$ is small, \CR equals $1$, as both \OPT and $\ALG_{\phi}$ will behave exactly the same (trade as fast as possible). As $b$ increases, \CR for \OCK-$b$ gradually converges to \OCK. Lastly, Fig. \ref{fig:sub_cr_vs_theta}, with fixed $T = 20$ and $b = 5$, shows that \CR consistently increases with $\theta$ across all the four settings. It represents a fact that larger $\theta$ reflects higher price fluctuations, which increases the difficulty for the seller in optimizing the trading process.

\section{Proof Sketch of Theorem \ref{thm:oc_known_notice} for \OCK with Non-trivial Box Constraints}
In this section, we provide a proof sketch to illustrate the key technical ingredients in deriving Theorem \ref{thm:oc_known_notice} for \OCK with non-trivial box constraints (i.e.., $ b\in (k/T, k)$).

\subsection{Step 1: Definitions and Notations of Switching Step and Two-Phase Trading}
For \OCK with non-trivial box constraints, both the optimal offline solution (\OPT) and our unified algorithm ($ \PRM_{\boldsymbol{\phi}}$) are guaranteed to utilize the entire budget $k$ within the time horizon $T$. The resource trading process is characterized by a critical transition, termed the \textit{switching step} $\tau$, which divides the trading into two phases: the \textit{proactive} phase and the \textit{forced} phase.

The switching step $\tau$ identifies when the algorithm transitions from proactive to forced trading due to constraints. To determine $\tau$, we use the concept of \textit{laxity}—a measure of how much flexibility remains in trading decisions relative to the remaining time and resource. Specifically, the laxity at step $t$, denoted as $l_t$, evaluates how freely the algorithm can continue trading before it is forced to allocate the maximum amount $b$.
Mathematically, we define $l_t$ at step $t\in[1,T]$ as follows: 
\begin{align} 
l_t = (T-t) - \lceil \frac{k_{t-1}-x_t}{b} \rceil, 
\end{align} 
where $k_{t-1}$ is the remaining resource before step $t$, $x_t$ is the resource allocated at step $t$, and $b$ is the rate limit per step. This measure reflects the remaining resource per time step relative to the box constraint imposed by $b$. When laxity drops below 0, the proactive trading strategy is no longer feasible, and the algorithm switches to a forced phase where it must allocate exactly $b$ units at each step to meet the overall budget by time $T$. In this regard, the switching step $\tau$ is the first step at which laxity becomes negative, signifying the transition to forced trading.

\begin{lemma}\label{lemma_switch}[\textbf{Switching Step} $ \tau $]
Let $ \tau $ be the first step such that  $ k_{\tau-1} - x_{\tau}^* >  b(T - \tau)  $ (i.e., $ l_t < 0$), where $ x_{\tau}^* $ denotes the optimal amount of resource allocated at step $ \tau $ by $ \PRM_{\boldsymbol{\phi}}$.  The allocated resource at each step is
\begin{align}\label{allocation_with_limit}
\bar{x}_t = 
\begin{cases}
    \min\{x_t^*, k_{t-1}\}                 &  t < \tau,\\
    \min\{b, k_{t-1} -  b(T - t) \}  &  t=\tau,\\
    b  &  \tau < t\le T.
\end{cases}
\end{align}
\end{lemma}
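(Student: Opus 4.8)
The plan is to read off the three cases of the allocation rule \eqref{allocation_with_limit} directly from how Algorithm \ref{alg_RBP} behaves in the \OCK regime, where $n_t\equiv 0$ so that line \ref{line_forced_1} never fires and the branch is decided entirely by the laxity test on line \ref{line_query}. The first two cases are essentially definitional: for $t<\tau$ the minimality of $\tau$ gives $l_t\ge 0$, i.e. $k_{t-1}-x_t^*\le b(T-t)$, so the test on line \ref{line_query} passes and the algorithm executes line \ref{line_proactive}, allocating $\bar x_t=\min\{x_t^*,k_{t-1}\}$; and for $t=\tau$ the test fails by the very definition of the switching step, so line \ref{line_forced_2} yields $\bar x_\tau=\min\{b,k_{\tau-1}-b(T-\tau)\}$. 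The real content is the third case, namely that the algorithm never leaves the forced phase once it enters it and in fact allocates exactly $b$ at every step $\tau<t\le T$. I would prove this by establishing the invariant $k_t=b(T-t)$ for all $t\ge\tau$, which simultaneously forces $\bar x_t=b$ and guarantees that the budget is used up precisely at $t=T$.

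The crux, and the step I expect to be the main obstacle, is to pin down $k_\tau=b(T-\tau)$ exactly, ruling out any leftover budget that a forced phase allocating $b$ at every subsequent step would later fail to clear. Here I would exploit the minimality of $\tau$ one step earlier. First note $k_{\tau-1}>0$, since $k_{\tau-1}-x_\tau^*>b(T-\tau)\ge 0$ and $x_\tau^*\ge 0$; hence the proactive allocation at $t=\tau-1$ cannot have exhausted the budget, so $\bar x_{\tau-1}=x_{\tau-1}^*$. Because $l_{\tau-1}\ge 0$, this gives $k_{\tau-1}=k_{\tau-2}-x_{\tau-1}^*\le b(T-\tau+1)$. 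The boundary case $\tau=1$ is handled separately by the non-trivial box constraint $b\in(k/T,k)$, which yields $k_0=k<bT=b(T-\tau+1)$. In all cases $k_{\tau-1}\le b(T-\tau+1)$, so $k_{\tau-1}-b(T-\tau)\le b$ and the minimum on line \ref{line_forced_2} selects its second argument: $\bar x_\tau=k_{\tau-1}-b(T-\tau)$ and therefore $k_\tau=b(T-\tau)$.

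Finally, I would close the third case by induction on $t$ from $\tau+1$ to $T$ with hypothesis $k_{t-1}=b(T-t+1)$. A subtlety worth flagging is that the laxity test can hold with equality at these steps: since $x_t^*\le b$, we have $k_{t-1}-x_t^*=b(T-t+1)-x_t^*\ge b(T-t)$, so the test on line \ref{line_query} fails whenever $x_t^*<b$ (the algorithm takes the forced branch and allocates $\min\{b,k_{t-1}-b(T-t)\}=b$), while if $x_t^*=b$ the test passes but the proactive branch allocates $\min\{b,k_{t-1}\}=b$ as well. Either way $\bar x_t=b$ and $k_t=b(T-t)$, which advances the induction; at $t=T$ this gives $k_T=0$, confirming that the entire budget is consumed and completing the verification of \eqref{allocation_with_limit}.
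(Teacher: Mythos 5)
Your proposal is correct; the paper in fact omits the proof of this lemma entirely (declaring it trivial alongside Lemma~\ref{lemma_earliest_tau}), and your argument is precisely the intended direct verification of Algorithm~\ref{alg_RBP}'s branching in the \OCK regime. Your added rigor is sound and worthwhile: the first two cases follow definitionally from the minimality of $\tau$ and the failure of the laxity test at $\tau$, and your invariant $k_t = b(T-t)$ for $t \ge \tau$ --- pinned down at $t=\tau$ via $k_{\tau-1} \le b(T-\tau+1)$ (using $k_{\tau-1}>0$ to force $\bar{x}_{\tau-1}=x_{\tau-1}^*$, with the $\tau=1$ boundary handled by $b>k/T$) and propagated by induction --- correctly handles the genuine subtlety that for $t>\tau$ the test on line~\ref{line_query} can pass when $x_t^*=b$, yet both branches still allocate exactly $b$, so the forced phase persists and $k_T=0$.
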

Lemma \ref{lemma_switch} describes the trading strategy at each step once the switching step $ \tau $ is given. Initially, the algorithm follows a proactive strategy, trading the optimal amount $ x_t^* $ while respecting the remaining resource and the box constraint $ b $. However, when laxity drops below 0, the algorithm switches to a forced phase and allocates $b$ units of resource at every subsequent step, as illustrated in Fig. \ref{fig_two_phase_allocation_with_box_constraints}.

\begin{figure}[htbp]
    \centering
    \includegraphics[width=0.6\textwidth]{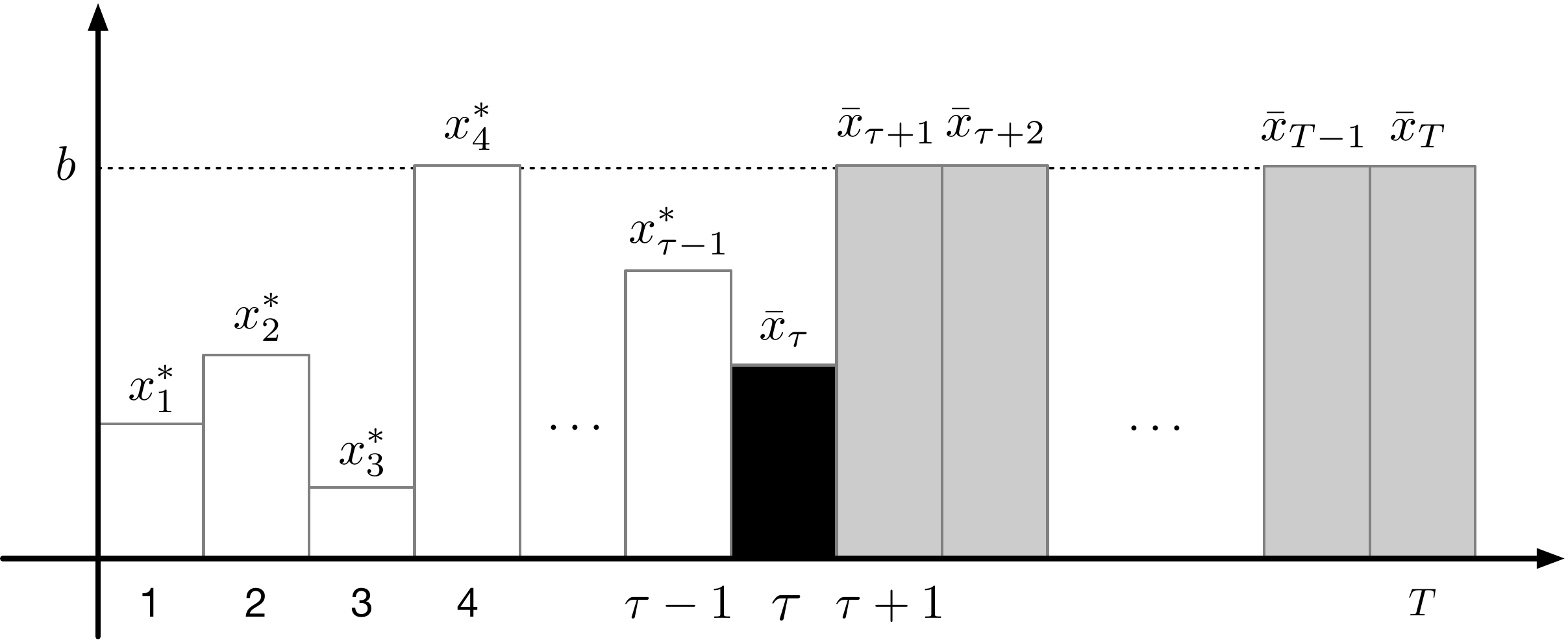}
    \caption{Illustration of the two-phase trading with box constraints. Step $\tau$ (black) separates the entire trading process into the {proactive phase} (white) and {forced phase} (gray).}
    \label{fig_two_phase_allocation_with_box_constraints}
    \vspace{-0.2cm}
\end{figure}

The value of $ b $, relative to the total resource $ k $ and the time horizon 
$ T $, determines the earliest possible point at which the algorithm's laxity will become insufficient, forcing it into the forced phase. This relationship allows us to define the \textit{earliest-possible switching step} $\tau_{\min}$, where the algorithm enters the forced phase as early as possible, depending on the system's laxity. The following lemma provides such a characterization of the lower bound on the switching step.

\begin{lemma}[\textbf{Earliest-Possible Switching Step} $ \tau_{\min} $]\label{lemma_earliest_tau}
For any $ \alpha \in [1, \theta] $, the switching step $ \tau(\alpha; \sigma) \in  \{\tau_{\min},\cdots, T\} $ holds for all  $ \sigma $, where  $ \tau_{\min} $ denotes the earliest-possible switching step, defined as
\begin{align}\label{eq:tau_min}
    \tau_{\min}=\begin{cases}
        1 & b\in (0, k/T ],\\
        T - \lceil \frac{k}{b} \rceil + 1 & b\in (k/T, k ),  \\
        T & b\in [k, +\infty ). 
    \end{cases}
\end{align}
\end{lemma}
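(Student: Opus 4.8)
The plan is to prove the two bounds $\tau \ge \tau_{\min}$ and $\tau \le T$ separately, with essentially all the content residing in the lower bound. The upper bound is immediate: steps are indexed by $[T]$, and if the switching condition $k_{t-1} - x_t^* > b(T-t)$ is never triggered for $t < T$, then the proactive phase alone exhausts the budget (which the section assumes is fully used within $T$), and we take $\tau = T$ with an empty forced phase. Hence $\tau \le T$ holds unconditionally, and the real work is showing $\tau \ge \tau_{\min}$.

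For the lower bound, the key observation is that the switching condition at step $\tau$ directly controls the number of remaining steps $T - \tau$. By definition of $\tau$ we have $k_{\tau-1} - x_\tau^* > b(T - \tau)$. Since the allocations $\bar x_t = \min\{x_t^*, k_{t-1}\} \ge 0$ for $t < \tau$ make the remaining budget non-increasing, we have $k_{\tau-1} \le k_0 = k$; and since $x_\tau^* \in [0,b]$ we have $x_\tau^* \ge 0$. Combining these,
\begin{align}
b(T - \tau) < k_{\tau-1} - x_\tau^* \le k_{\tau-1} \le k,
\end{align}
which gives $T - \tau < k/b$. Because $T - \tau$ is a non-negative integer, the largest value it can take is $\lceil k/b \rceil - 1$, so that $\tau \ge T - \lceil k/b \rceil + 1$. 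Crucially, this argument uses only $0 \le x_\tau^* \le b$ and the monotone depletion of the budget, so it is independent of $\alpha$ and holds for every $\alpha \in [1,\theta]$ and every instance $\sigma$.

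It then remains to match $T - \lceil k/b \rceil + 1$ to the piecewise definition of $\tau_{\min}$ across the three regimes, using the trivial bound $\tau \ge 1$. For $b \in (k/T, k)$ we have $k/b \in (1, T)$, so $T - \lceil k/b \rceil + 1 \in [1, T-1]$ and the derived bound coincides exactly with the stated $\tau_{\min}$. For $b \in (0, k/T]$ we have $k/b \ge T$, so $T - \lceil k/b \rceil + 1 \le 1$ and the derived bound is subsumed by $\tau \ge 1 = \tau_{\min}$. For $b \in [k, \infty)$ we have $k/b \in (0,1]$, so $\lceil k/b \rceil = 1$ and the bound reads $\tau \ge T$, which together with $\tau \le T$ forces $\tau = T = \tau_{\min}$.

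The proof is short, and I do not anticipate a serious obstacle; the only points requiring care are (i) converting the strict real inequality $T - \tau < k/b$ into the integer bound $T - \tau \le \lceil k/b \rceil - 1$, which is exactly where the non-trivial regime $b < k$ matters so that $k/b > 1$ and the forced phase is genuinely non-empty, and (ii) justifying $k_{\tau-1} \le k$, which follows from the non-negativity of every proactive allocation. As a sanity check, the bound is tight: a price sequence that keeps $x_t^* = 0$ (prices at or below the initial pseudo-cost $\alpha p_{\min}$) for all $t < \tau$ yields $k_{\tau-1} = k$ and triggers the switch precisely at $t = T - \lceil k/b \rceil + 1$, so $\tau_{\min}$ is attainable and the characterization is sharp.
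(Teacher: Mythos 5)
Your proof is correct and supplies exactly the routine counting argument the paper omits as trivial: the switching condition together with $0 \le x_\tau^* $ and $k_{\tau-1} \le k$ gives $b(T-\tau) < k$, hence the integer bound $\tau \ge T - \lceil k/b \rceil + 1$, which reduces to the stated piecewise $\tau_{\min}$ in all three regimes (with $\tau \ge 1$ and $\tau \le T$ handling the trivial cases, consistent with the paper's brief remarks). Your closing attainability check (prices held at or below $\alpha p_{\min}$ so that $x_t^* = 0$ until the switch triggers at exactly $t = T - \lceil k/b \rceil + 1$) goes beyond what the lemma formally asserts and confirms the characterization is sharp.
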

\begin{proof}
The proof of the above two lemmas is trivial and thus is omitted for brevity. Note that for the trivial box constraint with $ b \in (0, k/T) $, the entire trading is forced from the very beginning, and thus $ \tau_{\min} = 1 $. Conversely, when $ b \in [k, +\infty) $, there is no forced phase, and the switching step occurs at $ T $.
\end{proof}

\subsection{Step 2: Reduction from Multi-Period to Single-Period Force Trading}

Recall that our goal is to find a design of $ \alpha \geq 1 $ such that the competitive ratio of Algorithm \ref{alg_RBP}, namely, $\CR(\alpha)$, is as close to 1 as possible. A direct search for the optimal $\alpha$ is not feasible since numerically evaluating $\CR(\alpha)$ will require an exhaustive computation of the switching $ \tau $ for all possible instances. However, by leveraging the structure of the worst-case instances for \OCK with box constraints, we can significantly simplify the analysis.

\begin{proposition}[\textbf{Structure of Worst-Case Instances}]\label{prop:CR_worst_p_min}
Assume that $\alpha_*$ minimizes $\CR(\alpha)$, and let the corresponding worst-case instance be denoted by $\sigma_*$. If a forced trading phase exists for the given $\sigma_*$, the corresponding switching step is defined as $ \tau_* $. In this case, the worst-case instance $\sigma_*$ takes the following form:
\begin{align*}
\sigma_* = (p_1, \ldots, p_{\tau_* - 1}, p_{\min}, \ldots, p_{\min}),
\end{align*}
where $p_t = p_{\min}$ holds for all $t > \tau_*$. 
\end{proposition}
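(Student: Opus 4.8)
The plan is to establish the claim by a local price-perturbation (exchange) argument applied directly at the worst-case instance $\sigma_*$, exploiting the single structural fact from Lemma~\ref{lemma_switch} that throughout the forced phase $\PRM_{\boldsymbol{\phi}}$ allocates exactly $b$ at every step \emph{regardless of the observed price}. Concretely, I would show that whenever a forced-phase price $p_t$ with $t>\tau_*$ exceeds $p_{\min}$, lowering it to $p_{\min}$ cannot decrease the ratio $\OPT(\sigma)/\PRM_{\boldsymbol{\phi}}(\sigma|\alpha_*)$. Since lowering a price to $p_{\min}$ keeps the instance inside $\Omega_{\textsf{known}}$, iterating this over every such step transforms $\sigma_*$ into an admissible instance of the stated form whose ratio is at least that of $\sigma_*$; as $\sigma_*$ is already worst-case, the ratio is in fact unchanged, so we may assume without loss of generality that $p_t=p_{\min}$ for all $t>\tau_*$.

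First I would check that the perturbation preserves the two-phase structure. Because the allocations $x_1,\dots,x_{\tau_*-1}$ and the switching test at step $\tau_*$ depend only on the prices $p_1,\dots,p_{\tau_*}$, altering a price $p_t$ with $t>\tau_*$ leaves the proactive-phase dynamics, all laxity values up to $\tau_*$, and hence the switching step $\tau_*$ itself unchanged; the algorithm also remains in its forced phase and still allocates $b$ at step $t$. Thus both the partition of Lemma~\ref{lemma_switch} and the value of $\tau_*$ are invariant under the exchange, which is exactly why the perturbation is restricted to the strictly-forced steps $t>\tau_*$ (at the transition step $\tau_*$ the allocation $\min\{b,\,k_{\tau_*-1}-b(T-\tau_*)\}$ is price-dependent and is therefore excluded).

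Next comes the single-step exchange, the heart of the argument. Fix $t>\tau_*$, write $a=\OPT(\sigma)$ and $c=\PRM_{\boldsymbol{\phi}}(\sigma|\alpha_*)$, and lower $p_t$ to $p_{\min}$, i.e., decrease it by $\Delta=p_t-p_{\min}\ge 0$. On the algorithm side the allocation at $t$ is pinned at $b$, so its objective drops by exactly $b\Delta$, yielding new value $c-b\Delta>0$. On the offline side the unperturbed optimal allocation stays feasible and its value falls by $x_t^{\OPT}\Delta\le b\Delta$, so the new optimum $a'$ obeys $a'\ge a-b\Delta$; and since only a single objective coefficient was decreased, $a'\le a$. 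Consequently
\begin{align*}
\frac{a'}{\,c-b\Delta\,}\;\ge\;\frac{a-b\Delta}{\,c-b\Delta\,}\;\ge\;\frac{a}{c},
\end{align*}
where the last step uses that $s\mapsto (a-s)/(c-s)$ is non-decreasing for $s<c$ precisely because $a\ge c$ (indeed $\OPT\ge\PRM_{\boldsymbol{\phi}}$ always). In words: the fixed forced rate makes $\ALG$ fall at the maximal rate $b$ while $\OPT$ falls at rate at most $b$, so the ratio can only grow.

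The main obstacle I anticipate is not the algebra but two points of rigor. First, $\OPT(\cdot)$ is only piecewise-linear in $p_t$ and need not be differentiable, so rather than differentiating I bound its decrease through feasibility of the unperturbed offline solution, which yields the clean two-sided estimate $a-b\Delta\le a'\le a$ used above. Second, I must be certain the exchange never changes $\tau_*$ (otherwise the ``forced phase'' over which $\ALG$ is pinned to $b$ would itself move); this is guaranteed by confining the perturbation to $t>\tau_*$, as discussed. With these two points settled, the displayed monotonicity inequality, applied step by step, delivers the instance $\sigma_*=(p_1,\dots,p_{\tau_*-1},p_{\min},\dots,p_{\min})$ and completes the proof.
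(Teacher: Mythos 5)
Your proof is correct, but it takes a genuinely different route from the paper's. The paper argues globally: using the fact that both $\OPT$ and $\PRM_{\boldsymbol{\phi}}$ exhaust the budget $k$, it rewrites the competitive ratio as $1 + N/D$ with the explicit surplus-reallocation function $h(\sigma_*, bT-k)$, which withdraws the excess $bT-k$ from the $d$ lowest prices, and then observes that setting the tail prices to $p_{\min}$ simultaneously minimizes $h$ (enlarging the nonnegative numerator $N$) and minimizes the forced-phase revenue in the denominator $D$, so the ratio can only increase. You instead run a local, step-by-step exchange: the price-independence of forced-phase allocations (Lemma~\ref{lemma_switch}) pins $\ALG$'s loss at exactly $b\Delta$, feasibility of the unperturbed offline solution bounds $\OPT$'s loss by $b\Delta$ (neatly avoiding any differentiation of the piecewise-linear $\OPT$), and the monotonicity of $s \mapsto (a-s)/(c-s)$ for $a \ge c$ — valid since $\ALG$'s allocation is itself LP-feasible, so $\OPT \ge \PRM_{\boldsymbol{\phi}}$ — closes the inequality. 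What each buys: the paper's computation produces the explicit ratio expression that it reuses in the subsequent reduction (Proposition~\ref{prop:sigma_tau}), whereas your argument is more elementary and arguably tighter in its bookkeeping — it does not presuppose the ``allocate $b$ everywhere, then withdraw from the lowest prices'' structure of the offline optimum, it verifies explicitly that the perturbation leaves the trajectory and hence $\tau_*$ unchanged (a point the paper leaves implicit), and your deliberate exclusion of the transition step $t = \tau_*$, where $\bar{x}_{\tau_*}$ may be strictly smaller than $b$ so the exchange inequality could fail against an offline rate of up to $b$, matches the proposition's literal claim ($p_t = p_{\min}$ for all $t > \tau_*$) while sidestepping an off-by-one ambiguity in the paper's own write-up, which in places also sets $p_{\tau_*} = p_{\min}$.
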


The proof of Proposition \ref{prop:CR_worst_p_min} is given in Appendix \ref{proof_p_min}. Proposition \ref{prop:CR_worst_p_min} implies that the worst case occurs when forced trading starts at step $\tau_*$, and all remaining resource is forced to be allocated at the minimum price. Based on this result, for any case where forced trading begins at $\tau$, it suffices to focus on price sequences of types similar to $ \sigma_*$:
\begin{align}\label{omega_tau}
\sigma^{(\tau)} := (p_1, p_2, \ldots, p_{\tau-1}, p_{\min}, \ldots, p_{\min})  \in\Omega^{(\tau)},
\end{align}
where $ \Omega^{(\tau)} $ denotes the set of all input instances in which all prices are $ p_{\min} $ after step $ \tau $. Similarly, let $ \sigma_r^{(\tau)}$ be defined as:
\begin{align*}
\sigma_r^{(\tau)} := (p_1, p_2, \ldots, p_{\tau-1}, p_{\min}) \in\Omega_r^{(\tau)},
\end{align*}
where the subscript `r' is a shorthand for "reduced." 

\begin{proposition}[\textbf{Reduction to Single-Period Force Trading}]\label{prop:sigma_tau}
For any given $ b $, $ k $, and known $ T \geq 1 $, consider the price sequence $\sigma = (p_1, p_2, \ldots, p_T)$, where forced trading begins at step $\tau \in \{1, \cdots, T\}$. The optimal offline solution $ \OPT(\sigma^{(\tau)}) $, obtained by solving the original LP in Eq. \eqref{eq:OC_max} over the full horizon $T$, is equivalent to solving the following reduced LP over $ \sigma_r^{(\tau)} $, with a shortened horizon $\tau$:
\begin{subequations}\label{eq_reduce}
    \begin{align}
    \OPT(\sigma_r^{(\tau)}) & = 
    \max_{x_t \in \mathbb{R}_+} \quad \sum_{t=1}^{\tau} p_t x_t, \\
    s.t. \quad & \sum_{t=1}^{\tau} x_t \leq k; \quad x_t \leq b_t^{(\tau)}, \quad \forall t \in [\tau],
    \end{align}
\end{subequations}
where $ b_t^{(\tau)} $ is defined as follows:
\begin{align*}
b_t^{(\tau)} = 
\begin{cases} 
    b & \text{for } t \in [1, \tau-1], \\
    k & \text{for } t = \tau.
\end{cases}
\end{align*}
Namely, $ \OPT(\sigma^{(\tau)}) = \OPT(\sigma_r^{(\tau)}) $ holds for all $ \sigma^{(\tau)} \in \Omega^{(\tau)}$ and $ \sigma_r^{(\tau)} \in \Omega_r^{(\tau)}$.  
\end{proposition}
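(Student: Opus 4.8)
The plan is to prove that the two optimal values coincide by exhibiting value- and feasibility-preserving correspondences between the feasible regions of the two linear programs, establishing the inequality in each direction separately. The only structural difference between the full LP over $\sigma^{(\tau)}$ and the reduced LP over $\sigma_r^{(\tau)}$ lies in the tail: the full program has $T-\tau+1$ steps $\tau,\ldots,T$, each carrying price $p_{\min}$ and box constraint $b$, whereas the reduced program collapses them into a single step $\tau$ with price $p_{\min}$ and relaxed box constraint $b_\tau^{(\tau)}=k$. The key fact driving both directions is that, since every tail price equals $p_{\min}$, only the \emph{total} mass placed on the tail affects the objective.

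For the direction $\OPT(\sigma^{(\tau)}) \le \OPT(\sigma_r^{(\tau)})$, I would take any feasible $\{x_t\}_{t=1}^T$ of the full program and define a reduced solution by keeping $x_1,\ldots,x_{\tau-1}$ and setting $\tilde{x}_\tau = \sum_{t=\tau}^T x_t$. The objective is unchanged because $\sum_{t=\tau}^T p_{\min} x_t = p_{\min}\tilde{x}_\tau$; the budget is unchanged because $\sum_{t=1}^{\tau-1} x_t + \tilde{x}_\tau = \sum_{t=1}^T x_t \le k$; and the box constraints hold since $x_t \le b$ for $t<\tau$ and $\tilde{x}_\tau \le \sum_{t=1}^T x_t \le k = b_\tau^{(\tau)}$. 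Hence every feasible full solution maps to a feasible reduced solution of equal value, giving the inequality.

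For the reverse inequality $\OPT(\sigma_r^{(\tau)}) \le \OPT(\sigma^{(\tau)})$, I would start from an optimal reduced solution and spread its mass on step $\tau$ across the tail steps $\tau,\ldots,T$ at rate at most $b$ each. Because $p_\tau=p_{\min}$ is the smallest admissible price (Assumption~1), there is an optimal reduced solution that fills the proactive steps $1,\ldots,\tau-1$ to capacity before loading step $\tau$, so its allocation to step $\tau$ is $x_\tau^r = \max\{0,\, k-b(\tau-1)\}$. Keeping $x_1^r,\ldots,x_{\tau-1}^r$ and distributing $x_\tau^r$ over the $T-\tau+1$ tail steps produces a feasible full solution of identical value, provided $x_\tau^r \le b(T-\tau+1)$; this reduces to $k \le bT$, which holds under the non-trivial box constraint $b \in (k/T,k)$.

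The main obstacle is this reverse direction: unlike aggregation, spreading can fail if step $\tau$ is forced to carry more mass than the tail capacity $b(T-\tau+1)$ permits, so I must both (i) select the optimal reduced solution that minimizes the load on the minimum-price step $\tau$ and (ii) invoke $bT \ge k$ to certify feasibility of the spread—this is the single place where the non-trivial box assumption enters. Combining the two inequalities yields $\OPT(\sigma^{(\tau)}) = \OPT(\sigma_r^{(\tau)})$ for all $\sigma^{(\tau)} \in \Omega^{(\tau)}$ and the corresponding $\sigma_r^{(\tau)} \in \Omega_r^{(\tau)}$, as claimed.
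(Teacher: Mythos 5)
Your proof is correct, and it is organized differently from the paper's. The paper works directly with the optimal solutions of the two LPs: it writes both optimal values in the form $\sum_{t=1}^{\tau-1} p_t x_t + p_{\min}\bigl(k - \sum_{t=1}^{\tau-1} x_t\bigr)^+$ and case-splits on whether the leftover term is positive (then both optima fill steps $1,\dots,\tau-1$ at rate $b$ and dump the remainder at $p_{\min}$) or zero (then both optima allocate $b$ to the highest prices among $p_1,\dots,p_{\tau-1}$, which the two instances share), concluding equality of values in each case. You instead prove the two inequalities separately via value- and feasibility-preserving maps between the feasible regions: aggregating all tail mass into the single step $\tau$ (feasible precisely because $b_\tau^{(\tau)}=k$), and, conversely, spreading the step-$\tau$ mass of a suitably chosen optimal reduced solution over the $T-\tau+1$ tail steps at rate at most $b$. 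The extra ingredient you need for the second direction---the exchange argument selecting an optimal reduced solution with $x_\tau^r=\max\{0,\,k-b(\tau-1)\}$, together with the capacity check $\max\{0,\,k-b(\tau-1)\}\le b(T-\tau+1)$, i.e.\ $k\le bT$---is exactly where the non-trivial box constraint $b\in(k/T,k)$ enters, and this is a point the paper leaves implicit: its Case 1 tacitly assumes the full-horizon tail can absorb the leftover (and its displayed remainder $k-b(T-\tau)$ appears to be a typo for $k-b(\tau-1)$). What each approach buys: the paper's case analysis is shorter but rests on a claimed closed form of both optima; your two-sided mapping is slightly longer yet more robust, since the aggregation direction holds for every feasible solution with no structural assumptions, and the WLOG steps (full budget use, shifting mass off the minimum-price step) are standard exchange arguments made explicit. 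One minor polish for your write-up: state explicitly that an optimal reduced solution uses the full budget $k$ (valid since all prices are at least $p_{\min}>0$ and the reduced capacities sum to $b(\tau-1)+k\ge k$), as your formula for $x_\tau^r$ presupposes this.
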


The proof of Proposition \ref{prop:sigma_tau} is provided in Appendix \ref{sec:proof_of_proposition_sigma_tau}. The high level idea for the proof is as follow: the reduction in Proposition \ref{prop:sigma_tau} enables us to focus on a simplified \OC problem, where, in the worst case, all remaining resource is allocated at $p_{\min}$ in the final step, and the box constraint is trivial (i.e., $b_{\tau}^{(\tau)} \geq k$).

\subsection{Overview of the Remaining Proof}
The remaining proof follows three key steps: 
\begin{itemize} 
    \item First, based on Proposition \ref{prop:CR_worst_p_min} and Proposition \ref{prop:sigma_tau}, we show that for any price sequence $\sigma^{(\tau)} \in \Omega^{(\tau)}$, the optimal solution $\OPT(\sigma^{(\tau)})$ can be obtained by solving the reduced LP over $  \sigma_r^{(\tau)} $, with a shortened horizon $\tau$. This reduction allows us to apply the online primal-dual analysis based on the reduced LP in Eq. \eqref{eq_reduce}, providing the upper bound below: 
    \begin{align*} 
    \frac{\OPT(\sigma^{(\tau)})}{\PRM_{\boldsymbol{\phi}}(\sigma^{(\tau)}|\alpha)}  \le  \alpha_{\tau}, \forall \sigma^{\tau} \in \Omega^{(\tau)}, \tau \in [\tau_{\min}, \tau_T],
    \end{align*} 
    where $ \alpha_{\tau} $ is defined as the root to the following equation:
    \begin{align}\label{eq:alpha_tau}
        \alpha = \tau\left[ 1 - \left( \frac{\alpha - 1}{\theta - 1} \right)^{1/\tau}\right].
    \end{align}
    This indicates that our algorithm $ \PRM_{\boldsymbol{\phi}} $ is $ \alpha_{\tau} $ competitive for solving the reduced LP. A more formal statement of the above results is given in Lemma \ref{lemma_ratio_tau} in the appendix.

    \item Second, for any switching step $\tau \in [\tau_{\min}, T]$, we prove in Lemma \ref{lemma_cr_tau} in the appendix that the following holds:
    \begin{align*} 
    \CR(\alpha) = \max_{\sigma \in \Omega} \frac{\OPT(\sigma)}{\PRM_{\boldsymbol{\phi}}(\sigma|\alpha)} \le \alpha, \text{ } \forall \alpha \in [\alpha_{\tau}, \alpha_{\tau+1}),
    \end{align*}
    indicating that our algorithm $ \PRM_{\boldsymbol{\phi}} $ is also $ \alpha $-competitive for solving the original LP.

    \item Finally, since $\alpha_{\tau}$ increases with $\tau$ (by Eq. \eqref{eq:alpha_tau}), we have
    \begin{align*} 
    \CR^*_{\textsf{known}} = \min_{\alpha \in [\alpha_{\tau_{\min}}, \alpha_T]} \max_{\sigma \in \Omega} \frac{\OPT(\sigma)}{\PRM_{\boldsymbol{\phi}}(\sigma|\alpha)} \le \alpha_{\tau_{\min}}. 
    \end{align*}
\end{itemize}
Eq. \eqref{eq:CR_known} follows by substituting $ \tau_{\min} = T - \lceil \frac{k}{b} \rceil + 1 $ into Eq. \eqref{eq:alpha_tau}. We thus complete the proof of Theorem \ref{thm:oc_known_notice} for \OCK with non-trivial box constraints. 

\section{\OCP: Learning-Augmented Algorithm with Horizon Prediction}

Building on our unified algorithm, we extend to \OCP by incorporating horizon predictions. The goal is to develop an algorithm that adapts between \OCK and \OCU: When predictions are accurate, the algorithm's performance should approach that of \OCK. However, if the predictions are inaccurate, the algorithm should revert to the robustness of \OCU.

\subsection{Definitions and Core Ideas}
To formalize this adaptive approach, we introduce the concepts of robustness and consistency within the \OC framework. Let $ \ALG(\sigma) $ be the objective value produced by an online algorithm $ \ALG $ on an instance $ \sigma $. Similarly, let $ \OPT(\sigma) $ denote the objective value of the optimal offline solution.

\textbf{Definition of robustness and consistency.}  
Given a confidence parameter $ \lambda \in [0, 1] $, an online algorithm $ \ALG $ is $ \eta(\lambda) $-consistent and $ \gamma(\lambda) $-robust if, with a given horizon prediction $ T_{\textsf{pred}} $ and real horizon $ T $, for the instance $ \sigma = \sigma_{\textsf{pred}} = (p_1, \ldots, p_{T_{\textsf{pred}}}) $, $\frac{{\OPT}(\sigma_{\textsf{pred}})}{{\ALG}(\sigma_{\textsf{pred}})} \leq \eta(\lambda)$ and for all instances $ \sigma \in \Omega$, $\frac{\OPT(\sigma)}{{\ALG}(\sigma)} \leq \gamma(\lambda)$. 

We aim to design algorithms such that $ \eta(\lambda) $ approaches 1 as $ \lambda $ approaches 0, meaning the algorithm performs nearly as well as the offline optimal solution when predictions are accurate. Additionally, $ \gamma(\lambda) $ should approach the best possible guarantee in \OCU as $ \lambda $ approaches 1.

\textbf{Core idea of Algorithm \ref{alg_RBP_combine}.}  
The core idea of the learning-augmented algorithm is to combine the pseudo-cost functions of \OCK and \OCU to balance robustness and consistency. We split the resource $ k $ into two parts: $ k_0^{(1)} = (1-\lambda)k $ for \OCK and $ k_0^{(2)} = \lambda k $ for \OCU, where $ \lambda \in [0, 1] $.  In Algorithm \ref{alg_RBP_combine}, the pseudo-cost function $ \phi_1 $ in Eq. \eqref{eq:argmax_k1}  follows Definition \ref{def_price} with $ k=k_0^{(1)}$, and $ \phi_2 $ in Eq. \eqref{eq:argmax_k2} is defined as follows:
\begin{align*}
\phi_2(\beta| {F}_t^{(2)},\alpha_2)=
\begin{cases}
p_{\min} &  c_{t-1} \in \left[0, \frac{k_0^{(2)}}{\alpha_2}\right), \\
p_{\min}e^{\frac{\alpha_2}{k_0^{(2)}}(\beta+c_{t-1})-1} &  
 c_{t-1} \in [\frac{k_0^{(2)}}{\alpha_2},k_0^{(2)}],
\end{cases}
\end{align*}
where $ c_{t-1} = \sum_{i=1}^{t-1}x_i^{(2)} $,  which follows the same design for handling \OCU (See Appendix~\ref{appendix_OC_unknown} for more detail). Algorithm~\ref{alg_RBP_combine} runs Algorithm~\ref{alg_RBP} for \OCK with $k_0^{(1)}$ and for \OCU with $k_0^{(2)}$ in parallel, combining trading decisions from both algorithms.

\subsection{Robustness-Consistency Analysis}
Algorithm \ref{alg_RBP_combine} optimizes the division between \OCK and \OCU based on $ \lambda $. Theoretical guarantees are provided in Theorem \ref{thm:combine_cr}, showing robustness $\gamma(\lambda)$ and consistency $\eta(\lambda)$ bounds.

\begin{theorem}\label{thm:combine_cr}
For a given prediction $T_{\textsf{pre}}$ with predefined $\lambda$, running Algorithm \ref{alg_RBP_combine} with $\alpha_1 = \CR_{\textsf{known}}^*$ and $\alpha_2 = \CR_{\textsf{unknown}}^* $ provides the following upper bounds on robustness $\gamma(\lambda)$ and consistency $\eta(\lambda)$:
\begin{itemize}
    \item For robustness: $ \gamma(\lambda) \le \frac{\CR_{\textsf{unknown}}^*}{\lambda}$. 
    \item For consistency:
    \begin{align*}
    \eta(\lambda) \le \frac{\CR_{\textsf{known}}^*\cdot \CR_{\textsf{unknown}}^*}{\CR_{\textsf{unknown}}^* + \lambda(\CR_{\textsf{known}}^* - \CR_{\textsf{unknown}}^*)},
    \end{align*}
\end{itemize}
where $ \CR_{\textsf{known}}^* $ is given by Eq. \eqref{eq:CR_known} and $ \CR_{\textsf{unknown}}^* = 1 + \ln\theta$.
\end{theorem}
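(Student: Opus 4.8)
The plan is to treat Algorithm~\ref{alg_RBP_combine} as two independent copies of $\PRM_{\boldsymbol{\phi}}$ running in parallel and to lower-bound the total profit $\ALG(\sigma) = \PRM^{(1)}(\sigma) + \PRM^{(2)}(\sigma)$ by adding their individual guarantees, where $\PRM^{(1)}$ is the \OCK stream on budget $k_0^{(1)} = (1-\lambda)k$ with pseudo-cost $\phi_1$ and $\PRM^{(2)}$ is the \OCU stream on $k_0^{(2)} = \lambda k$ with pseudo-cost $\phi_2$. First I would fix the two auxiliary facts that make this decomposition sound. \emph{(i) Feasibility:} splitting the rate limit in the same proportion, $b^{(1)} = (1-\lambda)b$ and $b^{(2)} = \lambda b$, forces $\bar{x}_t = x_t^{(1)} + x_t^{(2)} \le b$ at every step, so the combined trade respects the original box constraint~\eqref{eq:OC_max_box_constraint}. \emph{(ii) Scale-invariance of the competitive ratios:} the \OCK stream runs with parameters $((1-\lambda)k,(1-\lambda)b,T)$, for which $\tau_{\min} = T - \lceil \tfrac{(1-\lambda)k}{(1-\lambda)b}\rceil + 1 = T - \lceil \tfrac{k}{b}\rceil + 1$ is unchanged, so Eq.~\eqref{eq:CR_known} yields exactly $\CR^*_{\textsf{known}}$ for the scaled instance; and by Remark~\ref{rmk:CR_unknown} the \OCU stream keeps $\CR^*_{\textsf{unknown}} = 1+\ln\theta$, which does not depend on the budget scale or on $b$.

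The second ingredient is a scaling comparison between the sub-problem optima and $\OPT(\sigma)$. Writing $\OPT^{(i)}(\sigma)$ for the offline optimum of LP~\eqref{eq:OC_max} restricted to budget $k_0^{(i)}$ and rate $b^{(i)}$, I would take any offline-optimal allocation $\{x_t^*\}$ for the full instance and scale it: $\{(1-\lambda)x_t^*\}$ is feasible for sub-problem~1 (it respects both $\sum_t (1-\lambda)x_t^* \le (1-\lambda)k$ and $(1-\lambda)x_t^* \le (1-\lambda)b$) and attains value $(1-\lambda)\OPT(\sigma)$, giving $\OPT^{(1)}(\sigma) \ge (1-\lambda)\OPT(\sigma)$, and symmetrically $\OPT^{(2)}(\sigma) \ge \lambda\,\OPT(\sigma)$.

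For robustness I would discard the \OCK stream (using only $\PRM^{(1)} \ge 0$) and lean on the fact that the \OCU stream is $\CR^*_{\textsf{unknown}}$-competitive for \emph{every} horizon, whether or not $T_{\textsf{pred}}$ is correct. Thus for any $\sigma \in \Omega$,
\begin{align*}
\ALG(\sigma) \ge \PRM^{(2)}(\sigma) \ge \frac{\OPT^{(2)}(\sigma)}{\CR^*_{\textsf{unknown}}} \ge \frac{\lambda\,\OPT(\sigma)}{\CR^*_{\textsf{unknown}}},
\end{align*}
which rearranges to $\OPT(\sigma)/\ALG(\sigma) \le \CR^*_{\textsf{unknown}}/\lambda = \gamma(\lambda)$. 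For consistency I would instead use both guarantees at once: on the instance $\sigma_{\textsf{pred}}$, whose length equals the horizon for which the \OCK stream is configured, that stream attains $\PRM^{(1)}(\sigma_{\textsf{pred}}) \ge \OPT^{(1)}(\sigma_{\textsf{pred}})/\CR^*_{\textsf{known}}$, while $\PRM^{(2)}(\sigma_{\textsf{pred}}) \ge \OPT^{(2)}(\sigma_{\textsf{pred}})/\CR^*_{\textsf{unknown}}$; adding these and applying the scaling bounds gives
\begin{align*}
\ALG(\sigma_{\textsf{pred}}) \ge \left(\frac{1-\lambda}{\CR^*_{\textsf{known}}} + \frac{\lambda}{\CR^*_{\textsf{unknown}}}\right)\OPT(\sigma_{\textsf{pred}}),
\end{align*}
and placing the parenthesized factor over a common denominator collapses it to $\eta(\lambda) = \frac{\CR^*_{\textsf{known}}\,\CR^*_{\textsf{unknown}}}{\CR^*_{\textsf{unknown}} + \lambda(\CR^*_{\textsf{known}} - \CR^*_{\textsf{unknown}})}$, as claimed.

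The hard part will not be the arithmetic but justifying the decomposition itself: I must argue that the two streams, sharing one box constraint, genuinely do not interfere, so that each retains its stand-alone competitive ratio inside the combined run. This rests entirely on the two facts above---proportional splitting of $b$ keeps the combined trade feasible \emph{and}, because $\CR^*_{\textsf{known}}$ depends on $b$ only through $\lceil k/b\rceil$, leaves the known-horizon guarantee exactly at $\CR^*_{\textsf{known}}$ under the rescaling $(k,b)\mapsto((1-\lambda)k,(1-\lambda)b)$. Once this scale-invariance is pinned down, the remaining work is only the two elementary scaling inequalities and the common-denominator simplification.
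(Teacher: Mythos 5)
Your proposal is correct, and its skeleton---run the two streams independently on the proportional budget/box split, note $\OPT^{(i)}(\sigma)\ge \frac{k^{(i)}}{k}\OPT(\sigma)$ by scaling a full-instance optimal allocation, and add the stand-alone guarantees---is the same decomposition the paper uses; your consistency derivation, including the common-denominator collapse to $\frac{\alpha_1\alpha_2}{\alpha_2+\lambda(\alpha_1-\alpha_2)}$, matches the paper's essentially line for line. Where you genuinely diverge is robustness. The paper proves $\gamma(\lambda)\le \alpha_2/\lambda$ by a case analysis on the realized horizon: when $T>T_{\textsf{pred}}$ it exploits that the forced phase makes the \OCK stream liquidate all of $k^{(1)}$ by step $T_{\textsf{pred}}$ at price at least $p_{\min}$, yielding the intermediate bound $\frac{\theta}{1-\lambda+\frac{\lambda}{\alpha_2}\theta}$, and when $T<T_{\textsf{pred}}$ it splits into sub-cases on whether the \OCK stream traded at all, bounding $\PRM_{\boldsymbol{\phi_1}}$ below by $\frac{1}{\alpha_1}\OPT_1(\sigma)-k^{(1)}p_{\min}$ in the active sub-case, before taking a maximum that collapses to $\alpha_2/\lambda$. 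You instead discard the \OCK stream entirely ($\PRM^{(1)}\ge 0$) and invoke only the horizon-independent \OCU guarantee $\PRM^{(2)}(\sigma)\ge \OPT^{(2)}(\sigma)/\alpha_2$, which is a strictly simpler and fully valid proof of the stated bound, since Lemma~\ref{lemma_unknown_CR} holds for every instance length and, per Remark~\ref{rmk:CR_unknown}, independently of $b^{(2)}$. What the paper's longer route buys is the sharper regime-specific bound $\frac{\theta}{1-\lambda+\lambda\theta/\alpha_2}\le \alpha_2/\lambda$ under horizon overestimation, which underpins its informal claims that the algorithm does better than the worst-case guarantee when predictions underestimate or match $T$; your argument cannot recover that refinement, but the theorem as stated does not need it. You also make explicit two facts the paper leaves implicit and which are genuinely needed for the decomposition to be sound: that $b^{(1)}+b^{(2)}=b$ keeps the combined trade feasible, and that $\CR^*_{\textsf{known}}$ is invariant under $(k,b)\mapsto((1-\lambda)k,(1-\lambda)b)$ because Eq.~\eqref{eq:CR_known} depends on these parameters only through $\lceil k/b\rceil$; the latter point is a small but real gap in the paper's own write-up (its consistency step quotes the feasibility condition without the box correction), so pinning it down is a welcome improvement.
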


The proof of Theorem \ref{thm:combine_cr} is given in Appendix \ref{appendix_la}. Theorem \ref{thm:combine_cr} establishes a clear relationship between the consistency parameter $\eta(\lambda)$ and the robustness parameter $\gamma(\lambda)$ in relation to the confidence level $\lambda$. When $\lambda$ is small, indicating a high degree of trust in the horizon prediction, the algorithm's consistency is maximized, with $\eta(\lambda)$ approaching $\CR_{\textsf{known}}^*$, signifying near-optimal performance, as if the real horizon were known in advance. This trade-off allows the algorithm to dynamically balance between optimizing for consistency when the prediction is trusted and safeguarding robustness when the prediction is unreliable, achieving strong performance in both cases.

To validate our theoretical findings, we conduct a case study in Appendix \ref{appendix_case_study}, focusing on an energy trading problem. 

\begin{algorithm}[t]
\caption{Algorithm for \OCP}
\label{alg_RBP_combine}
\begin{algorithmic}[1]
    \STATE \textbf{Inputs:} $T_{\textsf{pred}}$ (prediction of the horizon $T$), $\lambda \in [0 , 1]$ (the confidence level), $b$ (box-constraint), $\alpha_1$, and $\alpha_2$.
    \STATE Define $k_0^{(1)} = (1 - \lambda)k$ and $k_0^{(2)} = \lambda k$. 
    \STATE Define $b^{(1)} = (1 - \lambda)b$ and $b^{(2)} = \lambda b$.
    
    \WHILE{price $p_t$ is revealed}
        \STATE Calculate $x_t^{(1)}$:
        \begin{align}\label{eq:argmax_k1}
        x_t^{(1)} = \argmax_{x_t \in [0, b^{(1)}]} \left( p_t x_t - \int_{0}^{x_t} \phi_1(\beta|  {F}_t^{(1)},\alpha_1) d\beta \right).
        \end{align}
        \IF{$t > T_{\textsf{pred}}$}
            \STATE $\bar{x}_t^{(1)} = 0$ 
        \ELSIF{$k_{t-1}^{(1)} - x_t^{(1)} \leq b^{(1)}(T_{\textsf{pred}} - t)$} 
            \STATE $\bar{x}_t^{(1)} = \min\{x_t^{(1)}, k_{t-1}^{(1)}\}$ 
        \ELSE
            \STATE $\bar{x}_t^{(1)} = \min\{b^{(1)}, k_{t-1}^{(1)} - b^{(1)}(T_{\textsf{pred}} - t)\}$ 
        \ENDIF
        
        \STATE Calculate $x_t^{(2)}$:
        \begin{align}\label{eq:argmax_k2}
        x_t^{(2)} = \argmax_{x_t \in [0, b^{(2)}]} \left( p_t x_t - \int_{0}^{x_t} \phi_2(\beta|  {F}_t^{(2)},\alpha_2) d\beta \right).
        \end{align}
        \STATE Trade $\bar{x}_t$ units at price $p_t$ according to:
        \begin{align*}
            \bar{x}_t = \bar{x}_t^{(1)} + x_t^{(2)}.
        \end{align*}
        
        \STATE Update the resource for $k^{(1)}$: $k_t^{(1)} = k_{t-1}^{(1)} - \bar{x}_t^{(1)}$.
    \ENDWHILE
\end{algorithmic}
\end{algorithm}

\section{Conclusion and Future Work}
In this paper, we formulated the online conversion (\OC) problem and presented a unified algorithm to address it under three horizon uncertainty models: known, notified, and unknown horizons. The algorithm handles scenarios with and without rate limits, achieving tight robustness guarantees with optimal competitive ratios. We also proposed a learning-augmented algorithm that leverages horizon predictions to improve performance while maintaining robustness in uncertain conditions. 

Future work includes extending \OC to bi-directional trading, addressing the two-way online conversion problem, which involves optimizing cost minimization and revenue generation simultaneously in dynamic markets. This remains an open and promising challenge with significant theoretical and practical implications. In addition, we believe the idea of designing an appropriate pseudo-cost to guide the online decision-making process will find other applications in related problems. 

\printbibliography{}

@article{Sun_MultipleKnapsack_2020,
    author = {Sun, Bo and Zeynali, Ali and Li, Tongxin and Hajiesmaili, Mohammad and Wierman, Adam and Tsang, Danny H.K.},
    title = {Competitive Algorithms for the Online Multiple Knapsack Problem with Application to Electric Vehicle Charging},
    year = {2021},
    issue_date = {December 2020},
    publisher = {Association for Computing Machinery},
    address = {New York, NY, USA},
    volume = {4},
    number = {3},
    url = {https://doi.org/10.1145/3428336},
    doi = {10.1145/3428336},
    abstract = {We introduce and study a general version of the fractional online knapsack problem with multiple knapsacks, heterogeneous constraints on which items can be assigned to which knapsack, and rate-limiting constraints on the assignment of items to knapsacks. This problem generalizes variations of the knapsack problem and of the one-way trading problem that have previously been treated separately, and additionally finds application to the real-time control of electric vehicle (EV) charging. We introduce a new algorithm that achieves a competitive ratio within an additive factor of one of the best achievable competitive ratios for the general problem and matches or improves upon the best-known competitive ratio for special cases in the knapsack and one-way trading literatures. Moreover, our analysis provides a novel approach to online algorithm design based on an instance-dependent primal-dual analysis that connects the identification of worst-case instances to the design of algorithms. Finally, we illustrate the proposed algorithm via trace-based experiments of EV charging.},
    journal = {Proc. ACM Meas. Anal. Comput. Syst.},
    month = {jun},
    articleno = {51},
    numpages = {32},
    keywords = {online knapsack problems, online algorithms, one-way trading, electric vehicle charging, online primal dual analysis}
}

@book{mitrinovic2012inequalities,
  title={Inequalities involving functions and their integrals and derivatives},
  author={Mitrinovic, Dragoslav S and Pecaric, Josip and Fink, Arlington M},
  volume={53},
  year={2012},
  publisher={Springer Science \& Business Media}
}

@article{jones1964fundamental,
  title={Fundamental inequalities for discrete and discontinuous functional equations},
  author={Jones, G Stephen},
  journal={Journal of the Society for Industrial and Applied Mathematics},
  volume={12},
  number={1},
  pages={43--57},
  year={1964},
  publisher={SIAM}
}

@article{time_series_search_2001,
	title = {{Optimal search and one-way trading online algorithms}},
	author = {El-Yaniv, R. and Fiat, A. and Karp, R. M. and Turpin, G.},
	journal = {Algorithmica},
	number = {1},
	pages = {101--139},
	publisher = {Springer New York},
	volume = {30},
	year = {2001}
}

@misc{PJMRegulationPrices,
  author       = {PJM},
  title        = {Regulation Market Clearing Prices},
  howpublished = {\url{https://dataminer2.pjm.com/feed/reg_prices}},
  issue_date         = {September 2024}
}

@misc{evdatabase_tesla_model_3,
  title = {Tesla Model 3 Long Range Dual Motor},
  author = {{EV Database}},
  year = {2024},
  url = {https://ev-database.org/car/1591/Tesla-Model-3-Long-Range-Dual-Motor},
}

@article{Buchbinder_OPD_book_2009,
    url = {http://dx.doi.org/10.1561/0400000024},
    year = {2009},
    volume = {3},
    journal = {Foundations and Trends® in Theoretical Computer Science},
    title = {The Design of Competitive Online Algorithms via a Primal–Dual Approach},
    doi = {10.1561/0400000024},
    issn = {1551-305X},
    number = {2–3},
    pages = {93-263},
    author = {Niv Buchbinder and Joseph (Seffi) Naor}
}

@article{Tan_MD_2020,
	author = {Tan, Xiaoqi and Sun, Bo and Leon-Garcia, Alberto and Wu, Yuan and Tsang, {\relax D.H.K}},
	title = {Mechanism Design for Online Resource Allocation: A Unified Approach},
	year = {2020},
	issue_date = {June 2020},
	publisher = {Association for Computing Machinery},
	address = {New York, NY, USA},
	volume = {4},
	number = {2},
	journal = {Proc. ACM Meas. Anal. Comput. Syst. (SIGMETRICS '20)},
	month = jun,
	articleno = {24},
	numpages = {46},
}

@inproceedings{jiang2021online,
  title={Online selection problems against constrained adversary},
  author={Jiang, Zhihao and Lu, Pinyan and Tang, Zhihao Gavin and Zhang, Yuhao},
  booktitle={International Conference on Machine Learning},
  pages={5002--5012},
  year={2021},
  publisher ={PMLR}
}

@inproceedings{lee2024online,
  title={Online Search with Predictions: Pareto-optimal Algorithm and its Applications in Energy Markets},
  author={Lee, Russell and Sun, Bo and Hajiesmaili, Mohammad and Lui, John CS},
  booktitle={Proceedings of the 15th ACM International Conference on Future and Sustainable Energy Systems},
  pages={386--407},
  year={2024}
}

@article{k_search_2009,
	author = {Lorenz, Julian and Panagiotou, Konstantinos and Steger, Angelika},
	issue_date = {October 2020},
	journal = {Algorithmica},
	number = {2},
	pages = {311--328},
	title = {Optimal Algorithms for $k$-Search with Application in Option Pricing},
	volume = {55},
	year = {2009},
}

@article{tan2023threshold,
  title={Threshold Policies with Tight Guarantees for Online Selection with Convex Costs},
  author={Tan, Xiaoqi and Yu, Siyuan and Boutaba, Raouf and Leon-Garcia, Alberto},
  journal={19th Conference on Web and Internet Economics (WINE)},
  year={2023}
}

@article{angelopoulos2023online,
  title={Online bin packing with predictions},
  author={Angelopoulos, Spyros and Kamali, Shahin and Shadkami, Kimia},
  journal={Journal of Artificial Intelligence Research},
  volume={78},
  pages={1111--1141},
  year={2023}
}

@article{lykouris2021competitive,
  title={Competitive caching with machine learned advice},
  author={Lykouris, Thodoris and Vassilvitskii, Sergei},
  journal={Journal of the ACM (JACM)},
  volume={68},
  number={4},
  pages={1--25},
  year={2021},
  publisher={ACM New York, NY}
}

@article{sun2021pareto,
  title={Pareto-optimal learning-augmented algorithms for online conversion problems},
  author={Sun, Bo and Lee, Russell and Hajiesmaili, Mohammad and Wierman, Adam and Tsang, Danny},
  journal={Advances in Neural Information Processing Systems},
  volume={34},
  pages={10339--10350},
  year={2021}
}

@article{yang2020online,
  title={Online linear optimization with inventory management constraints},
  author={Yang, Lin and Hajiesmaili, Mohammad H and Sitaraman, Ramesh and Wierman, Adam and Mallada, Enrique and Wong, Wing S},
  journal={Proceedings of the ACM on Measurement and Analysis of Computing Systems},
  volume={4},
  number={1},
  pages={1--29},
  year={2020},
  publisher={ACM New York, NY, USA}
}

@article{purohit2018improving,
  title={Improving online algorithms via ML predictions},
  author={Purohit, Manish and Svitkina, Zoya and Kumar, Ravi},
  journal={Advances in Neural Information Processing Systems},
  volume={31},
  year={2018}
}

@article{lechowicz2024online,
  title={Online Conversion with Switching Costs: Robust and Learning-augmented Algorithms},
  author={Lechowicz, Adam and Christianson, Nicolas and Sun, Bo and Bashir, Noman and Hajiesmaili, Mohammad and Wierman, Adam and Shenoy, Prashant},
  journal={ACM SIGMETRICS Performance Evaluation Review},
  volume={52},
  number={1},
  pages={45--46},
  year={2024},
  publisher={ACM New York, NY, USA}
}

@article{lin2019competitive,
  title={Competitive online optimization under inventory constraints},
  author={Lin, Qiulin and Yi, Hanling and Pang, John and Chen, Minghua and Wierman, Adam and Honig, Michael and Xiao, Yuanzhang},
  journal={Proceedings of the ACM on Measurement and Analysis of Computing Systems},
  volume={3},
  number={1},
  pages={1--28},
  year={2019},
  publisher={ACM New York, NY, USA}
}

@article{mitzenmacher2022algorithms,
  title={Algorithms with predictions},
  author={Mitzenmacher, Michael and Vassilvitskii, Sergei},
  journal={Communications of the ACM},
  volume={65},
  number={7},
  pages={33--35},
  year={2022},
  publisher={ACM New York, NY, USA}
}

@inproceedings{angelopoulos2022online,
  title={Online search with best-price and query-based predictions},
  author={Angelopoulos, Spyros and Kamali, Shahin and Zhang, Dehou},
  booktitle={Proceedings of the AAAI Conference on Artificial Intelligence},
  volume={36},
  number={9},
  pages={9652--9660},
  year={2022}
}

@article{angelopoulos2024overcoming,
  title={Overcoming Brittleness in Pareto-Optimal Learning-Augmented Algorithms},
  author={Angelopoulos, Spyros and D{\"u}rr, Christoph and Elenter, Alex and Lefki, Yanni},
  journal={arXiv preprint arXiv:2408.04122},
  year={2024}
}

\newpage
\appendix
{\LARGE{\textbf{Appendix}}}

\section{Case Study: Energy Trading}\label{appendix_case_study}
In this section, we empirically validate our theoretical results within the context of energy trading and demonstrate the performance of Algorithms \ref{alg_RBP} and \ref{alg_RBP_combine} under real market conditions across the four \OC scenarios: \OCK, \OCN, \OCU, and \OCP.

\subsection{Experimental Setup}
This section outlines the experimental setup used to evaluate the performance of our algorithms in the vehicle-to-grid (V2G) energy trading context.  We focus on the Tesla Model 3 for its well-documented specifications. The battery capacity ($k$) is set at 68 kWh in our simulations, reflecting the upper range of usable capacity (45–68 kWh, accounting for daily driving reserves) \cite{evdatabase_tesla_model_3}. The discharging rate is set to 72 kWh/h (or 6 kWh per 5-minute slot), aligned with the technical capabilities of the Tesla Model 3, which, although not officially equipped for bidirectional charging, is suitable for V2G applications. Market prices are derived from PJM frequency regulation data \cite{PJMRegulationPrices}, recorded at 5-minute intervals from August 11 to August 30, 2024 (i.e., 20 days in total). Each day is divided into 288 slots (12 per hour). Price bounds are set at $[p_{\min}, p_{\max}]$, with $p_{\min} = 5$ and $p_{\max} = 1000$, reflecting realistic V2G price fluctuations.

\subsection{Experimental Results}
The key performance metric we use is the \textit{Empirical Competitive Ratio} (\ECR), which measures the performance ratio of our algorithm over the offline optimal counterpart for a price sequence $ \sigma $. We then calculate the average \ECR over 20 days to identify trends in algorithm performance across the three \OC settings (\OCK, \OCN, \OCU) and to assess the effectiveness of the learning-augmented algorithm under \OCP with varying predictions of  $T$.

\paragraph{\bfseries Evaluation of the Unified Algorithm}
Figure \ref{fig:oc_three_settings} shows a comparison of the unified algorithm across these settings with $ b $, using $\theta = 200$ from the dataset. In Figure \ref{fig:sub_three_settings_1}, average \ECR is plotted against varying $T$ with fixed $b$, while Figure \ref{fig:sub_three_settings_2} shows \ECR as $b$ changes within $ (0, 60]$ (kWh/slot), and Figure \ref{fig:sub_three_settings_3} presents the cumulative distribution of \ECR. 

Based on Figure \ref{fig:oc_three_settings}, several key observations emerge. As $T$ increases, \OCU performance improves while \OCK and \OCN decline, narrowing the gap between them. This is because, with larger $T$, the forced trading phase in \OCK and \OCN becomes less relevant, leading to similar behaviors in all settings. Additionally, the algorithm consistently performs better as $b$ increases, suggesting that maximizing $b$ for energy trading is beneficial in practice. \OCK and \OCN show similar performance across all subfigures, indicating that receiving a timely notification can be as effective as knowing the total horizon $T$, while \OCU always performs the worst among the three settings due to a lack of horizon information.

\begin{figure*}
    \centering
    \begin{subfigure}[b]{0.32\textwidth}
        \centering
        \includegraphics[width=\textwidth]{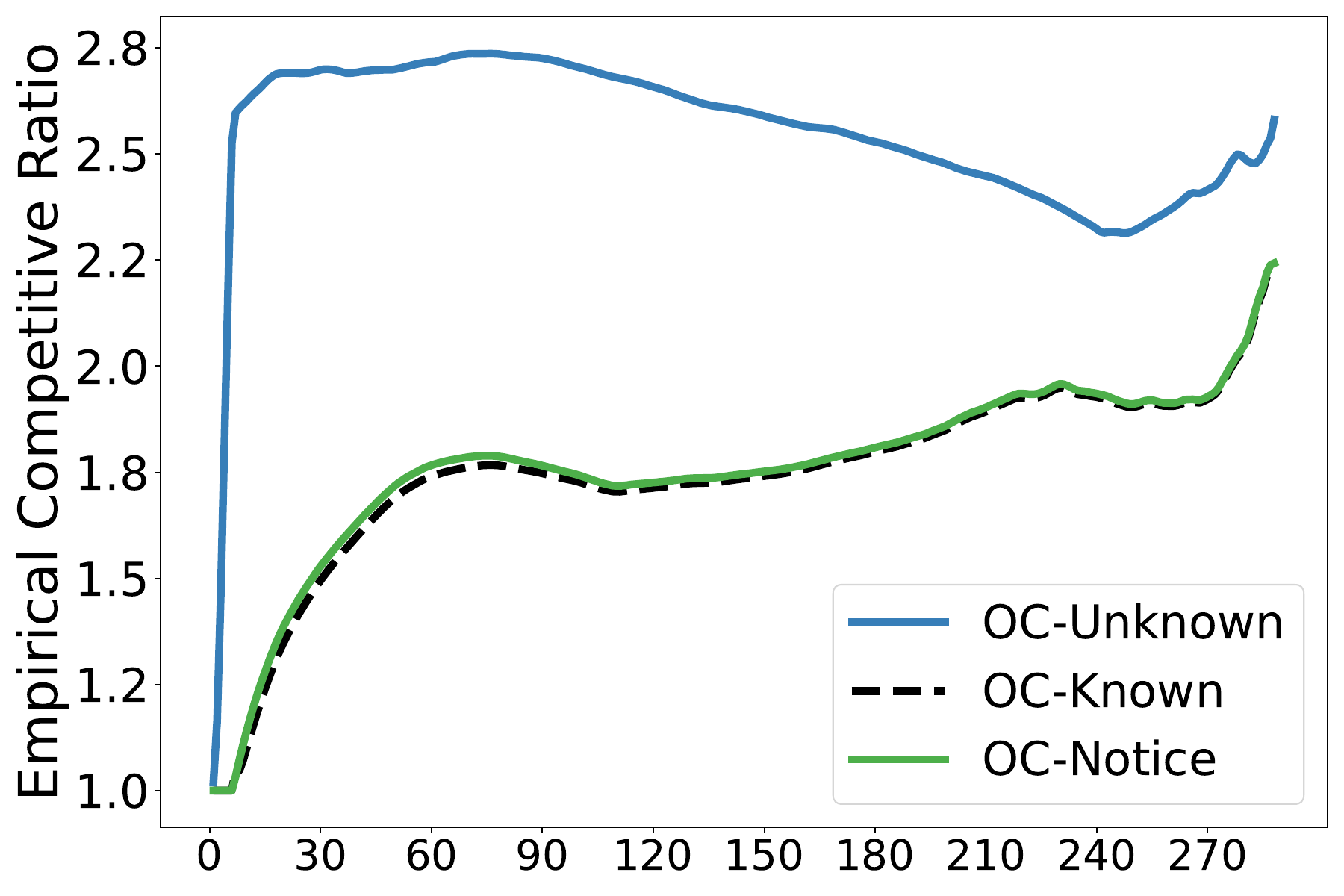}
        \caption{Averaged \ECR over time horizon $T$ (time slots)}
        \label{fig:sub_three_settings_1}
    \end{subfigure}
    \hfill
    \begin{subfigure}[b]{0.32\textwidth}
        \centering
        \includegraphics[width=\textwidth]{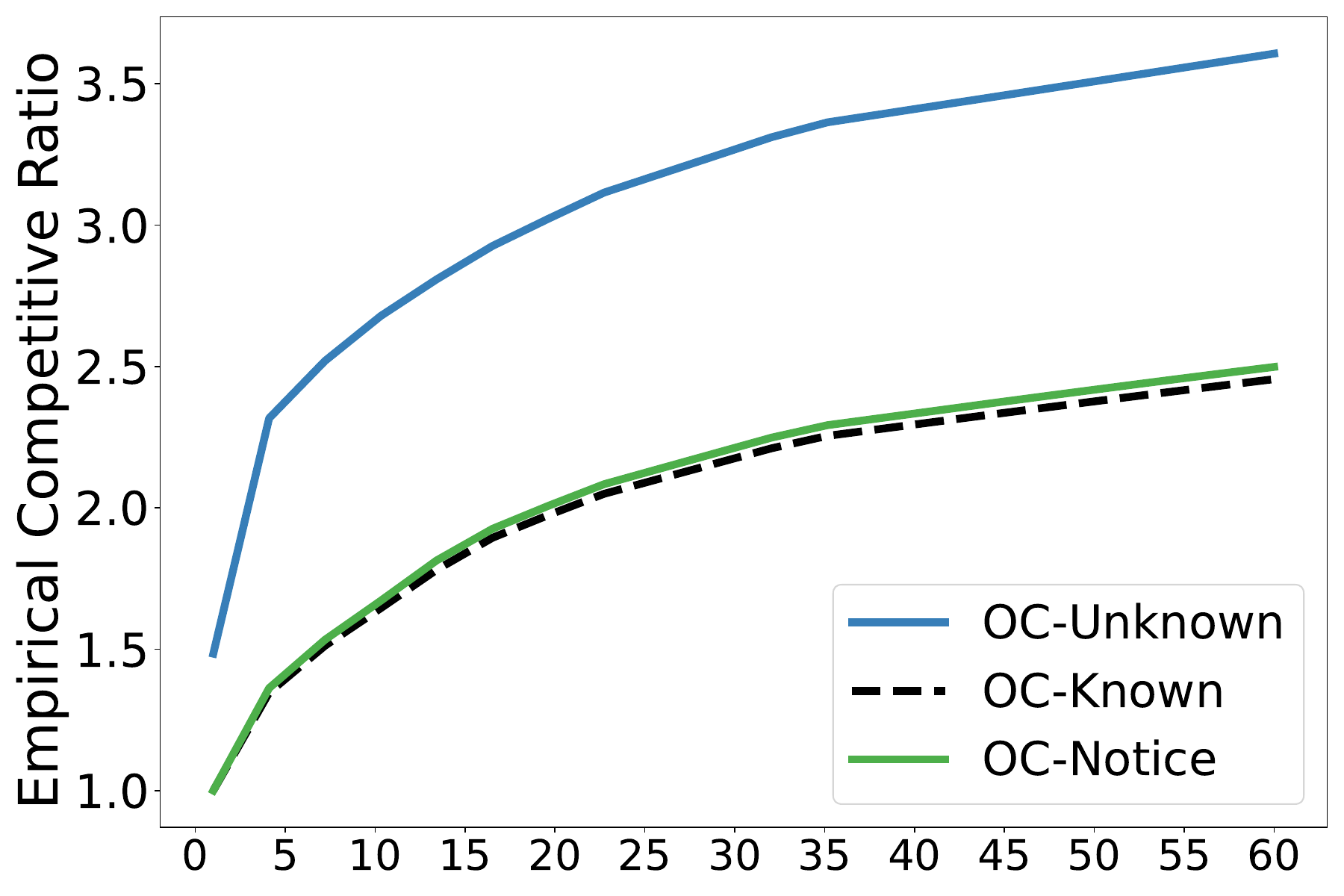}
        \caption{Averaged \ECR vs rate limit $b$ (kWh/slot)}
        \label{fig:sub_three_settings_2}
    \end{subfigure}
    \hfill
    \begin{subfigure}[b]{0.32\textwidth}
        \centering
        \includegraphics[width=\textwidth]{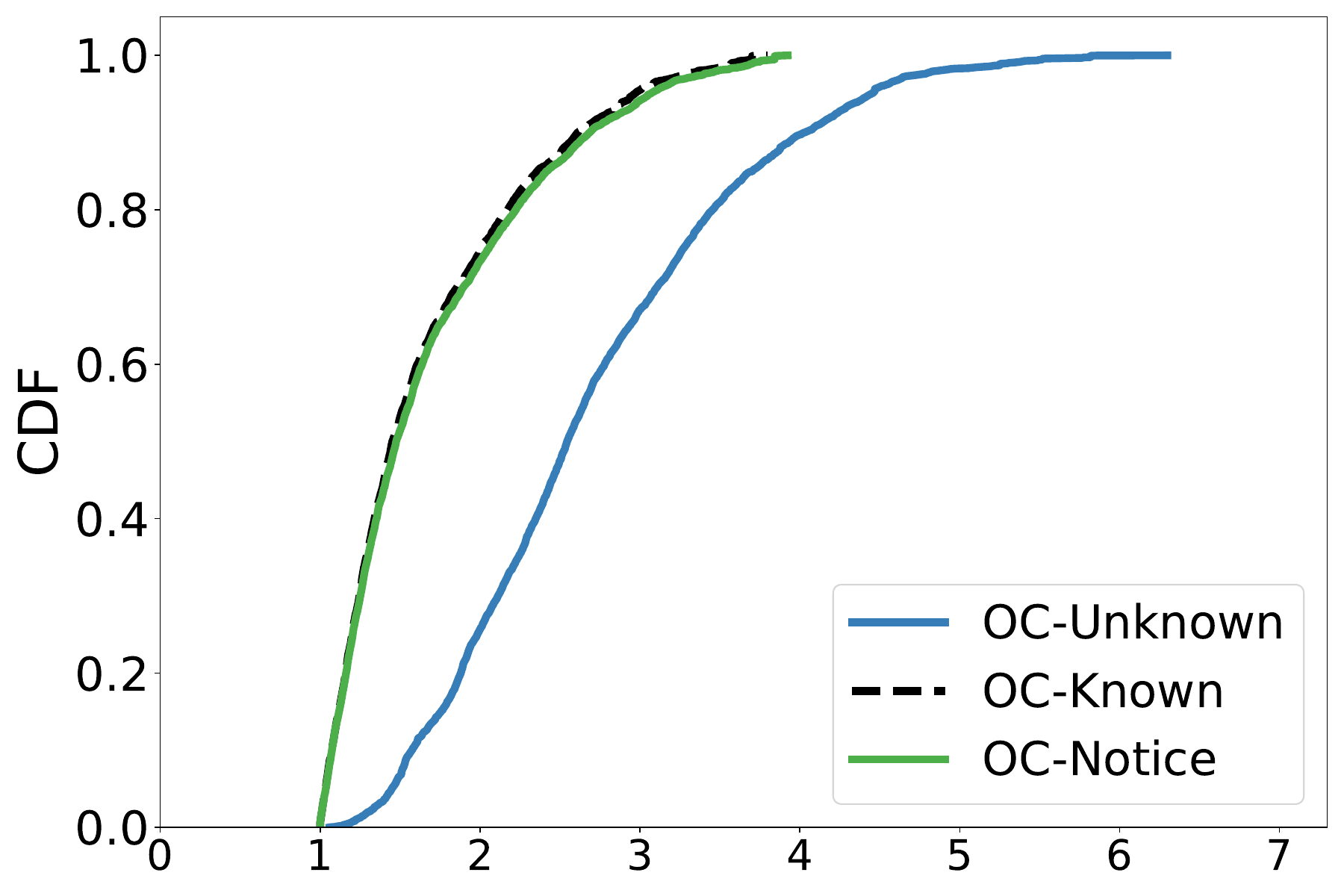}
        \caption{CDF plot vs \ECR  to check the distribution over horizon settings}
        \label{fig:sub_three_settings_3}
    \end{subfigure}
    \caption{Comparison of the unified algorithm across three settings with rate limit.}
    \label{fig:oc_three_settings}
\end{figure*}

\begin{figure*}
    \centering
    \begin{subfigure}[b]{0.32\textwidth}
        \centering
        \includegraphics[width=\textwidth]{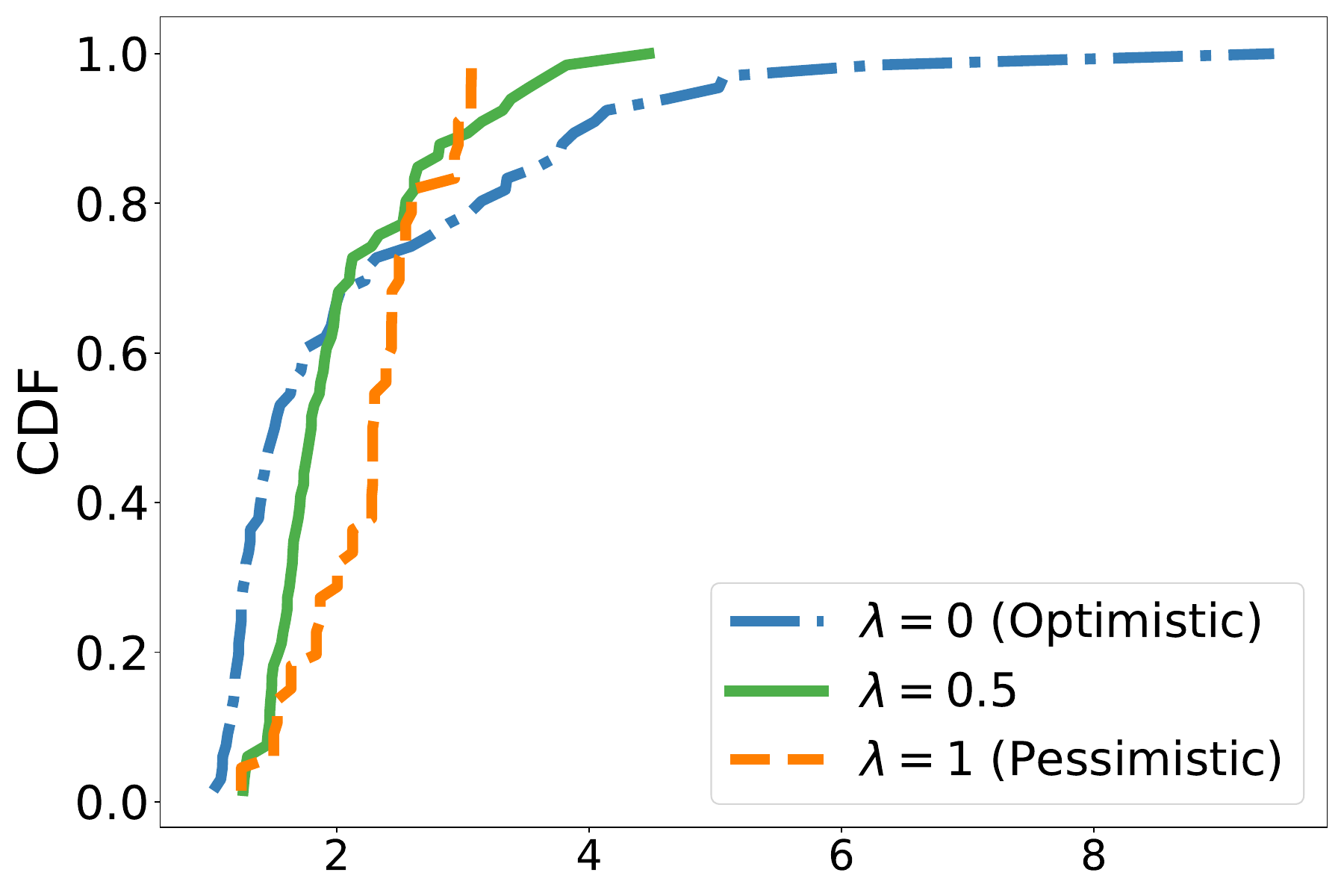}
        \caption{CDF plot vs \ECR ($T_{\textsf{pred}}=0.5T$)}
        \label{fig:cdf_ECR_lambda_predictions}
    \end{subfigure}
    \hfill
    \begin{subfigure}[b]{0.32\textwidth}
        \centering
        \includegraphics[width=\textwidth]{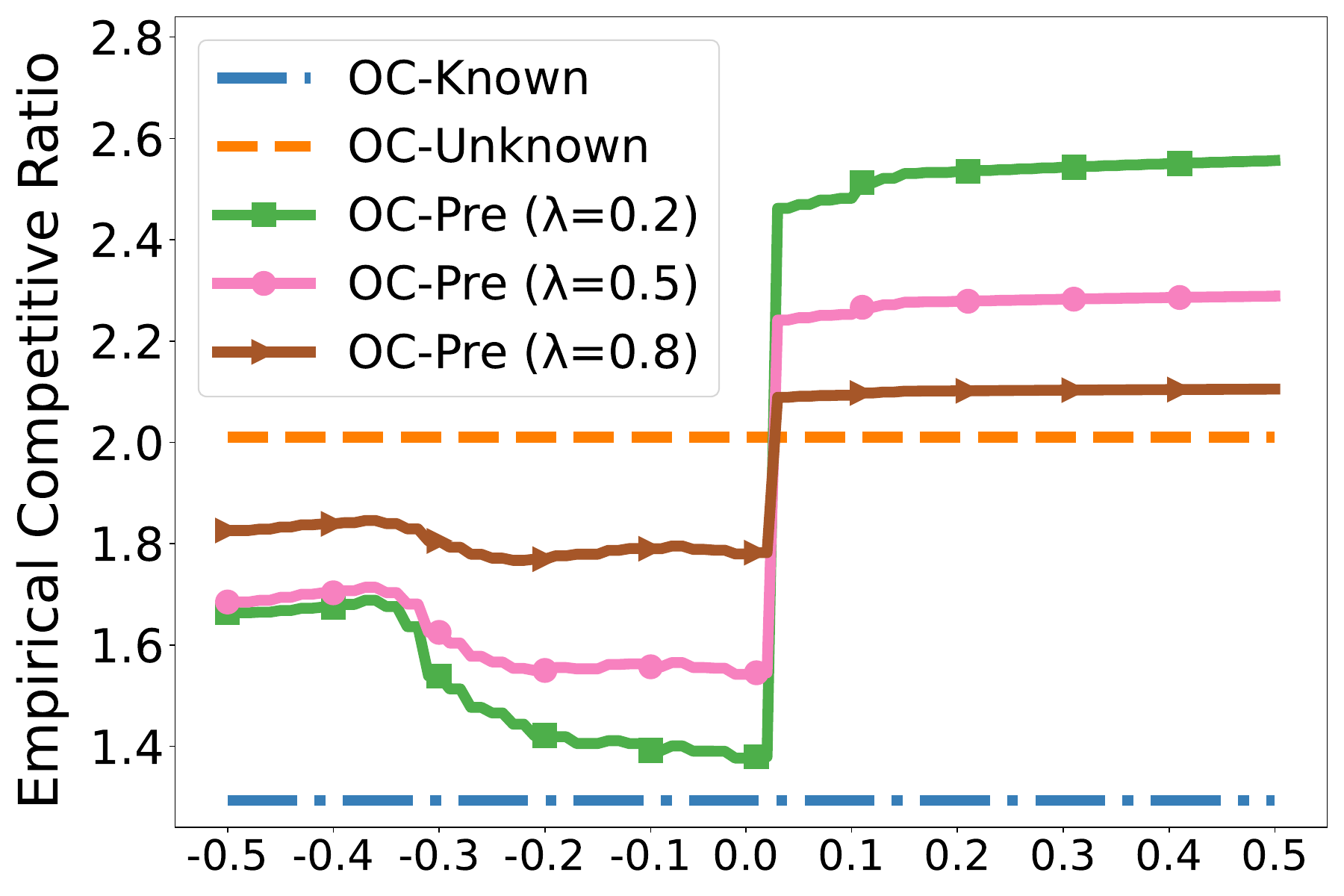}
        \caption{\ECR vs predictions error}
        \label{fig:ECR_predictions}
    \end{subfigure}
    \hfill
    \begin{subfigure}[b]{0.32\textwidth}
        \centering
        \includegraphics[width=\textwidth]{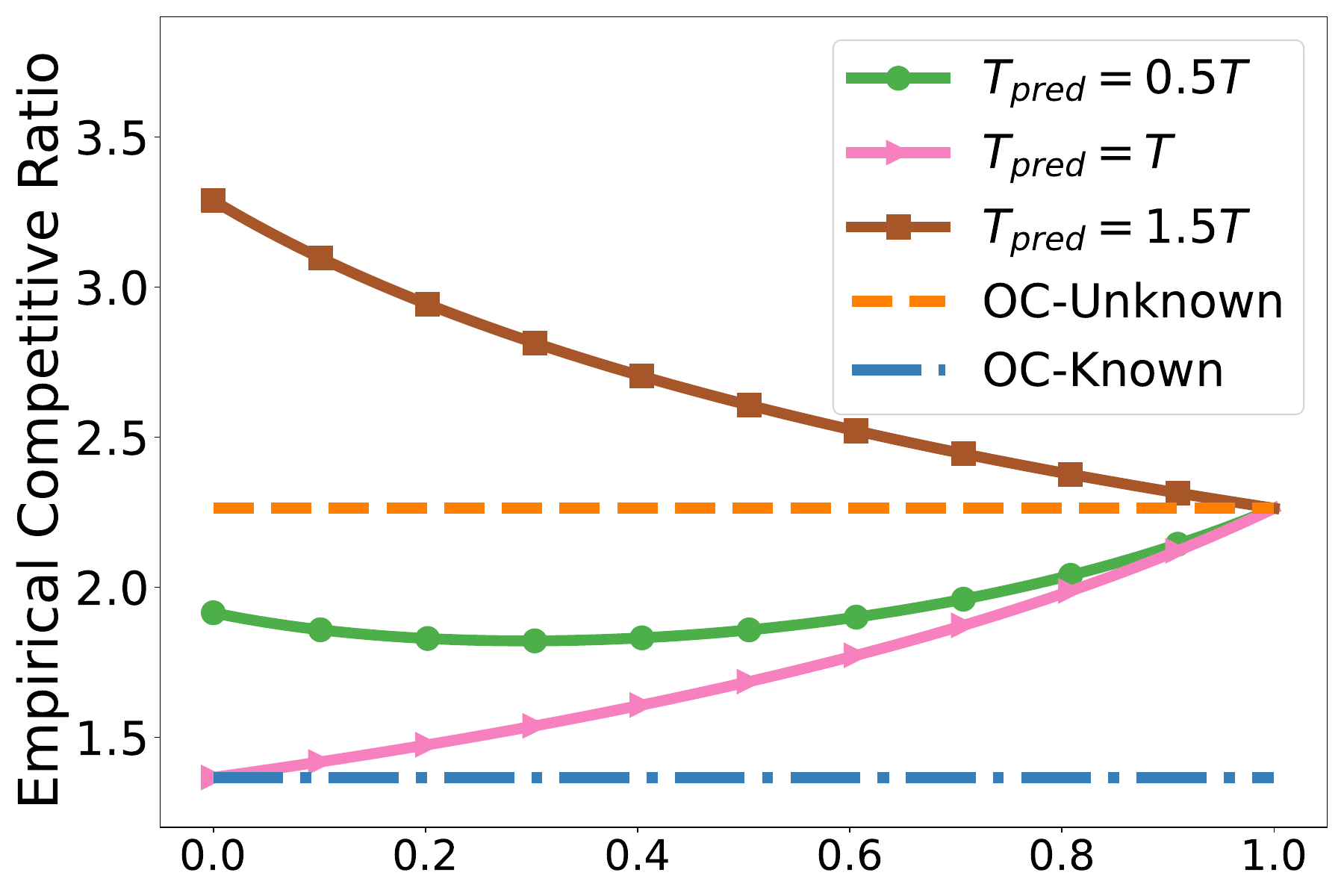}
        \caption{\ECR vs $\lambda$}
        \label{fig:ECR_lambda_predictions}
    \end{subfigure}
    \caption{Comparison of the learning-augmented algorithm across different settings of $T_{\textsf{pred}}$ and $\lambda$.}
    \label{fig:plots_summary_midnight}
\end{figure*}

\begin{figure}
    \centering
    \vspace{-0.2cm} 
    \begin{subfigure}[b]{0.9\textwidth}
        \centering
        \includegraphics[width=\textwidth]{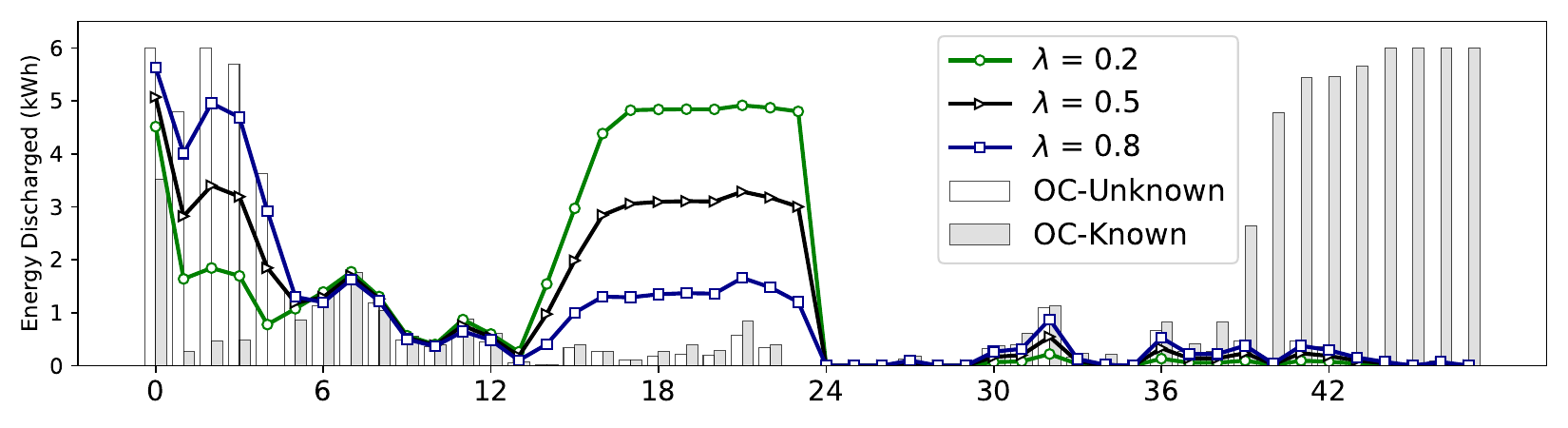}
        \caption{Averaged energy traded vs time slot $t$ (under prediction: $T_{\textsf{pred}} = 0.5T$)}
        \label{fig:energy_discharge_key1}
    \end{subfigure}
    \vspace{0.2cm} 
    \begin{subfigure}[b]{0.9\textwidth}
        \centering
        \includegraphics[width=\textwidth]{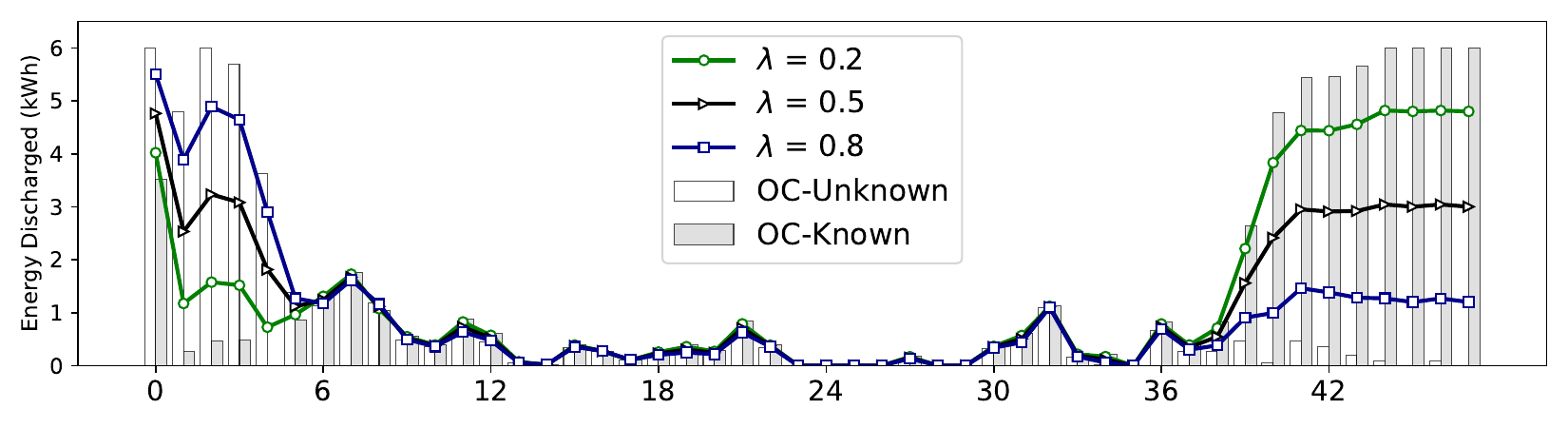}
        \caption{Averaged energy traded vs time slot $t$ (accurate prediction: $T_{\textsf{pred}} = T$)}
        \label{fig:energy_discharge_key2}
    \end{subfigure}
    \vspace{0.2cm} 
    \begin{subfigure}[b]{0.9\textwidth}
        \centering
        \includegraphics[width=\textwidth]{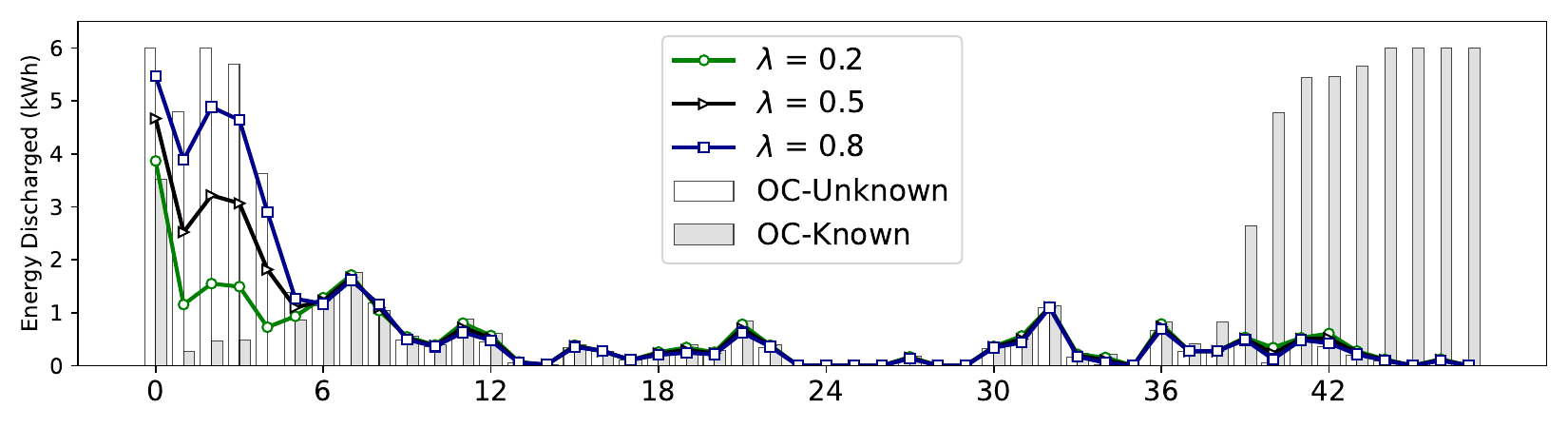}
        \caption{Averaged energy traded vs time slot $t$ (over prediction: $T_{\textsf{pred}} = 1.5T$)}
        \label{fig:energy_discharge_key3}
    \end{subfigure}
    \caption{Performance of Algorithm \ref{alg_RBP} under various $\lambda$ and $T_{\textsf{pred}}$.}
    \label{fig:plots_summary_midnight_1}
\end{figure}

\paragraph{\bfseries Evaluation of the Learning-Augmented Algorithm} Figure \ref{fig:plots_summary_midnight} summarizes the \ECR results for different horizon predictions and values of $\lambda$.

In Figure \ref{fig:cdf_ECR_lambda_predictions}, we compare \OCP (with $T_{\textsf{pred}}=0.5T$) across three settings of $\lambda$ ($[0, 0.5, 1]$) using a CDF plot. The learning-augmented algorithm demonstrates solid average performance across all settings. While the optimistic case ($\lambda=0$) shows a long tail, indicating some risk, it still provides a reasonable worst-case guarantee, balancing both robustness and consistency effectively.

Figure \ref{fig:ECR_predictions} presents the \ECR trends for \OCK, \OCU, and \OCP (with varying $\lambda$) against prediction error, defined as $ \frac{T_{\textsf{pred}} - T}{T}$. A negative error means the horizon is underestimated, while a positive error indicates overestimation. At zero error ($T_{\text{pred}} = T$), a dramatic shift is observed. On the left side (underestimation), the performance of \OCP with increasing $\lambda$ improves, converging toward \OCK. Conversely, on the right side (overestimation), performance worsens as $\lambda$ decreases, converging toward \OCU. Interestingly, on the right side, increasing prediction error does not seem to affect performance significantly, which implies the robustness of the algorithm. On the left side, larger prediction errors degrade \OCP's performance, moving it closer to \OCU. Notably, for large negative errors, \OCP with smaller $\lambda$ performs similarly to $\lambda = 0.5$, as the algorithm struggles to improve with reduced horizon predictions. However, the gap between small and larger $\lambda$ (e.g., $\lambda = 0.8$) persists, suggesting that placing too little trust in predictions makes the algorithm less sensitive to smaller horizons.

In Figure \ref{fig:ECR_lambda_predictions}, the average \ECR for varying $\lambda$ under different horizon predictions is shown. As $\lambda$ approaches 1, performance converges to \OCU, while for $\lambda = 0$, the algorithm behaves similarly to \OCK when predictions are accurate. In cases of horizon overestimation, performance worsens, but improves as $\lambda$ increases, converging toward \OCU. This implies that overly small $\lambda$ values should be avoided. For horizon underestimation, results form a curve, with the best performance occurring at intermediate $\lambda = 0.4$, suggesting that a balanced choice of $\lambda$ can s help enhance algorithm performance.

\paragraph{\bfseries Comparison of Energy Trading Profiles} 
Figure \ref{fig:plots_summary_midnight_1} shows the average energy traded (in kWh) at each time slot for different values of $\lambda$ and predicted horizons $T_{\textsf{pred}}$. Lines represent decisions based on Algorithm \ref{alg_RBP_combine}, while white and grey bars indicate the behavior under \OCU and \OCK settings, respectively. The subfigures compare performance across predictions of $T$. The over-predicting case (Figure \ref{fig:energy_discharge_key3}) with $ T_{\textsf{pred}} =1.5 T$ leads to poor performance, as the algorithm reserves energy for future high prices that never occur, missing better prices earlier. In contrast, under-predicting case (Fig. \ref{fig:energy_discharge_key1}) with $ T_{\textsf{pred}} =0.5 T$ results in better performance by capitalizing on good prices earlier. Within each subfigure, the impact of $\lambda$ is also evident. As $\lambda$ increases, the algorithm relies less on the predicted $T$, shifting performance from \OCK (closer to grey bars) to \OCU (closer to white bars).

\paragraph{\bfseries Comparison of Cumulated Revenue}
To further understand how the learning-augmented algorithm operates, Figure \ref{fig:oc_across_T_bar} compares the cumulated revenue across different predictions of $T$, with $ \lambda = 0.5 $. Notably, while the algorithm with underestimated $T$ generates good revenue early on, the one with accurate predictions ultimately achieves the highest revenue. In contrast, overestimation consistently results in the worst performance.

\begin{figure}[htbp]
    \centering
    \includegraphics[width=0.6\textwidth]{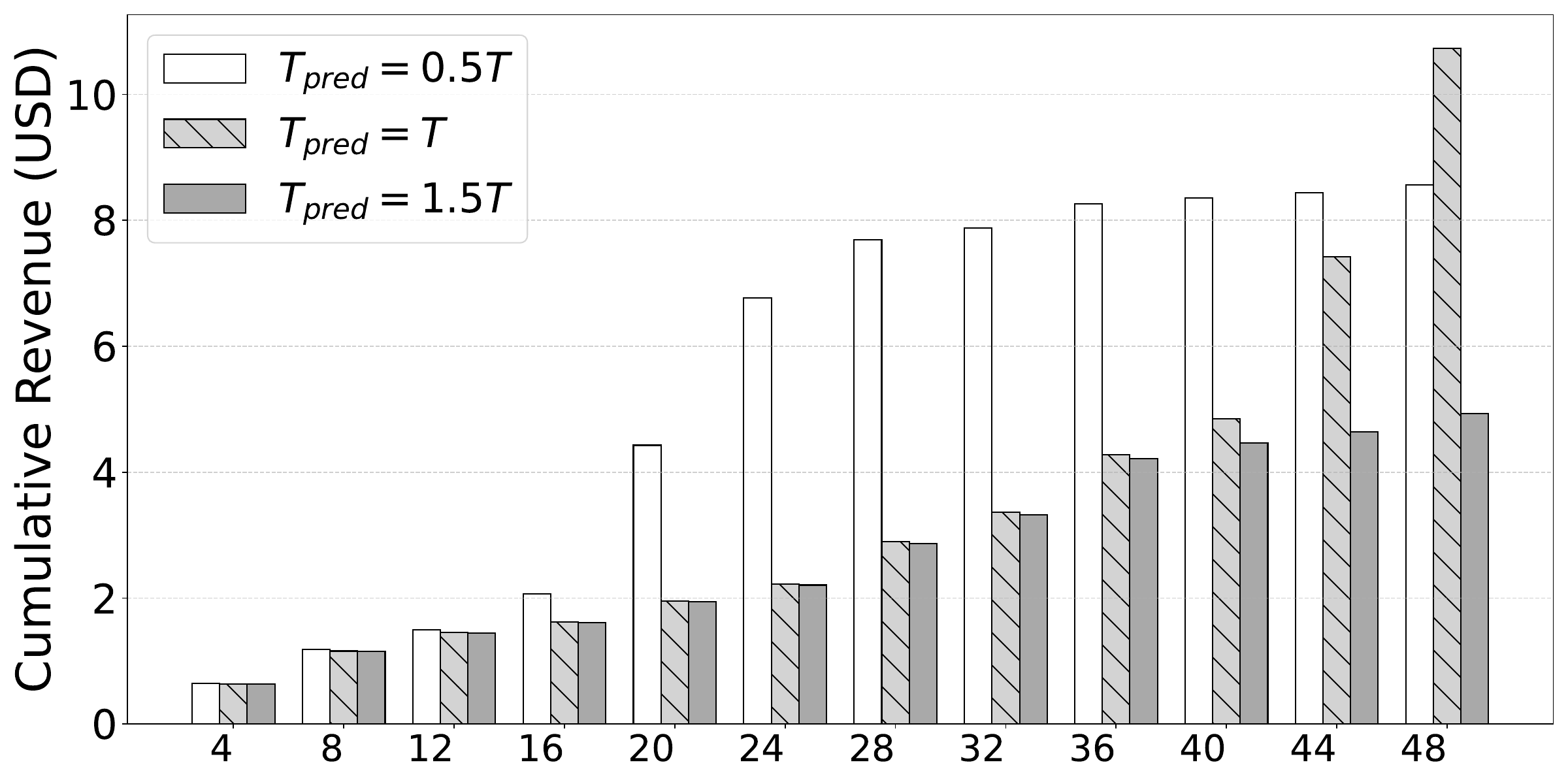}
    \caption{Cumulative revenue for the learning-augmented algorithm vs time slot $t$. This figure compares the cumulative revenue over time under three different predictions of $T$, with $\lambda = 0.5$. The plot highlights how revenue accumulation changes based on the accuracy of predicted time horizons, particularly when the predicted $T$ is larger than the actual horizon.}
    \label{fig:oc_across_T_bar}
\end{figure}

\section{Proof of Theorem \ref{thm:oc_known_notice} for \OCK and \OCN without Box Constraints}
\label{sec:proof_of_thm_known_notice}

This section presents the proofs for Theorem \ref{thm:oc_known_notice} \textit{without} considering box constraints. The remaining proofs for \OCK and \OCN \textit{with} box constraints are provided in Appendix \ref{sec:proof_of_theorem_CR_box_constraint} and Appendix \ref{sec:proof_of_theorem_CR_box_constraint_notice}, respectively. For simplicity of notation, we omit $ ( {F}_t,  \alpha) $ and directly write the pseudo-cost function $ \phi_t(x_t|  {F}_t,\alpha) $ as $ \phi_t(x_t) $.

\subsection{Some Useful Lemmas of the Pseudo-Cost Function}
\label{sec:property_of_phi}
Based on the definition of $ \bm{\phi} $ in Eq. \eqref{eq_def_phi_t}, the increment from $  \phi_{t-1}(x_{t-1}) $ to $  \phi_t(x_t) $ can be related to the difference between $  \phi_t(x_t) $ and $ p_t $. This property is formally stated  in Lemma \ref{lem:difference_phi} below.

\begin{lemma} 
\label{lem:difference_phi}
With given $ \alpha $ and $  {F}_t = ( {x}_{t-1}, \sigma_t)$, the difference between $  \phi_t(x_t) $ and  $  \phi_{t-1}(x_{t-1}) $ satisfies:
\begin{align*}
     \phi_t(x_t) - \phi_{t-1}(x_{t-1}) = \frac{\alpha}{k} x_t[ \phi_t(x_t ) - p_{\min}] .
\end{align*}
\end{lemma}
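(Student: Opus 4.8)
The plan is to prove the identity by direct substitution of Definition~\ref{def_price}, since both sides are explicit algebraic expressions in the historical allocations and no inequality or limiting argument is involved. First I would unfold the two pseudo-cost values. Writing $P_{t-1} := \prod_{i=1}^{t-1}(1 - \alpha x_i/k)$ for brevity, substituting $\beta = x_t$ into Eq.~\eqref{eq_def_phi_t} gives
\[
\phi_t(x_t) = p_{\min} + \frac{(\alpha-1)p_{\min}}{(1 - \alpha x_t/k)\, P_{t-1}}.
\]
The key observation is that evaluating $\phi_{t-1}$ at its own allocation $\beta = x_{t-1}$ absorbs the factor $(1 - \alpha x_{t-1}/k)$ into the product $\prod_{i=1}^{t-2}(1-\alpha x_i/k)$, so the denominator of $\phi_{t-1}(x_{t-1})$ collapses to exactly $P_{t-1}$; that is,
\[
\phi_{t-1}(x_{t-1}) = p_{\min} + \frac{(\alpha-1)p_{\min}}{P_{t-1}}.
\]
This telescoping of the product indices is the crux of the whole computation — it is what lets the two pseudo-costs at consecutive steps share a common denominator $P_{t-1}$ and differ only through the single extra factor $(1 - \alpha x_t/k)$.

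Second, I would subtract the two expressions. The additive $p_{\min}$ cancels, leaving
\[
\phi_t(x_t) - \phi_{t-1}(x_{t-1}) = \frac{(\alpha-1)p_{\min}}{P_{t-1}}\left[\frac{1}{1 - \alpha x_t/k} - 1\right] = \frac{(\alpha-1)p_{\min}}{P_{t-1}}\cdot \frac{\alpha x_t/k}{1 - \alpha x_t/k},
\]
using the elementary rewrite $\tfrac{1}{1-u}-1 = \tfrac{u}{1-u}$ with $u = \alpha x_t/k$. Finally I would match this against the claimed right-hand side: reading $\phi_t(x_t) - p_{\min} = \frac{(\alpha-1)p_{\min}}{(1 - \alpha x_t/k)\, P_{t-1}}$ directly off the first display, one sees that $\frac{\alpha}{k} x_t\,[\phi_t(x_t) - p_{\min}]$ reproduces exactly the expression above, which closes the identity.

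I do not anticipate any genuine obstacle here, as the statement is a pure algebraic identity rather than an estimate. The only point requiring care is the index bookkeeping when evaluating $\phi_{t-1}$ at $x_{t-1}$, which is precisely what produces the shared denominator $P_{t-1}$; everything else is cancellation of the $p_{\min}$ term and the rational-function rewrite. I would therefore present the argument compactly as the three displays above.
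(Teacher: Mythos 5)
Your proof is correct and is essentially identical to the paper's own argument: both subtract the two pseudo-costs over the common denominator $\prod_{i=1}^{t-1}(1-\alpha x_i/k)$, apply the rewrite $\frac{1}{1-u}-1=\frac{u}{1-u}$, and identify the result with $\frac{\alpha}{k}x_t[\phi_t(x_t)-p_{\min}]$. Your explicit remark that evaluating $\phi_{t-1}$ at $\beta=x_{t-1}$ absorbs the factor $(1-\alpha x_{t-1}/k)$ into the product is exactly the index bookkeeping the paper leaves implicit, so nothing further is needed.
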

\begin{proof}
Based on Eq. \eqref{eq_def_phi_t}, we have
\begin{align*}
    & \phi_t(x_t) - \phi_{t-1}(x_{t-1}) \\
   =\ & \left[\frac{1}{1-\frac{\alpha}{k}  x_t} - 1\right] \cdot \frac{ (\alpha - 1) p_{\min} }{\prod_{i=1}^{t-1}(1 - \frac{\alpha}{k} x_i)}\\
   =\ & \frac{\alpha}{k} x_t \cdot  \frac{(\alpha - 1) p_{\min}}{\prod_{i=1}^{t}(1 - \frac{\alpha}{k}  x_i)}\\
   =\ &\frac{\alpha}{k} x_t [\phi_t(x_t) -p_{\min}].
\end{align*}
Lemma \ref{lem:difference_phi} thus follows.
\end{proof}

Another useful property is that,  when running Algorithm \ref{alg_RBP} with pseudo-cost function $ \bm{\phi} $, there is always an achievable $ x^*_t $ which satisfies $ p_t = \phi_t(x^*_t)$ at step $ t $, provided that $ p_t > \phi_{t-1}(x^*_{t-1}) $ holds.

\begin{lemma} 
\label{lemma_achievable_price}
When running Algorithm \ref{alg_RBP} with pseudo-cost function $ \bm{\phi} $ and $ \alpha>1 $, at each time $ t\in[T] $, if $ p_t > \phi_{t-1}(x^*_{t-1}) $, then there is always 
\begin{align}\label{eq_alg_x_t}
x^*_t = \frac{k}{\alpha } \left( 1-\frac{ \phi_{t-1}(x^*_{t-1})-p_{\min}}{p_t-p_{\min}} \right).
\end{align} 
\end{lemma}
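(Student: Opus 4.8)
The plan is to characterize $x_t^*$ through the first-order optimality condition of the pseudo-revenue maximization in Eq.~\eqref{pseudo_revenue} and then solve the resulting equation explicitly using the recursive structure of the pseudo-cost function. Since this appendix treats the setting without binding box constraints (so $b \ge k$ and the cap $x_t \le b$ is slack), the relevant object is the maximizer of $g(x_t) = p_t x_t - \int_0^{x_t}\phi_t(\beta)\,d\beta$, whose derivative is $g'(x_t) = p_t - \phi_t(x_t)$. Because the factor $1-\alpha\beta/k$ in the denominator of Eq.~\eqref{eq_def_phi_t} is decreasing in $\beta$, the pseudo-cost $\phi_t(\beta)$ is strictly increasing, so $g$ is concave and the stationarity condition $p_t = \phi_t(x_t^*)$ uniquely determines the maximizer whenever it is interior.

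Next I would show that the hypothesis $p_t > \phi_{t-1}(x_{t-1}^*)$ forces an interior solution. The key observation is that evaluating Eq.~\eqref{eq_def_phi_t} at $\beta = 0$ gives $\phi_t(0) = p_{\min} + (\alpha-1)p_{\min}/\prod_{i=1}^{t-1}(1-\alpha x_i^*/k) = \phi_{t-1}(x_{t-1}^*)$; that is, the pseudo-cost is continuous across steps at the origin. Hence $g'(0) = p_t - \phi_t(0) = p_t - \phi_{t-1}(x_{t-1}^*) > 0$, ruling out the left corner $x_t^* = 0$. Moreover $\phi_t(\beta) \to +\infty$ as $\beta \uparrow k/\alpha$, so the unique root of $p_t = \phi_t(x_t^*)$ lies in $(0, k/\alpha)$, which is strictly below $b \ge k \ge k/\alpha$; this rules out the right corner and confirms that the stationarity condition governs.

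Finally I would solve $p_t = \phi_t(x_t^*)$ for $x_t^*$. Writing $P_{t-1} = \prod_{i=1}^{t-1}(1-\alpha x_i^*/k)$, the definition gives $p_t - p_{\min} = (\alpha-1)p_{\min}/[(1-\alpha x_t^*/k)P_{t-1}]$, while the same definition at the previous step (using the identity above) gives $\phi_{t-1}(x_{t-1}^*) - p_{\min} = (\alpha-1)p_{\min}/P_{t-1}$. Dividing these two identities eliminates the common factor $(\alpha-1)p_{\min}/P_{t-1}$ and yields $1 - \alpha x_t^*/k = [\phi_{t-1}(x_{t-1}^*) - p_{\min}]/(p_t - p_{\min})$, which rearranges immediately into Eq.~\eqref{eq_alg_x_t}.

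The difficulty here lies not in the algebra, which collapses to a single division, but in the two justification steps surrounding it: establishing that the unconstrained stationarity condition is the operative one (concavity of $g$ via monotonicity of $\phi_t$) and that the stated hypothesis excludes both corner solutions. The continuity identity $\phi_t(0) = \phi_{t-1}(x_{t-1}^*)$ is the linchpin of the second step, and I note it is exactly the $x_t \to 0$ limit of Lemma~\ref{lem:difference_phi}, so the two results are mutually consistent and can be cross-referenced.
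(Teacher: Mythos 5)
Your proof is correct and takes essentially the same route as the paper's: both derive Eq.~\eqref{eq_alg_x_t} from the first-order optimality condition $p_t = \phi_t(x_t^*)$ of Eq.~\eqref{pseudo_revenue} and solve algebraically by cancelling the common factor $(\alpha-1)p_{\min}/\prod_{i=1}^{t-1}(1-\alpha x_i^*/k)$. Your additional interiority argument (strict concavity of the pseudo-revenue via monotonicity of $\phi_t$, the cross-step identity $\phi_t(0)=\phi_{t-1}(x_{t-1}^*)$, and the blow-up of $\phi_t$ as $\beta \uparrow k/\alpha$) simply makes explicit what the paper compresses into its closing observation that $\frac{\phi_{t-1}(x_{t-1}^*)-p_{\min}}{p_t-p_{\min}} \in (0,1]$ guarantees an achievable $x_t^* \in [0, k/\alpha)$.
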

\begin{proof} 
By the first-order optimality condition of Eq. \eqref{pseudo_revenue},
\begin{align*}
p_t = \ &  \phi_t(x^*_t).
\end{align*}
Thus, we have
\begin{align*}
p_t=\ & p_{\min} + \frac{\alpha p_{\min} - p_{\min}}{(1- \frac{\alpha}{k}  x^*_t) \prod_{i=1}^{t-1}(1 - \frac{\alpha}{k}  x_i^*)},
\end{align*}
which leads to
\begin{align*}
1-\frac{\alpha}{k}  x^*_t =\ & \frac{1}{p_t-p_{\min}} \cdot \frac{\alpha p_{\min} - p_{\min}}{\prod_{i=1}^{t-1}(1-\frac{\alpha}{k}  x_i^*)}\\
=\ & \frac{1}{p_t-p_{\min}} \cdot ( \phi_{t-1}(x^*_{t-1})-p_{\min}).
\end{align*}
Solving the above equation leads to
\begin{align*}
x^*_t =\ & \frac{1-( \phi_{t-1}(x^*_{t-1})-p_{\min})/(p_t-p_{\min})}{\alpha  /k} \nonumber \\
=\ & \frac{k}{\alpha }  \left( 1-\frac{ \phi_{t-1}(x^*_{t-1})-p_{\min}}{p_t-p_{\min}} \right).
\end{align*}
Note that $ p_t \ge  \phi_{t-1}(x^*_{t-1})> p_{\min} $, so $ \frac{ \phi_{t-1}(x^*_{t-1})-p_{\min}}{p_t-p_{\min}}\in (0,1] $, which means that given $ \alpha >1 $, there is always an achievable $ x_t^*\in [0, \frac{1}{\alpha }) $ that maximizes Eq. \eqref{pseudo_revenue} with $ p_t= \phi_{t}(x^*_{t}) $.  Therefore, Lemma \ref{lemma_achievable_price} follows.
\end{proof}

Combining the results of Lemma \ref{lem:difference_phi} and Lemma \ref{lemma_achievable_price}, we can show that when running Algorithm \ref{alg_RBP} with pseudo-cost function $ \bm{\phi} $ that follows Definition \ref{def_price}, there is  
\begin{align}\label{eq_diff_phi}
     \phi_{t}(x^*_{t}) -  \phi_{t-1}(x^*_{t-1}) = \frac{\alpha}{k}  \left(p_t x^*_t - p_{\min} x^*_t  \right).
\end{align}
Since in this section we do not consider the effect of $ b $, $ \bar{x}_t = x_t^* $ unless $ t = T $. We emphasize that this property is important for the proof of Theorem \ref{thm:oc_known_notice} via the \textit{online primal-dual}(OPD) approach.

\subsection{Proof of Upper Bound via OPD}\label{proof_upper}
Recall that based on the dual problem \eqref{dual_OC_max}, the dual variable $ \mu_t $ is redundant when there is no box constraint. Thus we start by considering a feasible solution $ \varphi_T $ to the dual problem \eqref{dual_OC_max}: 
\begin{align}\label{eq_lambda_hat}
   \varphi_T =  \phi_{T}(x^*_{T}), 
\end{align}
where $ \phi_{T}(x^*_{T}) = p_{\min} + \frac{\alpha p_{\min} - p_{\min}}{\prod_{i=1}^{T}(1 - \frac{\alpha}{k} x_i^*)}$. 

It is easy to show 
\begin{align}
     \phi_{T}(x^*_{t}) \ge  \phi_{t}(x^*_{t}) \ge p_t,
\end{align}
where we use the property that $  \phi_{t}(x^*_{t}) $ is monotonically increasing in $ t $. 

Based on the feasible design of $ \varphi_T $ given by Eq. \eqref{eq_lambda_hat}, we define $ \varphi_t = \phi_{t}(x^*_{t}) $ and prove Theorem \ref{thm:oc_known_notice} based on the celebrated OPD approach in the following two steps. 

\paragraph{\bfseries Step 1: Analysis of Primal-Dual Objectives}
By $ \PRM_{\boldsymbol{\phi}} $, we denote the primal and dual objectives at each step $ t \in [T] $ by $ P_t = \sum_{i=1}^t p_ix^*_i $ and $ D_t =  k \phi_{t}(x^*_{t}) $, respectively. It is easy to see that the performance of our online algorithm $ \PRM_{\boldsymbol{\phi}} $ can be represented as follows: 
\begin{align*}
    \PRM_{\boldsymbol{\phi}}(\sigma|\alpha) =    P_T = \sum_{t=1}^T p_t x^*_t.
\end{align*}

At any step $ t \in [T-1] $, the change of the primal objective under  $ \PRM_{\boldsymbol{\phi}} $ is:
\begin{align*}
    \Delta_{\textsf{Primal}}[t] = P_t - P_{t-1} = p_t x_t^*.
\end{align*}
At the last step $ T $, the change of the primal objective is:
\begin{align*}
    \hat{\Delta}_{\textsf{Primal}}[T] = p_T \bar{x}_T = p_T x_T^* + \left(k - \sum_{t=1}^T x^*_t\right)p_T,
\end{align*}
where we denote by $ \bar{x}_T $ the real quantity converted in the last step $ T $, i.e., $ \bar{x}_T = k - \sum_{t=1}^{T-1} x^*_t $. Moreover, for convenience, we define $ \Delta_{\textsf{Primal}}[T] = p_T x_T^* $ to make it consistent with the formula of $ \Delta_{\textsf{Primal}}[t] $ for $ t \in [T-1] $.

At any step $ t \in [T] $, the change of the dual objective under $ \PRM_{\boldsymbol{\phi}} $ can be calculated as follows:
\begin{align*}
    \Delta_{\textsf{Dual}} [t] & = D_t - D_{t-1}  \\
    &=   k\phi_{t}(x^*_{t}) -  k \phi_{t-1}(x^*_{t-1})\\
    &= \alpha \left[p_t x^*_t - p_{\min} x^*_t  \right],
\end{align*}
where the last equality is based on Eq. \eqref{eq_diff_phi}.
Thus, the following inequality holds 
\begin{align*}
\Delta_{\textsf{Dual}} [t] = \ & \alpha \Delta_{\textsf{Primal}} [t] - \alpha p_{\min} x^*_t  \\
\leq \ & \alpha \Delta_{\textsf{Primal}} [t], \forall t\in [T-1].
\end{align*}

Recall that we have $ \PRM_{\boldsymbol{\phi}}(\sigma|\alpha) =  P_T $. Also note that $ P_0 = 0 $ always holds. Thus, we have
\begin{align}
    \PRM_{\boldsymbol{\phi}}(\sigma|\alpha) \nonumber  = & P_T \\
    =    & \sum_{t=1}^T \Big( P_t - P_{t-1} \Big) + P_0  \nonumber \\
    =    &  \sum_{t=1}^T \Delta_{\textsf{Primal}} [t] +\left(k - \sum_{t=1}^T x^*_t\right)p_T/b_T \nonumber\\
    =    & \frac{1}{\alpha} \sum_{t=1}^T \left( \Delta_{\textsf{Dual}} [t]  + \alpha p_{\min} x^*_t \right) + \left(k - \sum_{t=1}^T  x^*_t\right)p_T \nonumber\\
    =   & \frac{1}{\alpha} \sum_{t=1}^T \Delta_{\textsf{Dual}} [t] +  p_{\min} \sum_{t=1}^T x^*_t + \left(k - \sum_{t=1}^T x^*_t\right)p_T \nonumber\\
    \overset{(i)}{\geq}    & \frac{1}{\alpha}  \sum_{t=1}^T \Big( D_t - D_{t-1} \Big) +  k p_{\min}  \nonumber\\
    =    & \frac{1}{\alpha} (D_T - D_0) + k p_{\min} \nonumber\\
    \geq & \frac{1}{\alpha} \OPT(\sigma) - \frac{1}{\alpha} D_0 + k p_{\min} \nonumber \\
    \overset{(ii)}{=}    & \frac{1}{\alpha} \OPT(\sigma), \label{eq_OPD}
\end{align}
where the inequality $(i)$ uses the fact that $ p_T \geq p_{\min} $, and the equality $(ii)$ is due to the fact that $  D_0 = k\phi_0(0) = k\alpha  p_{\min} $. 

\paragraph{\bfseries Step 2: Discussion of Primal-Dual Feasibility Conditions}
To guarantee that the above inequality $ \PRM_{\boldsymbol{\phi}}(\sigma|\alpha) \geq \frac{1}{\alpha} \OPT(\sigma) $ holds for all $ \sigma $, the primal feasibility \begin{align*}
    \sum_{i=1}^t x_i^* \leq k
\end{align*}  
and dual feasibility 
\begin{align*}
    \varphi_t =  \phi_{t}(x^*_{t}) \geq p_t 
\end{align*}
must hold at each step $ t\in[T] $. Under Algorithm \ref{alg_RBP}, the dual variable $ \varphi $ is designed to be $ \varphi_{\ell} = \phi_{\ell}(x^*_\ell) $, and $ \phi_t $ is a non-decreasing function in $ t\in [\ell] $, where $ \ell \in [T] $ is the step by which the conversion has been completed, i.e., $ \sum_{t=1}^{\ell} x^*_t = k $. Thus, $ \varphi = \varphi_{\ell} \geq \varphi_t = p_t $ holds for all $ t \in [\ell-1] $, and $ \varphi = \varphi_{\ell} \geq p_{\ell} $.  Hence, by our design of $ \phi_t $ and the dual solution $ \varphi_t $, both the primal and dual feasibility are always guaranteed for steps before $ \ell $. However, for the steps during $ \ell $ and $ T $, the primal/dual feasibility needs some further analysis.

One way to guarantee both primal and dual feasibility in all steps is to ensure that when the conversion process is completed at time $ \ell \in [T] $, we have
\begin{align*}
     \phi_{\ell}(x^*_{\ell}) = p_{\min} + \frac{\alpha p_{\min} - p_{\min}}{\prod_{t=1}^{\ell}(1 - \frac{\alpha}{k} x^*_t)} \geq p_{\max}.
\end{align*}
The rationale is that $ \PRM_{\boldsymbol{\phi}} $ will stop selling any more asset once the estimated marginal price exceeds $ p_{\max} $ (recall that $ p_t $ is upper bounded by $ p_{\max} $ for all $ t \in [T] $). It is easy to see that the above inequality can be rewritten as 
\begin{align}\label{eq_product_tau}
   \prod_{i=1}^{\ell} \left(1 - \frac{\alpha}{k} x^*_i \right) \leq \frac{\alpha p_{\min} - p_{\min}}{p_{\max} - p_{\min}}.
\end{align}
Since $ \prod_{i=1}^{\ell}(1 - \frac{\alpha}{k} x^*_i) \le ( \frac{1}{\ell} \sum_{i\in[\ell]}(1-\frac{\alpha}{k} x^*_i))^\ell $ (by the geometric–arithmetic mean inequality), to ensure that Eq. \eqref{eq_product_tau} holds, a sufficient condition can be given as follows: 
\begin{align*}
    \left[ \frac{1}{\ell} \sum_{i = 1 }^{\ell} \left( 1 - \frac{\alpha}{k} x^*_i \right) \right]^\ell \le \frac{\alpha p_{\min} - p_{\min}}{p_{\max} - p_{\min}} = \frac{\alpha - 1}{\theta - 1}.
\end{align*}
Using the fact that $ \sum_{i\in[\ell]} x^*_i = k $, the above sufficient condition is equivalent to
\begin{align}\label{Upper-bound with ell}
    \alpha \ge \ell\left[ 1 - \left(\frac{\alpha - 1}{\theta - 1}\right)^{1/\ell} \right].
\end{align}
Since $ \alpha $ is increasing with $ \ell $. To get the worst-case competitive ratio, it suffices to satisfy the above inequality with $ \ell = T $. Thus, we will produce a feasible set of primal and dual solutions under $ \PRM_{\boldsymbol{\phi}} $ as long as the balance parameter $ \alpha $ satisfies the following condition:
\begin{align} \label{eq_feasibilility}
\alpha \geq   T\left[ 1 - \left(\frac{\alpha  - 1}{\theta-1} \right)^{1/T} \right].
\end{align}

Based on the above OPD analysis, we formally prove the two \CRs in Theorem \ref{thm:oc_known_notice} for \OCK and \OCN.

\paragraph{\bfseries For {\OCK} without box constraints}
Based on Eq. \eqref{eq_OPD} and Eq. \eqref{eq_feasibilility}, the competitive ratio of $ \PRM_{\boldsymbol{\phi}} $  for {\OCK}, denoted by  $ \CR_{\textsf{known}}(\alpha)  $, satisfies
\begin{align*}
    \CR_{\textsf{known}}(\alpha)  = \max_{\sigma\in \Omega_{\textsf{known}}}   \frac{\OPT(\sigma)}{\PRM_{\boldsymbol{\phi}}(\sigma|\alpha)} \leq \alpha
\end{align*}
as long as $ \alpha $ satisfies the inequality given by Eq. \eqref{eq_feasibilility}. Thus, the upper bound for \OCK follows.

\paragraph{\bfseries For \OCN without box constraints}  In this setting, $  T $ is unknown in advance and can be infinity in the worst-case. Thus, to guarantee that Eq. \eqref{eq_feasibilility} holds for all $ T $ (so that the primal/dual feasibility is always guaranteed), it suffices to design $ \alpha $ such that Eq. \eqref{eq_feasibilility} holds with $ T\to\infty $. Let us define $ Q =\frac{\alpha -1}{\theta-1}$. Considering Eq. \eqref{eq_feasibilility} with $ T\to\infty $, we have
\begin{align}\label{CR_unknown_T}
   \alpha \ge \ & \lim_{\ell \to \infty} T(1-Q^{\frac{1}{T}}) \nonumber \\
   =\ & \lim_{T \to \infty}\frac{Q^{1/T}\ln Q/T^2}{-1/T^2} \nonumber  \\
   = \ & -\ln Q,
\end{align}
which indicates that
\begin{align*}
    \alpha & \ge \ln \frac{\theta-1}{\alpha -1}.
\end{align*}
After some algebra, we can transform Eq. \eqref{CR_unknown_T}  to Eq. \eqref{eq_alpha_notice} as follows:
\begin{align}\label{eq_alpha_notice}
   \alpha \ge 1 + W\left(\frac{\theta - 1}{e}\right).
\end{align}
We can then argue similarly that the competitive ratio of $ \PRM_{\boldsymbol{\phi}} $ for \OCN, denoted by  $ \CR_{\textsf{notice}}(\alpha)  $, satisfies 
\begin{align*}
    \CR_{\textsf{notice}}(\alpha)  = \max_{\sigma\in \Omega_{\textsf{notice}}}   \frac{\OPT(\sigma)}{\PRM_{\boldsymbol{\phi}}(\sigma|\alpha)} \leq \alpha
\end{align*}
as long as $ \alpha $ satisfies the inequality given by Eq. \eqref{eq_feasibilility} with $ T \rightarrow +\infty $, leading to the condition in Eq. \eqref{eq_alpha_notice}. Thus, we complete the proof of the upper bound for \OCN.

\subsection{Proof of Lower Bound}
\label{sec:lower_bound_proof}
To prove the lower bound results, we construct a hard instance denoted by $ \hat{\sigma} = \{p_1, \ldots, p_T\} $, where the prices are assumed to follow a strictly increasing order: $ p_{\min}\le p_1 < \ldots < p_T \le p_{\max} $.

\paragraph{\bfseries For {\OCK} without box constraints} 
For any online algorithm {\ALG}, let $c_t = \sum_{i=1}^t  x_i$ denote the total resource traded after step $t$. Thus, $\{c_t\}_{t\in[T]}$ is a non-decreasing sequence. Then the online algorithm's return can be written as
\begin{align*}
    {\ALG}(\hat{\sigma}) & = p_1 c_1 + \sum_{t=2}^{T} [c_t - c_{t-1}] p_t + (k- c_T) p_T,
\end{align*}
which implies
\begin{align*}
    {\ALG}(\hat{\sigma}) & \ge c_T p_T - \sum_{t=1}^{T-1} c_t [p_{t+1} - p_{t}] + (k- c_T) p_{\min}. 
\end{align*}
Meanwhile, the optimal offline solution returns $ \OPT(\hat{\sigma}) = k p_T$. Thus, we have
\begin{align*}
    \CR_{\textsf{known}} \ge \frac{k p_T}{c_T p_T - \sum_{t=1}^{T-1} c_t [p_{t+1} - p_{t}] + (k- c_T) p_{\min}},
\end{align*}
which equivalently gives
\begin{align*}
    c_T \ge \frac{k p_T/\CR_{\textsf{known}}- p_{\min}}{p_T - p_{\min}} + \frac{1}{p_T - p_{\min}}\sum_{t=1}^{T-1} c_t (p_{t+1} - p_{t}).
\end{align*}

Based on Gronwall's inequality \cite{jones1964fundamental}, we have Eq. \eqref{eq_c_T}.

\begin{align}
c_T &\ge \frac{  k(p_T/\CR_{\textsf{known}} - p_{\min})}{p_T - p_{\min}}  + \frac{1}{p_T - p_{\min}}\cdot \sum_{t=1}^{T-1}(p_{t+1}-p_{t})\cdot\frac{  k(p_t/\CR_{\textsf{known}} - p_{\min})}{p_t - p_{\min}} \cdot \left[\prod_{j=t+1}^{T-1} \Big(1 + \frac{p_{j+1}-p_j}{p_j - p_{\min}} \Big) \right] \nonumber\\
&= \frac{  k(p_T/\CR_{\textsf{known}} - p_{\min})}{p_T - p_{\min}} + \frac{1}{p_T - p_{\min}}\cdot \sum_{t=1}^{T-1}\left( \frac{p_T - p_{\min}}{p_{t+1} - p_{\min}} \cdot (p_{t+1}-p_{t})\cdot\frac{ k(p_t/\CR_{\textsf{known}} - p_{\min})}{p_t - p_{\min}} \right) \nonumber\\
&= \frac{  k(p_T/\CR_{\textsf{known}} - p_{\min})}{p_T - p_{\min}} +  \sum_{t=1}^{T-1}\left( \frac{p_{t+1}-p_{t}}{p_{t+1} - p_{\min}}  \cdot\frac{  k(p_t/\CR_{\textsf{known}} - p_{\min})}{p_t - p_{\min}} \right) \nonumber \\
&= \frac{  k(p_T/\CR_{\textsf{known}} - p_{\min})}{p_T - p_{\min}} + \frac{k}{\CR_{\textsf{known}}}\sum_{t=1}^{T-1} \frac{p_{t+1}-p_{t}}{p_{t+1} - p_{\min}} + p_{\min}\cdot \frac{k  (1-\CR_{\textsf{known}}) }{\CR_{\textsf{known}}}\sum_{t=1}^{T-1}\left(\frac{1}{p_t - p_{\min}} -\frac{1}{p_{t+1} - p_{\min}}\right) \nonumber\\
&= \frac{k}{\CR_{\textsf{known}}}-\frac{ k(\CR_{\textsf{known}}-1)}{\CR_{\textsf{known}}}\frac{p_{\min}}{p_1-p_{\min}}+ \frac{k}{\CR_{\textsf{known}}}\sum_{t=1}^{T-1} \frac{p_{t+1}-p_{t}}{p_{t+1} - p_{\min}}. \label{eq_c_T}
\end{align}

Since $c_T \le k$, we have
\begin{align}\label{eq_f}
    \CR_{\textsf{known}} &\ge 1+\frac{p_1-p_{\min}}{p_1}\sum_{t=1}^{T-1} \left ( \frac{p_{t+1}-p_t}{p_{t+1}-p_{\min}}\right ) \nonumber \\
    &=1+\frac{p_1-p_{\min}}{p_1}\sum_{t=2}^{T} \left ( \frac{p_{t}-p_{t-1}}{p_{t}-p_{\min}}\right ).
\end{align}
To derive a tight lower bound for $ \CR_{\textsf{known}} $, we can follow  the similar idea as \cite{time_series_search_2001} to construct $ \hat{\sigma} $ in a way that maximizes Eq. \eqref{eq_f}, leading to the following optimal competitive ratio for {\OCK}
\begin{align}\label{eq_alpha_lower_bound_finite}
    \CR^*_{\textsf{known}} =
    T\left[ 1 - \left(\frac{\CR^*_{\textsf{known}}  - 1}{\theta-1} \right)^{1/T} \right].
\end{align}
For more details on the derivation of Eq. \eqref{eq_alpha_lower_bound_finite} based on  Eq. \eqref{eq_f}, please see \cite{time_series_search_2001}. 

\paragraph{\bfseries For \OCN without box constraints} Since in this case the adversary can arbitrarily choose the value of $ T $, meaning that in the worst-case $ T $ can be infinity. Thus, the proof for {\OCN} follows trivially by setting $ T\to\infty$ for Eq. \eqref{eq_alpha_lower_bound_finite}, leading to the following optimal competitive ratio for {\OCN}:
\begin{align}\label{eq_lower_bound_unknown_T}
    \CR^*_{\textsf{notice}} = 1 + W\left(\frac{\theta - 1}{e}\right).
\end{align}
We thus complete the proof of Theorem \ref{thm:oc_known_notice} for \OCK and \OCN without box constraints.

The remaining proofs of Theorem \ref{thm:oc_known_notice} for \OCK and \OCN with box constraints can be found in Appendix \ref{sec:proof_of_theorem_CR_box_constraint} and Appendix \ref{sec:proof_of_theorem_CR_box_constraint_notice}.

\section{$ \textbf{\OCU}$: A Review} \label{appendix_OC_unknown}
For {\OCU}, an optimal algorithm has been proposed by \cite{Sun_MultipleKnapsack_2020}. 
In what follows, we briefly illustrate that the algorithm by \cite{Sun_MultipleKnapsack_2020}  can be unified into our algorithmic framework. 

In the context of {\OCU}, a unique challenge is that the player may not have the opportunity to trade all remaining resource during the final step $T$.  On the other hand, we need to avoid the risk of executing zero trading in the situation when all coming prices are $ p_{\min} $. To deal with such extreme cases, the pseudo-cost function must be designed in a way such that $ \phi(\beta) = p_{\min}$ holds at the beginning for a certain number of steps. This idea is formally presented in Lemma \ref{lemma_unknown_CR} and Lemma \ref{lemma_CR_uknown_opt} below.

\begin{lemma}\label{lemma_unknown_CR}
If the pseudo-cost function is given by $ \hat{\bm{\phi} }= (\hat{\phi}_1, \cdots, \hat{\phi}_T) $, where $ \hat{\phi}_t $ is defined as follows:
\begin{align}\label{eq_phi_hat}
\begin{split}
    \hat{\phi}_t(\beta) = 
    \begin{cases}
        p_{\min} & \text{if } c_{t-1}\in \left[0, \frac{k}{\alpha}\right),\\
        p_{\min} e^{\alpha(\beta+c_{t-1})} & \text{otherwise},
    \end{cases}
\end{split}
\end{align}
then the competitive ratio of $ \PRM_{\hat{\bm{\phi}}} $ for {\OCU}, denoted by $ \CR_{\textsf{unknown}}(\alpha) $, satisfies $ \CR_{\textsf{unknown}}(\alpha) \leq  \alpha $ as long as the balance parameter $ \alpha \geq 1+\ln\theta $. In Eq. \eqref{eq_phi_hat}, $ c_t = \sum_{t=i}^{t}x_i $ denotes the total resource traded after step $ t $. 
\end{lemma}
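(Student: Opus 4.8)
The plan is to run the online primal--dual (OPD) template of Appendix~\ref{proof_upper}, re-instantiated for the unknown-horizon pseudo-cost $\hat{\bm\phi}$ whose defining feature is the flat segment $\hat\phi_t = p_{\min}$ on $c_{t-1}\in[0,k/\alpha)$. The first step is to characterize how $\PRM_{\hat{\bm\phi}}$ trades. By the same first-order optimality condition used in Lemma~\ref{lemma_achievable_price}, whenever the revealed price exceeds the current marginal pseudo-cost the algorithm raises the cumulative amount $c_t$ to the unique level with $p_t=\hat\phi_t(c_t)$; inside the flat region the pseudo-revenue $p_t x_t - \int_0^{x_t}\hat\phi_t\,d\beta$ is identically zero, so the tie is broken by trading greedily at $p_{\min}$. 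Consequently the realized revenue as a function of the converted amount is $R(c)=\int_0^c \hat\phi(u)\,du$, equal to $p_{\min}c$ on $[0,k/\alpha]$ and governed by the exponential branch thereafter.

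Next I would set up the primal/dual objectives exactly as in Eq.~\eqref{eq_OPD}, taking the dual variable $\varphi_t=\hat\phi_t(c_t)$ to be the current marginal pseudo-cost. The central identity is the analog of Lemma~\ref{lem:difference_phi}: since $\hat\phi$ obeys $\hat\phi'(c)=\tfrac{\alpha}{k}\hat\phi(c)$ on the exponential branch, the increment of the dual objective $D=k\hat\phi(c)$ equals $\alpha$ times the marginal revenue $\hat\phi(c)\,dc$, i.e.\ each unit of dual growth is ``paid for'' by exactly $\alpha$ units of primal revenue, while on the flat branch $D$ is pinned at $kp_{\min}$ and the revenue grows at rate $p_{\min}$. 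Summing these increments and invoking weak duality $\OPT(\sigma)\le D_\ell$ at the termination step $\ell$, together with $D_0=kp_{\min}$, yields $\PRM_{\hat{\bm\phi}}(\sigma\mid\alpha)\ge \OPT(\sigma)/\alpha$, \emph{provided} primal and dual feasibility are preserved at every step.

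The feasibility verification is where the unknown horizon makes this genuinely different from \OCK/\OCN, and I expect it to be the main obstacle. Because the adversary may terminate the sequence at any step, the inequality $\PRM_{\hat{\bm\phi}}\ge\OPT/\alpha$ must hold uniformly over all stopping times, not just at $t=T$. Primal feasibility $c_t\le k$ is immediate, and dual feasibility $\varphi_t=\hat\phi_t(c_t)\ge p_t$ follows from the matching construction and monotonicity of $\hat\phi$. The delicate point is ensuring the resource is never exhausted while future prices could still exceed the attained threshold, which reduces to requiring $\hat\phi(k)\ge p_{\max}$; the flat initial segment plays the complementary role of insuring against the opposite extreme, since if the process halts with all prices at $p_{\min}$, greedily converting the first $k/\alpha$ units secures revenue $kp_{\min}/\alpha$, matching $\OPT/\alpha$ at $c=k/\alpha$ and ruling out the catastrophic zero-trading outcome.

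I would package the uniform-over-stopping-times argument as the single worst-case inequality $\alpha R(c)\ge k\,\hat\phi(c)$ for all $c\in[0,k]$, which is tight at $c=k/\alpha$ (forcing $C=p_{\min}e^{-1}$, hence the exponential shape $\hat\phi(c)=p_{\min}e^{\alpha c/k-1}$) and tight at $c=k$ (forcing $p_{\min}e^{\alpha-1}\ge p_{\max}$). The latter, with $\theta=p_{\max}/p_{\min}$, is exactly $e^{\alpha-1}\ge\theta$, i.e.\ $\alpha\ge 1+\ln\theta$. Under this condition every feasibility requirement holds and the OPD chain delivers $\CR_{\textsf{unknown}}(\alpha)\le\alpha$, establishing the lemma; minimizing over admissible $\alpha$ then gives $\CR^*_{\textsf{unknown}}=1+\ln\theta$ as in Eq.~\eqref{eq:CR_unknown}.
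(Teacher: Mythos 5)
You should first be aware that the paper contains no proof of Lemma~\ref{lemma_unknown_CR}: Appendix~\ref{appendix_OC_unknown} states it without proof and defers to \cite{Sun_MultipleKnapsack_2020}. Measured against the OPD machinery the paper does use (Appendix~\ref{proof_upper} and Lemma~\ref{lemma_ratio_tau}), your reconstruction follows the intended route: the differential identity $\hat\phi'(c)=\frac{\alpha}{k}\hat\phi(c)$ plays the role of Lemma~\ref{lem:difference_phi}, $D_0=kp_{\min}$, the flat segment supplies the $kp_{\min}/\alpha$ credit that closes the primal--dual chain, and the exhaustion condition $\hat\phi(k)=p_{\min}e^{\alpha-1}\ge p_{\max}$ is exactly $\alpha\ge 1+\ln\theta$. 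You also implicitly corrected the typo in Eq.~\eqref{eq_phi_hat}: the exponent must be $\frac{\alpha}{k}(\beta+c_{t-1})-1$, as in the function $\phi_2$ of Algorithm~\ref{alg_RBP_combine}; otherwise $\hat\phi$ is discontinuous at $c=k/\alpha$ and $\hat\phi(k)\neq p_{\min}e^{\alpha-1}$.

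There is, however, a concrete flaw in your packaging. The master inequality $\alpha R(c)\ge k\hat\phi(c)$ for all $c\in[0,k]$ is false on the flat branch: for $c<k/\alpha$ it reads $\alpha p_{\min}c\ge kp_{\min}$. Moreover, on $[k/\alpha,k]$ it holds with \emph{identical equality} --- $R(c)=\frac{k}{\alpha}\hat\phi(c)$ is the defining property of the exponential branch, not merely tight at the two endpoints --- so the only binding requirement it encodes is dual feasibility at exhaustion. The missing work is the early-termination regime $c_T<k/\alpha$, where your claim that greedy conversion ``secures revenue $kp_{\min}/\alpha$'' presupposes the horizon was long enough (under the box constraint $x_t\le b$) to convert $k/\alpha$ units; if the adversary stops with $bT<k/\alpha$, the bound $\OPT\le k\hat\phi(c_T)=kp_{\min}$ is useless relative to $\ALG\approx p_{\min}bT$. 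The repair is direct rather than duality-based: under your greedy tie-breaking the algorithm trades $x_t=b$ at every step while on the flat branch, so stopping with $c_T<k/\alpha$ forces $\ALG=\sum_{t\le T}p_t b\ge\OPT$, since $\OPT$ faces the same box and horizon limits; hence the ratio is $1$ there, and the OPD chain is needed only when $c_T\ge k/\alpha$. Relatedly, ``dual feasibility follows from the matching construction'' glosses the capped steps where $x_t=b$ but $\hat\phi_t(c_t)<p_t$: you need the bookkeeping $\mu_t=p_t-\hat\phi_t(c_t)$ exactly as in the case $\bar{x}_t=b$ of Lemma~\ref{lemma_ratio_tau}, which goes through because $k(\hat\phi_t-\hat\phi_{t-1})=\alpha\int_{c_{t-1}}^{c_t}\hat\phi(u)\,du\le\alpha\hat\phi_t(c_t)x_t$ and $\alpha\ge 1$ absorbs the extra $b\mu_t$ term. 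With these two repairs your argument is complete and matches what the cited proof in \cite{Sun_MultipleKnapsack_2020} establishes.
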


\begin{lemma}\label{lemma_CR_uknown_opt}
If $ \alpha = 1 + \ln \theta $, then $ \PRM_{\hat{\bm{\phi}}} $ achieves the optimal competitive ratio $ \CR^*_{\textsf{unknown}} $ given as follows
\begin{align}
    \CR^*_{\textsf{unknown}} = 1 + \ln \theta.
\end{align}
\end{lemma}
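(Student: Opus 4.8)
The plan is to establish Lemma \ref{lemma_CR_uknown_opt} in two parts: an \emph{achievability} direction, showing that $\PRM_{\hat{\bm{\phi}}}$ with $\alpha = 1+\ln\theta$ attains a competitive ratio of at most $1+\ln\theta$, and a matching \emph{optimality} (lower bound) direction, showing that no online algorithm for \OCU can do strictly better. The achievability direction is essentially immediate from Lemma \ref{lemma_unknown_CR}: that lemma guarantees $\CR_{\textsf{unknown}}(\alpha) \le \alpha$ for every $\alpha \ge 1+\ln\theta$, and since this upper bound is increasing in $\alpha$, the smallest admissible choice $\alpha = 1+\ln\theta$ yields $\CR_{\textsf{unknown}}(1+\ln\theta) \le 1+\ln\theta$. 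It therefore remains only to argue that $1+\ln\theta$ is unimprovable.

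For the lower bound, I would construct an adaptive adversarial price sequence tailored to the defining feature of \OCU---that the horizon may terminate at any moment without notice. First I would restrict attention to monotonically increasing instances $p_{\min} \le p_1 < p_2 < \cdots$ and, taking a continuum limit (equivalently, a finely spaced geometric grid of prices, then letting the spacing tend to zero), describe any online algorithm by the cumulative fraction $g(p) \in [0,1]$ of the resource it has sold by the time price $p$ is revealed. The key observation is that the adversary may halt the sequence immediately after presenting any peak price $p \in [p_{\min}, p_{\max}]$; choosing a non-binding box constraint $b \ge k$, the offline optimum then sells the entire budget at the peak, giving $\OPT = kp$, whereas the algorithm has collected only $\ALG \le k\int_{p_{\min}}^{p} q\, dg(q)$, since the unsold remainder can no longer be traded once the process ends.

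Requiring $r$-competitiveness against \emph{every} stopping point imposes $kp \le r\, k\int_{p_{\min}}^{p} q\, dg(q)$ for all $p \in [p_{\min},p_{\max}]$. The hardest instances make this tight throughout; differentiating the equality $p = r\int_{p_{\min}}^{p} q\, dg(q)$ gives the separable ODE $g'(p) = \frac{1}{rp}$, whose solution is $g(p) = g(p_{\min}) + \frac{1}{r}\ln(p/p_{\min})$. The boundary condition at $p = p_{\min}$ (the ``all prices equal $p_{\min}$'' threat, where the process may end at the very first step) forces the initial mass $g(p_{\min}) \ge \frac{1}{r}$, while the resource constraint caps the total at $g(p_{\max}) \le 1$. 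Combining the two, $\frac{1}{r}\big(1+\ln\theta\big) \le g(p_{\max}) \le 1$, so $r \ge 1+\ln\theta$, matching the achievability bound and establishing $\CR^*_{\textsf{unknown}} = 1+\ln\theta$. This construction mirrors that of \citet{Sun_MultipleKnapsack_2020}.

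The main obstacle I anticipate is making the continuum argument fully rigorous for arbitrary online algorithms rather than only for the ``tight'' benchmark $g(\cdot)$: one must show that any allocation profile violating the ODE on a set of positive measure is strictly dominated at some stopping price, and that the discretized adversary recovers $1+\ln\theta$ up to vanishing error as the price grid is refined. Extra care is needed at the two endpoints---justifying the initial-mass condition at $p_{\min}$ and the resource-exhaustion condition at $p_{\max}$---because it is precisely these boundary constraints, rather than the interior ODE, that pin down the constant $1+\ln\theta$.
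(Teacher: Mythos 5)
Your proposal is sound, but note that the paper itself offers no proof of this lemma to compare against: Lemma~\ref{lemma_CR_uknown_opt} appears in Appendix~\ref{appendix_OC_unknown} purely as a review of \citet{Sun_MultipleKnapsack_2020}, and Remark~\ref{rmk:CR_unknown} states explicitly that $\CR^*_{\textsf{unknown}} = 1+\ln\theta$ is given \emph{without proof}. What you have written is therefore a reconstruction of the cited reference's argument rather than an alternative to anything in this paper, and as a reconstruction it is essentially correct: achievability does follow immediately from Lemma~\ref{lemma_unknown_CR} with $\alpha = 1+\ln\theta$, and your adversarial lower bound (increasing prices, arbitrary stopping, cumulative allocation profile $g$, the constraint $p \le r\int_{p_{\min}}^{p} q\,dg(q)$ for every stopping price, boundary mass $g(p_{\min}) \ge 1/r$, budget cap $g(p_{\max}) \le 1$) is the standard route and correctly pins down $r \ge 1+\ln\theta$. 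One refinement worth making: the rigor concern you flag about the ODE heuristic can be dissolved entirely. Rather than arguing that the tight profile satisfies $g'(p) = \frac{1}{rp}$ and that deviations are dominated, work directly with the integral inequality. Set $S(p) = \int_{p_{\min}}^{p} q\,dg(q)$; $r$-competitiveness at every stopping point gives $S(p) \ge p/r$ pointwise, and integration by parts yields
\begin{align*}
g(p_{\max}) \;=\; \int_{p_{\min}}^{p_{\max}} \frac{1}{q}\,dS(q) \;=\; \frac{S(p_{\max})}{p_{\max}} + \int_{p_{\min}}^{p_{\max}} \frac{S(q)}{q^2}\,dq \;\ge\; \frac{1}{r}\left(1 + \ln\theta\right),
\end{align*}
so $r \ge 1+\ln\theta$ follows from $g(p_{\max}) \le 1$ for \emph{any} measurable allocation profile, with no tightness or positive-measure-deviation argument needed; the discretization error then vanishes routinely on a geometric grid. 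This Gronwall-flavored treatment also matches the spirit of the paper's own lower-bound technique for \OCK in Appendix~\ref{sec:lower_bound_proof}, where the analogous discrete inequality is resolved via Gronwall's inequality rather than a variational argument.
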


\section{Proof of Proposition \ref{prop:CR_worst_p_min}}\label{proof_p_min}
Assume $\alpha_*$ minimizes $\CR(\alpha)$, with the worst-case instance denoted by $\sigma_*$. If a forced trading phase occurs, let the switching step be $\tau_*$. Based on Algorithm \ref{alg_RBP}, for a given $ \tau_* \in [\tau_{\min}, T] $, the worst-case \CR between $ \OPT $ and $ \PRM_{\boldsymbol{\phi}} $ can be expressed as:
\begin{align*}
 \CR^*_{\textsf{known}} &=  \min_{\alpha\in [1, \theta]} \max_{\sigma\in\Omega} \frac{\OPT(\sigma)}{\PRM_{\boldsymbol{\phi}}(\sigma|\alpha)} 
           \\& =  \frac{\OPT(\sigma_*)}{\PRM_{\boldsymbol{\phi}}(\sigma_*|\alpha_*)} \\ 
           &= {\max_{\sigma_*}}\ \frac{\sum_{t=1}^{\tau_*-1} p_tx^{\OPT}_t+\sum_{t=\tau_*}^{T} p_tx^{\OPT}_t}{\sum_{t=1}^{\tau_*-1} p_t \bar{x}_t +\sum_{t=\tau_*}^{T} p_tb}.
\end{align*}

To achieve the optimal solution $\OPT(\sigma_*)$, exactly $k$ units of resource must be traded over horizon $T$. Initially, trading $b$ units per step results in a surplus of $bT - k$ units, which must be reallocated to ensure only $k$ units are traded. To help with our following proof, we formally define this operation via the following function $h$, which reallocates the surplus optimally, withdrawing from the lowest prices first: Let $d = \left\lceil \frac{bT - k}{b} \right\rceil = T - \left\lceil \frac{k}{b} \right\rceil$ represent the steps affected by reallocation, then the function $h$ is defined as:
\begin{align*}
h(\sigma_*, bT - k) := \ &  \left( \sum_{d-1} \{ p_1, \ldots, p_T \} \right) b \\ 
& + \left( \min_{d\textsf{-th}} \{ p_1, \ldots, p_T \} \right) (bd - bT + k).
\end{align*}
Here, $\sum_{d-1}$ represents the summation of the lowest $ d-1$ 
prices (i.e., $h$ withdraws up to $b$ units at each of the first $d-1$ lowest prices in $\sigma_*$) and $\min_{d\textsf{-th}}$ denotes the $ d$-th lowest price (i.e., $h$ withdraws the remaining $bd - bT + k$ units from the $d$-th lowest price). This minimizes the impact on profit and ensures exactly $k$ units are traded as required by $\OPT(\sigma_*)$. Under our setting:
\begin{align*}
    \CR^*_{\textsf{known}} &= \max_{\sigma_*}\ \frac{\sum_{t=1}^{\tau_*-1} p_t x^{\OPT}_t + \sum_{t=\tau_*}^T p_t x^{\OPT}_t}{\sum_{t=1}^{\tau_*-1} p_t \bar{x}_t + \sum_{t=\tau_*}^T p_t b} \\
    &= \max_{\sigma_*}\ \frac{\sum_{t=1}^T p_t b - h(\sigma_*, bT - k)}{\sum_{t=1}^T p_t b - \sum_{t=1}^{\tau_*-1} p_t (b-\bar{x}_t)} \\
    &= \max_{\sigma_*}\ 1 + \frac{\sum_{t=1}^{\tau_*-1} p_t (b - \bar{x}_t) - h(\sigma_*, bT - k)}{\sum_{t=1}^{\tau_*-1} p_t \bar{x}_t + \sum_{t=\tau_*}^T p_t b}.
\end{align*}

Note that $\sum_{t=1}^{\tau_*-1} (b - \bar{x}_t) = bT - k$, therefore $\sum_{t=1}^{\tau_*-1} p_t (b - \bar{x}_t) -h(\sigma_*, bT - k) \geq 0$. To achieve the maximization, we aim to minimize both $h(\{ p_1, \ldots, p_T\}, bT - k)$ and $\sum_{t=\tau_*}^T p_t b$. It is easy to see that the maximization is satisfied when $p_t = p_{\min}$ for all $t \geq \tau_*$:
\begin{align*}
    \CR^*_{\textsf{known}} &= \min_{\alpha \in [1, \theta]} \max_{\sigma \in \Omega} \frac{\OPT(\sigma)}{\PRM_{\boldsymbol{\phi}}(\sigma|\alpha)} \\
    &= \max_{p_1, \cdots, p_{\tau_*}} \frac{\sum_{t=1}^{\tau_*-1} p_t x_t^{\OPT} + \sum_{t=\tau_*}^T p_{\min} x_t^{\OPT}}{\sum_{t=1}^{\tau_*-1} p_t \bar{x}_t + \sum_{t=\tau_*}^T p_{\min} b}.
\end{align*}

We thus complete the proof.

\section{Proof of Proposition \ref{prop:sigma_tau}}
\label{sec:proof_of_proposition_sigma_tau}
To begin with, we refer to $ \sigma^{(\tau)} $ as the worst case with switching step $ \tau $ based on Proposition \ref{prop:CR_worst_p_min}, We refer to the original LP in Eq. \eqref{eq:OC_max}  as  $ LP_T $, and the reduced linear program Eq. \eqref{eq_reduce} as $ LP_{\tau} $, where forced trading starts at step $ \tau $. Let $ x'_t $ for $ t \in [1, T] $ and $ x''_t $ for $ t \in [1, \tau] $ be the optimal solutions for $ LP_T $ and $ LP_{\tau} $, respectively. The objective values for $ LP_T $ and $ LP_{\tau} $ are:
\begin{align*}
\sum_{t=1}^T p_t x'_t = \sum_{t=1}^{\tau-1} p_t x'_t + p_{\min}\left(k - \sum_{t=1}^{\tau-1} x'_t\right)^+,
\end{align*}
and
\begin{align*}
\sum_{t=1}^{\tau} p_t x''_t = \sum_{t=1}^{\tau-1} p_t x''_t + p_{\min}\left(k - \sum_{t=1}^{\tau-1} x''_t\right)^+.
\end{align*}

If $ \left(k - \sum_{t=1}^{\tau-1} x''_t\right)^+ > 0 $, some resource is traded at the minimum price $ p_{\min} $. The optimal solution involves trading $ b $ units of resource at each step $ t \in [1, \tau-1] $ until the remaining resource is traded at $ p_{\min} $. Thus:
\begin{align*}
x''_t = b \quad \forall t \in [1, \tau-1],
\end{align*}
which means that some resource will inevitably be traded at the minimum price. The total revenue in both $ LP_T $ and $ LP_{\tau} $ becomes:
\begin{align*}
\sum_{t=1}^T p_t x'_t = \sum_{t=1}^{\tau-1} p_t b + p_{\min}\left(k - b(T - \tau)\right),
\end{align*}
which is the same for both $ LP_T $ and $ LP_{\tau} $. Therefore:
\begin{align*}
\sum_{t=1}^T p_t x'_t = \sum_{t=1}^T p_t x''_t.
\end{align*}

On the other hand, if $ \left(k - \sum_{t=1}^{\tau-1} x''_t\right)^+ = 0 $, no resource remains to be traded at $ p_{\min} $. The optimal strategy in both $ LP_T $ and $ LP_{\tau} $ is to trade resource at the highest prices within $ (p_1, \ldots, p_{\tau-1}) $, allocating $ b $ units of resource to the highest price, then proceeding to the next highest price, and so on. Since the prices $ (p_1, \ldots, p_{\tau-1}) $ are the same for both $ LP_T $ and $ LP_{\tau} $, the total revenue is identical:
\begin{align*}
   \sum_{t=1}^T p_t x'_t = \sum_{t=1}^T p_t x''_t.
\end{align*}

In both cases, the total revenue from $ LP_T $ is equal to that from $ LP_{\tau} $. Therefore, solving the original $ LP_T $ is equivalent to solving the reduced $ LP_{\tau} $ for the worst-case price sequence $ \sigma^{(\tau)} $, completing the proof.

\section{Proof of Theorem \ref{thm:oc_known_notice} for \OCK with Box Constraints}\label{sec:proof_of_theorem_CR_box_constraint}
This section provides the remaining proofs for Theorem \ref{thm:oc_known_notice} with box constraints. Similar to Appendix \ref{sec:proof_of_thm_known_notice}, we omit $ ( {F}_t,  \alpha) $ and directly write the pseudo-cost function $ \phi_t(x_t|  {F}_t,\alpha) $ as $ \phi_t(x_t) $.

\subsection{Bounding the Competitive Ratio with $ \tau$}\label{proof_bound_cr}
Building on Propositions \ref{prop:CR_worst_p_min} and \ref{prop:sigma_tau}, we develop the following lemma to facilitate the competitive analysis of \OCK with box constraints.

\begin{lemma}\label{lemma_ratio_tau}
For price sequence $\sigma^{(\tau)}$ with $\tau \in [\tau_{\min}, T]$, the following upper bound holds:
\begin{align*}
\frac{\OPT(\sigma^{(\tau)})}{\PRM_{\boldsymbol{\phi}}(\sigma^{(\tau)}|\alpha)} &\leq \alpha,
\end{align*}
where $ \alpha $ satisfies the following equation
\begin{align*}
 \alpha  = \tau \left[ 1 - \left( \frac{\alpha - 1}{\theta - 1} \right)^{1/\tau} \right].
\end{align*}
\end{lemma}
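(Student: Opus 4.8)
The plan is to combine the two structural reductions already in hand with the online primal--dual (OPD) analysis of Appendix \ref{sec:proof_of_thm_known_notice}, run over a horizon shortened from $T$ to $\tau$. First I would use Proposition \ref{prop:CR_worst_p_min} to restrict to instances $\sigma^{(\tau)} \in \Omega^{(\tau)}$ whose prices equal $p_{\min}$ at every step after $\tau$, and Proposition \ref{prop:sigma_tau} to pass from the full LP over horizon $T$ to the reduced LP of Eq. \eqref{eq_reduce} over horizon $\tau$, in which the box constraint at the terminal step is trivial ($b_\tau^{(\tau)}=k$). The key observation is that $\PRM_{\boldsymbol{\phi}}$ produces identical proactive allocations $\bar{x}_t$ for $t\in[1,\tau-1]$ on $\sigma^{(\tau)}$ and on $\sigma_r^{(\tau)}$ and disposes of the remaining budget at price $p_{\min}$, so both the numerator and denominator of the ratio are preserved: $\OPT(\sigma^{(\tau)})=\OPT(\sigma_r^{(\tau)})$ and $\PRM_{\boldsymbol{\phi}}(\sigma^{(\tau)}|\alpha)=\PRM_{\boldsymbol{\phi}}(\sigma_r^{(\tau)}|\alpha)$. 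It therefore suffices to establish the bound on the reduced horizon-$\tau$ problem.

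On the reduced problem I would replay the OPD argument of Appendix \ref{proof_upper} verbatim with $T$ replaced by $\tau$, taking the marginal-value dual variable $\varphi=\phi_\tau(x_\tau^*)$ and the box multipliers $\mu_t=(p_t-\varphi)^+$. The crucial per-step estimate is the comparison of primal and dual increments: when the box is slack, $x_t^*$ satisfies the first-order condition $p_t=\phi_t(x_t^*)$ of Lemma \ref{lemma_achievable_price}, and Lemma \ref{lem:difference_phi} gives the exact identity \eqref{eq_diff_phi}; when the box is tight, $x_t^*=b$ and $\phi_t(b)\le p_t$, so Lemma \ref{lem:difference_phi} yields $\Delta_{\textsf{Dual}}[t]=\alpha b[\phi_t(b)-p_{\min}]\le\alpha(p_t-p_{\min})x_t^*$. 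In either case $\Delta_{\textsf{Dual}}[t]\le\alpha\,\Delta_{\textsf{Primal}}[t]-\alpha p_{\min}x_t^*$, and telescoping exactly as in the chain leading to Eq. \eqref{eq_OPD} gives $\PRM_{\boldsymbol{\phi}}(\sigma_r^{(\tau)}|\alpha)\ge\tfrac1\alpha\,\OPT(\sigma_r^{(\tau)})$, which is the claimed inequality.

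To pin down the admissible $\alpha$ I would impose dual feasibility. By monotonicity of $\phi_t$ in $t$, the choice $\varphi=\phi_\tau(x_\tau^*)$ makes $\varphi+\mu_t\ge p_t$ hold automatically; moreover, requiring that the terminal pseudo-cost reach the price ceiling, $\phi_\tau(x_\tau^*)\ge p_{\max}$, forces every $\mu_t=0$, so the dual objective collapses to $k\varphi$ and no box penalty survives. Rewriting $\phi_\tau(x_\tau^*)\ge p_{\max}$ as in Eq. \eqref{eq_product_tau} gives $\prod_{i=1}^{\tau}(1-\alpha x_i^*/k)\le(\alpha-1)/(\theta-1)$, and applying the arithmetic--geometric mean inequality together with $\sum_{i=1}^{\tau}x_i^*=k$ converts this into $\alpha\ge\tau[1-((\alpha-1)/(\theta-1))^{1/\tau}]$; taking $\alpha$ to be the root of the corresponding equation yields the stated bound.

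I expect the main obstacle to be the careful handling of the box constraint inside the primal--dual comparison: verifying that the increment inequality $\Delta_{\textsf{Dual}}[t]\le\alpha\,\Delta_{\textsf{Primal}}[t]-\alpha p_{\min}x_t^*$ is not destroyed when the proactive allocation is capped at $b$ (so the first-order condition degrades to the inequality $\phi_t(b)\le p_t$), and confirming that the terminal-threshold requirement $\phi_\tau(x_\tau^*)\ge p_{\max}$ simultaneously secures dual feasibility and annihilates the box multipliers $\mu_t$, which is exactly what prevents the box constraints at steps $t<\tau$ from inflating the dual objective beyond $k\varphi$.
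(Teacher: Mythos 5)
Your high-level architecture coincides with the paper's: restrict to $\Omega^{(\tau)}$ via Proposition \ref{prop:CR_worst_p_min}, pass to the reduced LP \eqref{eq_reduce} via Proposition \ref{prop:sigma_tau}, run the online primal--dual argument over the shortened horizon $\tau$, and extract the defining equation for $\alpha$ from the terminal condition $\phi_\tau \ge p_{\max}$ plus the AM--GM inequality with $\sum_i \bar{x}_i = k$. However, your dual assignment contains a genuine gap. You set $\mu_t = (p_t - \varphi)^+$ with $\varphi = \phi_\tau$ and claim that the terminal condition ``forces every $\mu_t = 0$, so the dual objective collapses to $k\varphi$.'' But $\phi_\tau \ge p_{\max}$ is only guaranteed on instances where the \emph{proactive} trades (the quantities that actually enter the product in Eq.~\eqref{eq_def_phi_t}) exhaust the budget $k$; the condition on $\alpha$ secures \emph{primal} feasibility in that event, not a universal lower bound on the terminal pseudo-cost. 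On exactly the instances that drive this lemma---leftover budget force-dumped at $p_{\min}$ at step $\tau$---the relevant pseudo-cost is $\phi_{\tau-1} < p_{\max}$ (feeding the forced remainder $\bar{x}_\tau$ into \eqref{eq_def_phi_t} can even make the product negative, so ``$\phi_\tau(\bar{x}_\tau)\ge p_{\max}$'' is not available), and at any earlier spike step where the box bound ($x_t^* = b$, $\phi_t(b) \le p_t$, possibly $p_t > \phi_{\tau-1}$) dual feasibility forces $\mu_t > 0$. Concretely, take $p_1 = p_{\max}$ and $p_t = p_{\min}$ thereafter, with $b < k/\alpha$ chosen so that $\phi_1(b) \le p_{\max}$ and $b(\theta-1)(1 - \alpha b/k) > (\alpha-1)k$ (easy for large $\theta$): then $\OPT(\sigma_r^{(\tau)}) = b\,p_{\max} + (k-b)p_{\min} > k\,\phi_1(b) = k\varphi$, so weak duality with your $\mu$-free pair $(\varphi, \mathbf{0})$ is simply false, and your telescoping, which only accumulates $k(\phi_t - \phi_{t-1})$, cannot reach a valid dual objective.

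The repair is the paper's choice: pin the multipliers to the \emph{running} pseudo-cost, $\mu_t = \max\{0,\, p_t - \phi_t(\bar{x}_t)\}$ with $\varphi_t = \phi_t(\bar{x}_t)$, so that feasibility $\varphi_\tau + \mu_t \ge \phi_t(\bar x_t) + \mu_t \ge p_t$ is automatic by monotonicity, and then carry the increments $b\mu_t$ through the per-step comparison. This is precisely where the box-binding case does real work: when $\mu_t > 0$ one necessarily has $x_t^* = b$, hence $b\mu_t = b(p_t - \phi_t(b)) \le \alpha x_t^*\,(p_t - \phi_t(b))$ using $\alpha \ge 1$, which restores $\Delta_{\textsf{Dual}}[t] = k(\phi_t - \phi_{t-1}) + b\mu_t \le \alpha\,\Delta_{\textsf{Primal}}[t] - \alpha p_{\min} x_t^*$; your per-step computation (correct for a $\mu$-free dual) only survives inside this $\mu$-carrying accounting. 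One further bookkeeping point: at the forced step $\tau$ the dual variable must stay at $\varphi_\tau = \phi_\tau(x_\tau^*) = \phi_{\tau-1}$ (giving $\Delta_{\textsf{Dual}}[\tau] = 0$, since $p_\tau = p_{\min} < \phi_{\tau-1}$ yields $x_\tau^* = 0$ and $\mu_\tau = 0$) rather than jump to $\phi_\tau(\bar{x}_\tau)$ as in your scheme, because the dump at $p_{\min}$ produces no primal surplus to match a positive dual increment. With these two corrections, the role of the equation $\alpha = \tau[1 - ((\alpha-1)/(\theta-1))^{1/\tau}]$ is then what the paper states: it guarantees that the untruncated maximizers never cumulatively oversell $k$ (the pseudo-cost reaches $p_{\max}$ by depletion), i.e., primal feasibility---the opposite of the role you assigned it.
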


\begin{proof}
Based on Proposition \ref{prop:CR_worst_p_min} and Proposition \ref{prop:sigma_tau}, $\OPT(\sigma^{(\tau)})$  can be obtained by solving Eq. \eqref{eq_reduce} with $ \sigma = \sigma_r^{(\tau)}$.

We again use the online primal-dual (OPD) approach. Recall that the dual problem is given by:
\begin{subequations}
\begin{align}
   \min_{ \varphi, \bm{\mu}}\quad & k\varphi+\sum_{t=1}^{\tau} b_t\mu_t \\
    s.t. \quad & \varphi+\mu_t \ge p_t, & & \forall t\in[{\tau}],\\ 
                & \varphi, \mu_t \ge 0,  & & \forall t\in[{\tau}].
\end{align}
\end{subequations}
At any step $ t \in [\tau] $, the change of the primal objective is:
\begin{align*}
    \Delta_{\textsf{Primal}}[t]= P_t - P_{t-1} = p_t  \bar{x}_t, 
\end{align*}
where we denote by $ \bar{x}_t $ the {real} total quantity traded at step $ t $. 
\begin{align*}
    \bar{x}_t=\begin{cases}
        x_t^* & t<\tau,\\
        k-\sum_{t=1}^{\tau-1}x_t^* & t=\tau.
    \end{cases}
\end{align*}

Based on our design of Algorithm \ref{alg_RBP}, we know that $ \bar{x}_{\tau} > x_{\tau}^* $ and $ p_{\tau} =\phi_{\tau}( {x}_{\tau}^*) < \phi_{\tau}(\bar{ {x}}_{\tau})  $.
Follow the OPD approach, we design $ \varphi_t = \phi_t(  \bar{x}_t) $ and $ \mu_t  = \max\{0, p_t - \phi_t(  \bar{x}_t \}  $. It is easy to verify that  $ \varphi_t  $ and $ \mu_t  $ are always dual feasible.

At any step $ t \in [\tau-1] $, we have $ \bar{x}_t = x_t^* $ and $ b_t = b$, and thus the change of the dual objective can be calculated as follows:
\begin{align*}
    \Delta_{\textsf{Dual}} [t]  =\ &  D_t - D_{t-1}  
   \\ =\ & k \left( \phi_t( {x}_t^*) - \phi_{t-1}( {x}_{t-1}^*) \right) + b\mu_t   \\ 
    =\ & k \left( \phi_t( {x}_t^*) - \phi_{t-1}( {x}_{t-1}^*) \right) +  b\max\{0, p_t - \phi_t( {x}_t^*) \} \\
    =\ & \alpha \left( \phi_t( {x}_t^*) x_t^* -  p_{\min} x_t^* \right) + b\max\{0, p_t - \phi_t( {x}_t^*) \}\\
    \leq\ &  \alpha p_tx_t^* - \alpha p_{\min} x_t^*,
\end{align*}
where the last inequality can be proved in the following two cases:   
\begin{itemize}
    \item When $ \max\{0, p_t - \phi_t( {x}_t^*) \} = p_t - \phi_t( {x}_t^*) $, we have $  \phi_t( {x}_t^*) \le p_t $. Thus, 
    \begin{align*}
       & \alpha \left( \phi_t( {x}_t^*) x_t^* -  p_{\min} x_t^* \right) + b\max\{0, p_t - \phi_t( {x}_t^*) \} \\ 
       =\ &  
       \alpha \left( \phi_t( {x}_t^*) x_t^* -  p_{\min} x_t^* \right) +  bp_t  - b\phi_t( {x}_t^*) \\
       =\ & \alpha  \phi_t( {x}_t^*) x_t^* - b\phi_t( {x}_t^*)  +  bp_t  - \alpha p_{\min} x_t^*   \\
       \le\ & 
       \alpha p_tx_t^* - \alpha p_{\min} x_t^*,  
   \end{align*}
   where the equality holds if $ p_t= \phi_t( {x}_t^*) $ .
    \item When $ \max\{0, p_t - \phi_t( {x}_t^*) \} = 0 $, we have $ p_t < \phi_t( {x}_t^*) $ and $ x_t^* = 0 $. Thus,
   \begin{align*}
      & \alpha \left( \phi_t( {x}_t^*) x_t^* -  p_{\min} x_t^* \right) + b\max\{0, p_t - \phi_t( {x}_t^*) \} \\=& \alpha p_tx_t^* - \alpha p_{\min} x_t^* \\ = & 0.
   \end{align*}
\end{itemize}

Combining the above two cases, we can derive the following inequality:
\begin{align*}
\Delta_{\textsf{Dual}} [t] \leq\ & \alpha \Delta_{\textsf{Primal}} [t] - \alpha p_{\min} x^*_t  \\
\leq \ & \alpha \Delta_{\textsf{Primal}} [t], \qquad \forall t \in [\tau-1].
\end{align*}

Then, with $ P_0 = 0 $, we can follow the online primal-dual approach and obtain the following inequality
\begin{align*}
    \PRM_{\boldsymbol{\phi}}(\sigma^{(\tau)}|\alpha) =  &P_{\tau}
   \\ =    & \sum_{t=1}^{\tau} \Big( P_t - P_{t-1} \Big) + P_0   \\
    =    &  \sum_{t=1}^{\tau} \Delta_{\textsf{Primal}} [t]  \\
    = & \sum_{t=1}^{\tau} p_tx_t^*+p_{\min}  (k-\sum_{t=1}^{\tau}x_t^*)  \\
    \overset{(i)}{\ge}    
   & \frac{1}{\alpha} \sum_{t=1}^{\tau} \left( \Delta_{\textsf{Dual}} [t]  + \alpha p_{\min} x_t^* \right) +p_{\min}  (k-\sum_{t=1}^{\tau}x_t^*)\\
    =   & \frac{1}{\alpha} \sum_{t=1}^{\tau} \Delta_{\textsf{Dual}} [t]  + k p_{\min}\\
    \ge    & \frac{1}{\alpha}  \sum_{t=1}^{\tau} \Big( D_t - D_{t-1} \Big) + k p_{\min}  \\
    =    & \frac{1}{\alpha} (D_{\tau} - D_0) + k p_{\min}\\
    \overset{(ii)}{=}     & \frac{1}{\alpha} D_{\tau}\\
    {\ge} &\frac{1}{\alpha} \OPT(\sigma^{(\tau)})\\
    =&\frac{1}{\alpha} \OPT(\sigma_r^{(\tau)}),
\end{align*}
where the inequality $(i)$ uses the incremental inequality and the equality $(ii)$ is due to the fact that $  D_0 = k \phi_0(0) + b\mu_0 = \alpha k p_{\min} $ .

To guarantee that 
\begin{align*}
     \PRM_{\boldsymbol{\phi}}(\sigma^{(\tau)}|\alpha) \geq \frac{1}{\alpha} \OPT(\sigma_r^{(\tau)})
\end{align*}
holds for all $ \sigma_r^{(\tau)} $, the primal and dual solutions must be feasible at each step $ t \in [\tau] $ so that weak duality can be applied to connect $ D_{\tau} $ with $ \OPT(\sigma_r^{(\tau)}) $:  
\begin{align*}
    D_{\tau} \geq D_* \geq P_* = \OPT(\sigma_r^{(\tau)}), 
\end{align*}
where $ \OPT(\sigma_r^{(\tau)}) $ is the offline optimal objective value of the reduced LP problem with input instance $ \sigma_r^{(\tau)} $. So in the following, we prove the conditions for ensuring primal and dual feasibility of the online solution.

At each step $ t \in [\tau] $, a feasible primal solution must satisfy the following condition:
\begin{align}\label{eq_primal_feasibility_condition_tau}
    \sum_{i=1}^t \bar{x}_t \leq k,
\end{align}
and we must guarantee the following condition for a feasible dual solution:
\begin{align}\label{eq_dual_feasibility_condition_tau}
    \varphi_t + \mu_t = \phi_t( \bar{x}_t) + \max \{ 0, p_t - \phi_t( \bar{x}_t) \} \geq p_t. \qquad
\end{align}
It is easy to see that the dual feasibility above always holds based on the design of $ \mu_t $. To ensure primal feasibility at each step, following the same steps as we did in Appendix \ref{proof_upper}, a sufficient condition to ensure primal  feasibility is:
\begin{align*}
     \phi_{\tau}( \bar{x}_{\tau}) = p_{\min} + \frac{\alpha p_{\min} - p_{\min}}{\prod_{i=1}^{\tau} \left( 1 - \frac{\alpha}{k} \bar{x}_i \right)} \geq p_{\max}.
\end{align*}
Using the fact that $ \sum_{i=1}^{\tau} \bar{x}_i = k $, the above sufficient condition is equivalent to:
\begin{align*}
    \alpha & \geq \tau \left[ 1 - \left( \frac{\alpha p_{\min} - p_{\min}}{p_{\max} - p_{\min}} \right)^{1/\tau} \right] \\
    & = \tau \left[ 1 - \left( \frac{\alpha - 1}{\theta - 1} \right)^{1/\tau} \right].
\end{align*}
We thus complete the proof of Lemma \ref{lemma_ratio_tau}.
\end{proof}

\subsection{Competitive Ratio with  $\alpha \in [\alpha_{\tau}, \alpha_{\tau+1}) $}
If we define $ \alpha_{\tau}  $ to be the unique solution to the following:
\begin{align}\label{eq_alpha_tau_appendix}
    \alpha_{\tau} = \tau \left[ 1 - \left( \frac{\alpha_{\tau} - 1}{\theta - 1} \right)^{1/\tau} \right], \qquad \tau \in [\tau_{\min}, T].
\end{align}
The following lemma helps establish the relationship between $\CR(\alpha)$ and $\alpha_{\tau} $:
\begin{lemma}\label{lemma_cr_tau} 
For any $ \alpha \in [\alpha_{\tau}, \alpha_{\tau+1}) $ with $ \tau \in [\tau_{\min}, T] $, the following upper bound holds: 
\begin{align*} 
\CR(\alpha) &= \max_{\sigma \in \Omega} \frac{\OPT(\sigma)}{\PRM_{\boldsymbol{\phi}}(\sigma|\alpha)}\le \alpha. 
\end{align*} 
\end{lemma}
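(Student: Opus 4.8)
The plan is to lift the per–switching-step guarantee of Lemma \ref{lemma_ratio_tau} to a bound over the whole instance family $\Omega$, exploiting the fact that the balance parameter $\alpha$ controls which switching steps can occur. I would fix $\alpha \in [\alpha_\tau, \alpha_{\tau+1})$ and take an arbitrary $\sigma \in \Omega$; let $s$ denote its switching step. By Proposition \ref{prop:CR_worst_p_min} the worst case has all prices equal to $p_{\min}$ after step $s$, and by Proposition \ref{prop:sigma_tau} the offline optimum of such an instance coincides with that of the reduced instance $\sigma_r^{(s)}$ of horizon $s$. Hence it suffices to bound $\OPT(\sigma^{(s)})/\PRM_{\boldsymbol{\phi}}(\sigma^{(s)}|\alpha)$ for each achievable $s$.

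The crux is to show that under $\alpha \in [\alpha_\tau, \alpha_{\tau+1})$ every achievable switching step obeys $s \le \tau$. I would argue this from the monotone dependence of the trading schedule on $\alpha$: by Definition \ref{def_price} a larger $\alpha$ raises every $\phi_t$, so the pseudo-revenue maximizer $x_t^*$ shrinks and the conversion is stretched over more steps, which is exactly the calibration encoded in the defining equation \eqref{eq_alpha_tau_appendix} for $\alpha_\tau$. Concretely, when the forced phase opens at step $s$ the algorithm has committed all $k$ units by step $s$, and reproducing the geometric–arithmetic-mean estimate from the proof of Lemma \ref{lemma_ratio_tau} shows that the primal-feasibility certificate $\phi_s(\bar{x}_s) \ge p_{\max}$ can be consistent only when $\alpha \ge \alpha_s$. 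Since $\alpha_s$ is increasing in $s$ and $\alpha < \alpha_{\tau+1} \le \alpha_s$ for every $s \ge \tau+1$, no switching step beyond $\tau$ is compatible with the chosen $\alpha$; combined with the lower bound $s \ge \tau_{\min}$ from Lemma \ref{lemma_earliest_tau}, this confines every achievable switching step to $[\tau_{\min}, \tau]$.

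Given $s \le \tau$, the conclusion follows immediately: because $\alpha_s$ is increasing we have $\alpha \ge \alpha_\tau \ge \alpha_s$, so the feasibility requirement used in the proof of Lemma \ref{lemma_ratio_tau} holds and therefore $\OPT(\sigma^{(s)})/\PRM_{\boldsymbol{\phi}}(\sigma^{(s)}|\alpha) \le \alpha$. Maximizing over $\sigma \in \Omega$—equivalently over $s \in [\tau_{\min},\tau]$—then gives $\CR(\alpha) = \max_{\sigma \in \Omega} \OPT(\sigma)/\PRM_{\boldsymbol{\phi}}(\sigma|\alpha) \le \alpha$, as claimed.

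I expect the main obstacle to be the middle step. The switching step is an implicit, instance-dependent quantity that couples the pseudo-cost dynamics with the box constraint, and at the switching step the forced dump $\bar{x}_s$ makes the factor $1-\tfrac{\alpha}{k}\bar{x}_s$ behave irregularly—it may even become non-positive—so translating ``the forced phase opens at step $s$'' into the clean condition $\alpha \ge \alpha_s$, and simultaneously verifying that the online primal-dual bound of Lemma \ref{lemma_ratio_tau} remains valid at the boundary case $s=\tau$, is where the careful work lies.
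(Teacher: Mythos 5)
Your opening and closing steps match the easy half of the paper's argument: for instances of the form $\sigma^{(\hat{\tau})}$ with $\hat{\tau} \in [\tau_{\min}, \tau]$, Lemma \ref{lemma_ratio_tau} together with the monotonicity of $\alpha_{\hat{\tau}}$ in $\hat{\tau}$ immediately gives a ratio at most $\alpha_{\hat{\tau}} \le \alpha_{\tau} \le \alpha$, exactly as in the paper. The gap is your middle step, and it is fatal: the claim that no switching step beyond $\tau$ is compatible with $\alpha < \alpha_{\tau+1}$ is false. First, the premise ``when the forced phase opens at step $s$ the algorithm has committed all $k$ units by step $s$'' is wrong — the forced phase exists precisely because the algorithm has \emph{not} traded enough, and it dumps the remainder $k_{s-1}$ at rate $b$ over steps $s, \ldots, T$; at the switching step the pseudo-cost is typically strictly below $p_{\max}$, so there is no certificate $\phi_s(\bar{x}_s) \ge p_{\max}$ whose failure could contradict $\alpha < \alpha_s$. (The identity $\sum_{i=1}^{\tau} \bar{x}_i = k$ holds only in the \emph{reduced} problem of Proposition \ref{prop:sigma_tau}, as a consequence of the reduction, not as a feasibility certificate.) Second, even where the geometric--arithmetic-mean estimate applies, $\alpha \ge \alpha_s$ is a \emph{sufficient} condition for $\phi \ge p_{\max}$, not a necessary one, so no contradiction can be extracted from $\alpha < \alpha_s$. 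Third, your own monotonicity intuition runs the wrong way for your conclusion: larger $\alpha$ slows proactive trading, keeps $k_{t-1}$ large, and makes the laxity condition $k_{t-1} - x_t^* > b(T-t)$ bind \emph{earlier}; conversely, a modest $\alpha$ combined with high early prices produces fast proactive trading and a \emph{late} switching step. The instance families $\Omega^{(\hat{\tau})}$ with $\hat{\tau} > \tau$ are therefore nonempty under your chosen $\alpha$, and they are where the difficulty lives.

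The paper handles exactly those instances, and this is the bulk of its proof of Lemma \ref{lemma_cr_tau}. Using that all prices equal $p_{\min}$ after the switch (so $\phi$ and $\mu$ freeze, e.g.\ $\phi_{\tau+1} = \phi_{\tau}$, $\mu_{\tau+1} = 0$), it writes $\OPT(\sigma^{(\tau+1)}) = k\phi_{\tau} + b\sum_{t=1}^{\tau}\mu_t$ and reduces the bound for $\sigma^{(\tau+1)}$ to the bound for $\sigma^{(\tau)}$ plus the one-step incremental inequality
\begin{align*}
\alpha \bar{x}_{\tau}(p_{\tau} - p_{\min}) \ge k(\phi_{\tau} - \phi_{\tau-1}) + b\mu_{\tau},
\end{align*}
which it verifies by a three-way case analysis on $\bar{x}_{\tau} = 0$, $\bar{x}_{\tau} \in (0, b)$, and $\bar{x}_{\tau} = b$ (the last case using $\mu_{\tau} = p_{\tau} - \phi_{\tau}$ and Eq. \eqref{eq_alg_x_t}), and then iterates this procedure inductively for $\hat{\tau} = \tau+2, \ldots, T$. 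Your proposal contains no substitute for this dual-fitting induction; without it, you have bounded the competitive ratio only over $\Omega^{(\tau_{\min})} \cup \cdots \cup \Omega^{(\tau)}$ rather than over all of $\Omega$, so the lemma as stated does not follow from your argument.
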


\begin{proof}
The proof is structured in the following two parts:

For $ \hat{\tau} \in [\tau_{\min}, \tau]$: following Lemma \ref{lemma_ratio_tau}, there is 
\begin{align*}
   \frac{\OPT(\sigma^{(\hat{\tau})})}{\PRM_{\boldsymbol{\phi}}(\sigma^{(\hat{\tau})}|\alpha)} \le \alpha_{\hat{\tau}},
\end{align*}
and since $\alpha \ge \alpha_{\tau} \ge \alpha_{\hat{\tau}}$,  the following inequality holds:
\begin{align}\label{eq_lower_tau}
    \frac{\OPT(\sigma^{(\hat{\tau})})}{\PRM_{\boldsymbol{\phi}}(\sigma^{(\hat{\tau})}|\alpha)} \le \alpha, \quad \forall \alpha \in [\alpha_{\tau}, \alpha_{\tau+1}), \hat{\tau} \in  [\tau_{\min}, \tau].
\end{align}

For $\hat{\tau} \in (\tau, T]$: We start by proving the upper bound under $\sigma^{(\tau+1)}$, to further simplify the notation, we use $ \phi_{\tau} $ to represent $ \phi_{\tau}( \bar{x}_{\tau})$, with given $\sigma^{(\tau+1)}$, we have $ p_{\tau+1}=p_{\min}<\phi_{\tau}$, and therefore $\phi_{\tau+1}= \phi_{\tau} $, $ \mu_{\tau+1}=0$, and 
\begin{align}\label{eq:OPT_sigma_tau_1}
    \OPT(\sigma^{(\tau+1)}) = k\phi_{\tau+1}+b\sum_{t=1}^{\tau+1}\mu_t  = k\phi_{\tau}+b\sum_{t=1}^{\tau}\mu_t, 
\end{align}
where $ \mu_{\tau}=\max\{0, p_{\tau}-\phi_{\tau}\}$ and the second equality is due to the fact that $\phi_{\tau+1}= \phi_{\tau} $ and $ \mu_{\tau}+1=0 $.

Then, we can show that the following holds:
\begin{align*}
&\frac{\OPT(\sigma^{(\tau+1)})}{\PRM_{\boldsymbol{\phi}}(\sigma^{(\tau+1)}|\alpha)} \\
= \ &  \frac{k \phi_{\tau}+b\sum_{t=1}^{\tau}\mu_t}{\sum_{t=1}^{\tau}p_t \bar{x}_t + p_{\min}(k - \sum_{t=1}^{\tau} p_t \bar{x}_t)} \\ 
= \ &  \frac{k \phi_{\tau-1} + k(\phi_{\tau} - \phi_{\tau-1})+b\sum_{t=1}^{\tau-1}\mu_t+b\mu_{\tau}}{\sum_{t=1}^{\tau-1}p_t \bar{x}_t + p_{\min}(k - \sum_{t=1}^{\tau-1}p_t \bar{x}_t) + (p_{\tau} - p_{\min})\bar{x}_{\tau}} \\
\overset{(i)}{=} \ &  \frac{\OPT(\sigma^{(\tau)}) + k(\phi_{\tau} - \phi_{\tau-1})+b\mu_{\tau}}{\PRM_{\boldsymbol{\phi}}(\sigma^{(\tau)}|\alpha) + (p_{\tau} - p_{\min})\bar{x}_{\tau}} \\
\overset{(ii)}{\leq}\ &  \frac{\OPT(\sigma^{(\tau)}) + k(\phi_{\tau} - \phi_{\tau-1})+b\mu_{\tau}}{\frac{\OPT(\sigma^{(\tau)})}{\alpha} + (p_{\tau} - p_{\min})\bar{x}_{\tau}} \\
=\ &  \alpha \frac{\OPT(\sigma^{(\tau)}) + k(\phi_{\tau} - \phi_{\tau-1})+b\mu_{\tau}}{\OPT(\sigma^{(\tau)}) + \alpha \bar{x}_{\tau}(p_{\tau} - p_{\min})} \\
\leq \ &  \alpha,
\end{align*}
where the equality $(i)$ is due to the fact that $ \OPT(\sigma^{(\tau)}) = k \phi_{\tau-1} + b\sum_{t=1}^{\tau-1}\mu_t $, similar to Eq. \eqref{eq:OPT_sigma_tau_1}, and the inequality $(ii)$ is based on Eq. \eqref{eq_lower_tau}. The last inequality can be proved if the following inequality holds:
\begin{align}\label{eq:alpha_tau_phi_tau}
    \alpha \bar{x}_{\tau}(p_{\tau} - p_{\min}) \ge k(\phi_{\tau} - \phi_{\tau-1})+b\mu_{\tau}.
\end{align}
Thus, in the following we prove the above inequality \eqref{eq:alpha_tau_phi_tau} based on different cases of $ \bar{x}_{\tau} $.
\begin{itemize}
    \item When $\bar{x}_{\tau} = 0$, we have $ p_{\tau}<\phi_{\tau}$, so $ \phi_{\tau} = \phi_{\tau-1} $ and $ \mu_{\tau} = 0 $, therefore
    \begin{align*}
    \alpha \bar{x}_{\tau} (p_{\tau} - p_{\min}) = k(\phi_{\tau} - \phi_{\tau-1})+b\mu_{\tau} =0.
    \end{align*}
    \item When $\bar{x}_{\tau} \in (0, b)$, there are $ p_{\tau}=\phi_{\tau}$ and $ \mu_{\tau} = 0$, based on Eq. \eqref{eq_alg_x_t}, we have
    \begin{align*}
        \bar{x}_{\tau}=\frac{k}{\alpha}\frac{p_{\tau}-\phi_{\tau-1}}{p_{\tau}-p_{\min}},
    \end{align*}
    which can be used to prove the following equation
    \begin{align*}
    \alpha \bar{x}_{\tau}(p_{\tau} - p_{\min}) 
    & = \alpha \bar{x}_{\tau}(p_{\tau} - p_{\min}) 
    \\& =k(p_{\tau} - \phi_{\tau-1})
    \\&= k(\phi_{\tau} - \phi_{\tau-1})
     \\&= k(\phi_{\tau} - \phi_{\tau-1})+ b\mu_{\tau}.
    \end{align*}
    \item When $\bar{x}_{\tau} = b$, we have $ p_{\tau}>\phi_{\tau}$ and $ \mu_{\tau} = p_{\tau}-\phi_{\tau}$, still based on Eq. \eqref{eq_alg_x_t}:
        \begin{align*}
            \frac{k(\phi_{\tau}-\phi_{\tau-1})}{\alpha(p_{\tau}-p_{\min})} \le \frac{k(\phi_{\tau}-\phi_{\tau-1})}{\alpha(\phi_{\tau}-p_{\min})} = b,
        \end{align*}
        which implies that
        \begin{align*}
            \alpha\bar{x}_{\tau}(p_{\tau}-p_{\min})&=\alpha b(p_{\tau}-p_{\min})\\
            &=\alpha b(\phi_{\tau}-p_{\min})+\alpha b(p_{\tau}-\phi_{\tau})\\
            & = k(\phi_{\tau}-\phi_{\tau-1})+\alpha b(p_{\tau}-\phi_{\tau})
            \\&\ge k(\phi_{\tau}-\phi_{\tau-1})+b(p_{\tau}-\phi_{\tau})
            \\&= k(\phi_{\tau}-\phi_{\tau-1})+b\mu_{\tau}.
        \end{align*}
\end{itemize}

We can repeat the same procedure for $\sigma^{(\hat{\tau})}$ with $ \hat{\tau} = \tau+2, \cdots, T $, and draw the following conclusion: 
\begin{align}\label{eq_upper_tau}
\frac{\OPT(\sigma^{(\hat{\tau})})}{\PRM_{\boldsymbol{\phi}}(\sigma^{(\hat{\tau})}|\alpha)} \le \alpha, \quad \forall \alpha \in [\alpha_{\tau}, \alpha_{\tau+1}), \hat{\tau} \in (\tau, T].
\end{align}

Recall that in Eq. \eqref{omega_tau}, we define $\Omega^{(\tau)}$ as the set of worst-case instances where the switching occurs exactly at step $\tau$. Formally, $\Omega^{(\tau)}$ is defined as follows:
\begin{align*} 
\Omega^{(\tau)} :=\ & \Big\{ \sigma = \big(p_1, p_2, \ldots, p_{\tau}, p_{\min}, \\
&  \qquad \ldots, p_{\min} \big) \big|\ p_t \in [p_{\min},p_{\max}], \forall t \in [1, \tau] \Big\},
\end{align*} 
where $\Omega^{(\tau)}$ represents all the input sequences in which prices remain between $ [p_{\min}, p_{\max}]$  until step $\tau$, and then switch to $p_{\min}$ for all $t \geq \tau$. Therefore, to compute the worst-case competitive ratio, we just need to consider the following set of input instances: 
\begin{align*} 
\Omega^{(\tau_{\min})} \cup \Omega^{(\tau_{\min} + 1)} \cup \cdots \cup \Omega^{(T)}, 
\end{align*} 
where each subset $\Omega^{(\tau)}$ corresponds to the worst-case instances with a switching step occurring exactly at $\tau$.

Combining Eq. \eqref{eq_lower_tau} and Eq. \eqref{eq_upper_tau}, we have
\begin{align*}
\CR(\alpha) &= \max_{\sigma \in \Omega} \frac{\OPT(\sigma)}{\PRM_{\boldsymbol{\phi}}(\sigma|\alpha)} \\
&= \max_{\sigma \in \Omega^{(\tau_{\min})} \cup \cdots \cup \Omega^{(T)}} \frac{\OPT(\sigma)}{\PRM_{\boldsymbol{\phi}}(\sigma|\alpha)} \\
& \le \alpha, \qquad \forall \alpha \in [\alpha_{\tau}, \alpha_{\tau+1}), \tau \in  [\tau_{\min}, T].
\end{align*}
We thus complete the proof of Lemma \ref{lemma_cr_tau}.
\end{proof}

\subsection{Putting Everything Together: Proof of Theorem \ref{thm:oc_known_notice} for \OCK with Box Constraints}
To prove the Theorem \ref{thm:oc_known_notice}, we combine the results from Lemma \ref{lemma_ratio_tau} and Lemma \ref{lemma_cr_tau}:
\begin{itemize}
    \item Lemma \ref{lemma_ratio_tau} shows that, the competitive ratio under $ \sigma^{(\tau)} $ is upper bounded by $ \alpha_{\tau} $, giving $ \tau \in [\tau_{\min}, T]$.
    \begin{align*} 
    \frac{\OPT(\sigma^{(\tau)})}{\PRM_{\boldsymbol{\phi}}(\sigma^{(\tau)}|\alpha)} \le \alpha_{\tau}, \quad \forall \sigma^{\tau} \in \Omega^{(\tau)}, \tau \in [\tau_{\min}, T],
    \end{align*} 
    where $ \alpha_{\tau} $ is defined by Eq. \eqref{eq_alpha_tau_appendix}.

    \item Lemma \ref{lemma_cr_tau} demonstrates that for all $\tau \in [\tau_{\min}, T]$, we have:
    \begin{align*}
    \CR(\alpha) = \max_{\sigma \in \Omega} \frac{\OPT(\sigma)}{\PRM_{\boldsymbol{\phi}}(\sigma|\alpha)} \le \alpha, \forall \alpha \in [\alpha_{\tau}, \alpha_{\tau+1}).
    \end{align*}

    \item As $\alpha_{\tau}$ increases with $\tau$, we conclude:
    \begin{align*}
    \CR^*_{\textsf{known}} = \min_{\alpha \in [\alpha_{\tau_{\min}}, \alpha_T]} \max_{\sigma \in \Omega} \frac{\OPT(\sigma)}{\PRM_{\boldsymbol{\phi}}(\sigma|\alpha)} \le \alpha_{\tau_{\min}}. 
    \end{align*}
    
\end{itemize}
Substituting $\tau_{\min} = T - \lceil \frac{k}{b} \rceil + 1$ into $\alpha_{\tau_{\min}}$ by Eq. \eqref{eq_alpha_tau_appendix}, we obtain 
\begin{align}\label{eq_alpha_tau_min}
      \alpha_{\tau_{\min}} = \left(T - \lceil \frac{k}{b} \rceil + 1\right) \left[ 1 - \left(\frac{ \alpha_{\tau_{\min}} - 1}{\theta - 1} \right)^{\frac{1}{T - \lceil \frac{k}{b} \rceil + 1}} \right].
\end{align} 
We thus complete the proof of Theorem \ref{thm:oc_known_notice} for \OCK with non-trivial box constraints.

\section{Proof of Theorem \ref{thm:oc_known_notice} for \OCN with Box Constraints}
\label{sec:proof_of_theorem_CR_box_constraint_notice}

For \OCN with box constraints, it is well established in previous research, such as \cite{lechowicz2024online}, that the competitive ratio $ \CR^*_{\textsf{notice}} $ is optimal. While the proofs for the upper and lower bounds are already known, our work presents a new algorithmic approach for achieving this optimal competitive ratio. For this reason, here we only provide our new proof for the upper bound, connecting it directly to the results in Appendix \ref{sec:proof_of_theorem_CR_box_constraint}.

To prove the upper bound for \OCN with a box constraint, we apply the same methodology used in Appendix \ref{proof_upper}, where we designed the parameter $ \alpha $ to satisfy the condition in Eq. \eqref{eq_alpha_tau_min}, taking the limit as $ T \to \infty $. 

Since $\tau_{\min} = T - \lceil \frac{k}{b} \rceil + 1 = T $ as $ T \to \infty $, Eq. \eqref{eq_alpha_tau_min} can be transformed into the same form as Eq. \eqref{eq_feasibilility}.

Following the steps outlined in Appendix \ref{proof_upper}, and using Eq. \eqref{eq_feasibilility}, we find that the condition for $\alpha$ remains the same:
\begin{align*}
   \alpha \ge 1 + W\left(\frac{\theta - 1}{e}\right),
\end{align*}
where $W$ denotes the Lambert-W function. Therefore, the following upper bound holds
\begin{align*}
    \CR_{\textsf{notice}}(\alpha)  = \max_{\sigma\in \Omega_{\textsf{notice}}}   \frac{\OPT(\sigma)}{\PRM_{\boldsymbol{\phi}}(\sigma|\alpha)} \leq \alpha
\end{align*}
as long as $\alpha$ satisfies the inequality given by Eq. \eqref{eq_feasibilility} with $ T \rightarrow \infty $. This leads to the same condition as Eq. \eqref{eq_alpha_notice}, thus completing the proof of the upper bound for \OCN with box constraints.

\section{Proof of Theorem \ref{thm:combine_cr}}\label{appendix_la}
To describe the results of running Algorithm \ref{alg_RBP_combine} under the \OCP setting with the predicted horizon $T_{\textsf{pred}}$, we introduce the following definitions. Given a price sequence $\sigma$ and total available resource $k$, we split $k$ into two portions: $k^{(1)} = (1 - \lambda)k$ and $k^{(2)} = \lambda k$, where $\lambda$ is a hyperparameter representing the confidence level of the prediction.

\begin{itemize}
\item Performance under \OCK with $T_{\textsf{pred}}$: The performance of the algorithm when allocating $k^{(1)}$ units of resource, treating the problem as \OCK with the predicted horizon $T_{\textsf{pred}}$: 
\begin{align*} 
\PRM_{\boldsymbol{\phi_1}}(\sigma|\alpha_1) = \sum_{t=1}^T p_t \bar{x}_t^{(1)}, 
\end{align*} 
where $\bar{x}_t^{(1)}$ represents the resource traded at time step $t$ for $k^{(1)}$ based on Algorithm \ref{alg_RBP_combine}.

\item Performance under \OCU: The performance of the algorithm when allocating $k^{(2)}$ units of resource, treating the problem as if it were \OCU: 
\begin{align*} 
\PRM_{\boldsymbol{\phi_1}}(\sigma|\alpha_2) = \sum_{t=1}^T p_t x_t^{(2)}, 
\end{align*} 
where $ x_t^{(2)}$ represents the resource traded at time step $t$ for $k^{(2)}$ based on Algorithm \ref{alg_RBP_combine}.

\item Optimal offline results for $k^{(1)}$: The optimal offline result for the allocation of $k^{(1)}$ units of resource: 
\begin{align*} 
\OPT_1(\sigma) = \frac{k^{(1)}}{k} \OPT(\sigma), \end{align*} 
representing the optimal profit obtained from trading $k^{(1)}$ units under the price sequence $\sigma$.

\item Optimal offline results for $k^{(2)}$: The optimal offline result for the trading of $k^{(2)}$ units of resource: 
\begin{align*} 
\OPT_2(\sigma) = \frac{k^{(2)}}{k} \OPT(\sigma), \end{align*} 
representing the optimal profit obtained from trading $k^{(2)}$ units under the price sequence $\sigma$.
\end{itemize}

These definitions allow us to analyze the performance of Algorithm \ref{alg_RBP_combine} by separately considering the contributions from both the \OCK and \OCU components under the predicted horizon $T_{\textsf{pred}}$.

\paragraph{\bfseries Proof of Consistency} 
To establish consistency, let $T_{\textsf{pred}} = T$. According to Theorem \ref{thm:oc_known_notice}, we have $\PRM_{\boldsymbol{\phi_1}}(\sigma|\alpha_1) \geq \frac{1}{\alpha_1} \OPT_1(\sigma)$, where both the primal and dual feasibility hold when 
\begin{align*}
\alpha_1 \geq T_{\textsf{pred}} \left[ 1 - \left(\frac{\alpha_1 - 1}{\theta - 1}\right)^{1/T_{\textsf{pred}}} \right].
\end{align*} 
Similarly, the condition $\alpha_2 \geq 1 + \ln \theta$ ensures a feasible online primal-dual analysis under \OCU, leading to $\PRM_{\boldsymbol{\phi_2}}(\sigma|\alpha_2) \geq \frac{1}{\alpha_2} \OPT_2(\sigma)$ \cite{Sun_MultipleKnapsack_2020}. 

Therefore, we have
\begin{align*}
    \eta(\lambda) &= \frac{\OPT(\sigma)}{\ALG(\sigma)}\\
    &= \frac{\OPT_1(\sigma) + \OPT_2(\sigma)}{\PRM_{\boldsymbol{\phi_1}}(\sigma|\alpha_1) + \PRM_{\boldsymbol{\phi_2}}(\sigma|\alpha_2)}\\
    &\le \frac{k^{(1)}\OPT(\sigma) + k^{(2)}\OPT(\sigma)}{\frac{k^{(1)}}{\alpha_1} \OPT(\sigma) + \frac{k^{(2)}}{\alpha_2} \OPT(\sigma)}\\
    &= \frac{\alpha_1 \alpha_2}{\alpha_2 + \lambda(\alpha_1 - \alpha_2)},
\end{align*}
where $\alpha_1 = T_{\textsf{pred}}[ 1 - (\frac{\alpha_1 - 1}{\theta - 1} )^{1/T_{\textsf{pred}}} ]$, $T_{\textsf{pred}} = T$, and $\alpha_2 = 1 + \ln \theta$.

\paragraph{\bfseries Proof of Robustness} 
We analyze the robustness of the algorithm based on the relationship between $T_{\textsf{pred}}$ and $T$.

When $T > T_{\textsf{pred}}$, $\PRM_{\boldsymbol{\phi_1}}(\sigma|\alpha_1)$ completes trading $k^{(1)}$ units of resource at step $T_{\textsf{pred}}$, before reaching the final step $T$. As we did for the proof of consistency, the condition $\alpha_2 \geq 1 + \ln \theta$ ensures a feasible online primal-dual analysis under \OCU, leading to $\PRM_{\boldsymbol{\phi_2}}(\sigma|\alpha_2) \geq \frac{1}{\alpha_2} \OPT_2(\sigma)$ \cite{Sun_MultipleKnapsack_2020}.

The worst-case scenario arises when we set $p_t = p_{\min}$ for all $t \in [1, T_{\textsf{pred}}]$ within $\sigma$ since this will minimize $ \PRM_{\boldsymbol{\phi_1}}(\sigma|\alpha_1) $ without decreasing $\OPT(\sigma)$. Therefore, we have the following bound:
\begin{align*}
    \gamma(\lambda) &= \frac{\OPT(\sigma)}{\ALG(\sigma)}\\
    &= \frac{\OPT_1(\sigma) + \OPT_2(\sigma)}{\PRM_{\boldsymbol{\phi_1}}(\sigma|\alpha_1) + \PRM_{\boldsymbol{\phi_2}}(\sigma|\alpha_2)}\\
    &\leq \frac{\frac{k^{(1)}}{k} \OPT(\sigma) + \frac{k^{(2)}}{k} \OPT(\sigma)}{\PRM_{\boldsymbol{\phi_1}}(\sigma|\alpha_1) + \frac{k^{(2)}}{\alpha_2 k}\OPT(\sigma)}\\
    &\leq \frac{\frac{k^{(1)}}{k} \OPT(\sigma) + \frac{k^{(2)}}{k} \OPT(\sigma)}{k^{(1)} p_{\min} + \frac{k^{(2)}}{\alpha_2k} \OPT(\sigma)}\\
    &= \frac{\theta}{1 - \lambda + \frac{\lambda}{\alpha_2} \theta}.
\end{align*}
The last inequality holds because $\OPT(\sigma) \leq k p_{\max}$.

When $T < T_{\textsf{pred}}$, there is no forced trading phase, and the following two cases arise:

\paragraph{Case-I: No usage of $\PRM_{\boldsymbol{\phi_1}}$}
If $\max(\sigma) \leq \alpha_1 p_{\min}$, then $\PRM_{\boldsymbol{\phi_1}}(\sigma|\alpha_1)$ is minimized to $0$, resulting in
\begin{align*}
    \gamma(\lambda) \leq \frac{\OPT(\sigma)}{\frac{k^{(2)}}{\alpha_2}\OPT(\sigma)} = \frac{\alpha_2}{\lambda}.
\end{align*}

\paragraph{Case-II: Usage of $\PRM_{\boldsymbol{\phi_1}}$}
If $\max(\sigma) \geq \alpha_1 p_{\min}$, the competitive ratio improves over the cases before $ T_{\textsf{pred}} $:
\begin{align*}
    \gamma(\lambda) &= \frac{\OPT_1(\sigma) + \OPT_2(\sigma)}{\PRM_{\boldsymbol{\phi_1}}(\sigma|\alpha_1) + \PRM_{\boldsymbol{\phi_2}}(\sigma|\alpha_2)}\\
    &\leq \frac{\OPT(\sigma)}{\frac{k^{(1)}}{k \alpha_1} (\OPT(\sigma) - k \alpha_1 p_{\min}) + \frac{k^{(2)}}{k \alpha_2} \OPT(\sigma)}\\
    &= \frac{1}{\frac{1 - \lambda}{\alpha_1} \left(1 - k \alpha_1 \frac{p_{\min}}{\OPT(\sigma)}\right) + \frac{\lambda}{\alpha_2}},
\end{align*}
where the inequality follows from the results in Appendix \ref{proof_bound_cr}, without considering the forced trading phase (since now the trading stops before $ T_{\textsf{pred}} $): 
\begin{align*}
    \PRM_{\boldsymbol{\phi_1}}(\sigma|\alpha_1) \ge \frac{1}{\alpha_1} \OPT_1(\sigma) - k^{(1)} p_{\min}.
\end{align*}
When $\OPT(\sigma)$ increases, the bound on $\gamma(\lambda)$ decreases. The worst-case bound is achieved when $\OPT_1(\sigma) = k^{(1)} \alpha_1 p_{\min}$ and 
\begin{align*}
    \OPT(\sigma) = \frac{k}{k^{(1)}} \OPT_1(\sigma) =k \alpha_1 p_{\min}.
\end{align*}
Consequently, we have
\begin{align*}
    \gamma(\lambda) \leq \frac{1}{\frac{\lambda}{\alpha_2}} = \frac{\alpha_2}{\lambda}.
\end{align*}

Thus, the overall robustness is determined by taking the maximum of the two cases ($T > T_{\textsf{pred}}$ and $T \le T_{\textsf{pred}}$), leading to
\begin{align*}
    \gamma(\lambda) \leq \max \left\{ \frac{\alpha_2}{\lambda}, \frac{\theta}{1 - \lambda + \frac{\lambda}{\alpha_2} \theta}\right\} \leq \frac{\alpha_2}{\lambda}.
\end{align*}

To prove the last inequality, we need to show that
\begin{align*}
    \frac{\alpha_2}{\lambda} \ge \frac{\theta}{1 - \lambda + \frac{\lambda}{\alpha_2} \theta}
\end{align*}
always holds. Rearranging terms, we get
\begin{align*}
    \alpha_2 (1 - \lambda) + \lambda \theta \ge \lambda \theta,
\end{align*}
or equivalently,
\begin{align*}
    \alpha_2 (1 - \lambda) \ge 0.
\end{align*}
Since $\lambda \in [0,1] $ and $\alpha_2 > 1$, the above inequality is always satisfied, confirming that
\begin{align*}
    \max \left\{ \frac{\alpha_2}{\lambda}, \frac{\theta}{1 - \lambda + \frac{\lambda}{\alpha_2} \theta}\right\} = \frac{\alpha_2}{\lambda}.
\end{align*}
We thus complete the proof.        

\end{document}